\keywords{terminal coalgebra, countable iteration, descending chain condition, Hausdorff functor, Vietoris functor, initial algebra}
\newcommand{\defaultshowkeysformat}[1]{%
\StrSubstitute{#1}{ }{\textvisiblespace}[\TEMP]%
\parbox[t]{2.5cm}{\raggedright\normalfont\small\ttfamily\(\{\){\color{red!50!black}\expandafter\seqsplit\expandafter{\TEMP}}\(\}\)}%
}
\renewcommand*\showkeyslabelformat[1]{%
\noexpandarg%
\defaultshowkeysformat{#1}%
}
\newcommand{\resetCurThmBraces}{%
\gdef\curThmBraceOpen{(}%
\gdef\curThmBraceClose{)}}
\newcommand{\removeThmBraces}{%
\gdef\curThmBraceOpen{}%
\gdef\curThmBraceClose{}}
\newenvironment{notheorembrackets}{\removeThmBraces}{\resetCurThmBraces}
\patchcmd{\thmhead}{(#3)}{\curThmBraceOpen #3\curThmBraceClose }{}{}
\newcommand{\DCC}{\text{DCC}}
\newcommand{\names}{\mathbb{A}}
\setlist[enumerate,1]{label=(\arabic*),font=\normalfont,align=left,leftmargin=0pt,labelindent=0pt,listparindent=\parindent,labelwidth=0pt,itemindent=!,topsep=3pt,parsep=0pt,itemsep=3pt,start=1}
\setlist[enumerate,2]{label=(\alph*),font=\normalfont,labelindent=*,leftmargin=*,start=1}
\crefname{rem}{Remark}{Remarks}
\renewcommand\itemautorefname{Item}
\numberwithin{equation}{section}
\newcommand{\mysubsec}{\paragraph*}
\theoremstyle{plain}
\newtheorem{theorem}{Theorem}[section]
\newtheorem{proposition}[theorem]{Proposition}
\newtheorem{lemma}[theorem]{Lemma}
\newtheorem{corollary}[theorem]{Corollary}
\newcommand{\labelling}{\ell}
\newcommand{\TT}{\mathcal{T}}
\newcommand{\treepartial}{\rho}
\renewcommand{\root}{\mbox{\sf root}}
\renewcommand{\phi}{\varphi}
\newcommand{\phibar}{\overline{\phi}}
\newcommand{\parto}{\rightrightarrows}
\newcommand{\rhobar}{\overline{\rho}}
\newcommand{\partialbar}{\overline{\partial}}
\newcommand{\erase}{\textsf{erase}}
\newcommand{\arrowsigma}{\,\lower1pt\hbox{$\abovearrow{\!\! \sigma}$}\! }
\newcommand{\abovearrow}[1]{\rightarrow\hspace{-.12in}\raisebox{1.0ex}
{$\scriptscriptstyle{#1}$}\hspace{.13in}}
\newcommand{\pullbackangle}[2][]{\arrow[phantom,to path={
                     -- ($ (\tikztostart)!1cm!#2:([xshift=8cm]\tikztostart) $)
                        node[anchor=west,pos=0.0,rotate=#2,
                        inner xsep = 0]
                        {\begin{tikzpicture}[minimum
                        height=1mm,baseline=0,#1]
    \draw[-] (0,0) -- (.5em,.5em) -- (0,1em);
                        \end{tikzpicture}}}]{}}
\tikzstyle{shiftarr}=[
\theoremstyle{definition}
\newtheorem{defn}[theorem]{Definition} 
\newtheorem{oproblem}[theorem]{Open Problem}
\newtheorem{notation}[theorem]{Notation}
\newtheorem{example}[theorem]{Example}
\DeclareMathOperator{\Sub}{\mathsf{Sub}}
\begin{document}
\FXRegisterAuthor{sm}{asm}{SM}
\FXRegisterAuthor{ja}{aja}{JA}
\FXRegisterAuthor{lm}{alm}{LM}

\title[Terminal Coalgebras in Countably Many Steps]{Terminal Coalgebras in Countably Many Steps}
\thanks{Stefan Milius acknowledges funding by the Deutsche Forschungsgemeinschaft (DFG, German  Research Foundation) -- project number 470467389}

\author[J.~Ad\'amek]{Ji\v{r}\'i Ad\'amek\lmcsorcid{0000-0002-1721-3155}}[a]
\author[S.~Milius]{Stefan Milius\lmcsorcid{0000-0002-2021-1644}}[b]
\author[L.S.~Moss]{Lawrence S.~Moss\lmcsorcid{0000-0002-9908-5774}}[c]

\address{Computer Science Department, Czech Technical University, Czech Republic and
  Technical University Braunschweig, Germany}
\email{j.adamek@tu-braunschweig.de}

\address{Department of Computer Science, Friedrich-Alexander-Universität Erlangen-N\"{u}rnberg, Germany}
\email{stefan.milius@fau.de}

\address{Mathematics Department, Indiana University, Bloomington IN, USA}
\email{lmoss@iu.edu}

 
 \newcommand{\overbar}[1]{\mkern 1.5mu\overline{\mkern-1.5mu#1\mkern-1.5mu}\mkern 1.5mu}
%
\newcommand{\mybar}[3]{%
  \mathrlap{\hspace{#2}\overline{\scalebox{#1}[1]{\phantom{\ensuremath{#3}}}}}\ensuremath{#3}
}

\newcommand{\myhat}[3]{%
  \mathrlap{\hspace{#2}\widehat{\scalebox{#1}[1]{\phantom{\ensuremath{#3}}}}}\ensuremath{#3}
}

\newcommand{\mytilde}[3]{%
  \mathrlap{\hspace{#2}\widetilde{\scalebox{#1}[1]{\phantom{\ensuremath{#3}}}}}\ensuremath{#3}
}

\newcommand{\barF}{\mybar{0.6}{2.5pt}{F}}
\newcommand{\barM}{\mybar{0.7}{2.9pt}{M}}
\newcommand{\bpartial}{\overline{\partial}}
\newcommand{\tildeM}{\mytilde{0.7}{2.9pt}{M}}

\newcommand{\xra}[1]{\xrightarrow{~#1~}}
\newcommand{\xla}[1]{\ensuremath{\xleftarrow{~#1~}}}

%
%

\newcommand{\smooth}{smooth\xspace}
\newcommand{\opcit}[1][.]{\textit{op.cit#1}\xspace}

\newcommand{\ol}[1]{\overline{#1}}

\newcommand{\op}[1]{\operatorname{\mathsf{#1}}}
\newcommand{\id}{\op{id}}

\newcommand{\inj}{\op{in}}
\newcommand{\inl}{\op{inl}}
\newcommand{\inr}{\op{inr}}
\newcommand{\pr}{\op{pr}}

\newcommand{\monoto}{\ensuremath{\rightarrowtail}}
\newcommand{\epito}{\ensuremath{\twoheadrightarrow}}
\newcommand{\subto}{\ensuremath{\hookrightarrow}}

\newboolean{bolt}
\setboolean{bolt}{false}

\ifbolt
\usepackage{fontawesome}
\newcommand{\ACDCem}{\emph{\mbox{AC\hspace*{2pt}\faBolt\hspace*{-1pt}DC}}\xspace}
\newcommand{\ACDC}{\mbox{AC\,\faBolt\/DC}\xspace}
\else
\newcommand{\ACDC}{\mbox{AC/DC}\xspace}
\newcommand{\ACDCem}{\emph{\ACDC}}
\fi

\newcommand{\cat}[1]{\mathscr{#1}}
\def\A{\cat A}
\def\B{\cat B}
\def\C{\cat C}
\def\D{\cat D}
\newcommand{\E}{\mathcal{E}}
\newcommand{\FF}{\mathcal{F}}
\newcommand{\MM}{\mathcal{S}}
\newcommand{\Set}{\mathsf{Set}}
\newcommand{\Setp}{\mathsf{Set}_\mathsf{p}}
\newcommand{\F}{\mathcal{F}}
\newcommand{\SetF}{\mathsf{Set}^{\F}}
\newcommand{\Pos}{\mathsf{Pos}}
\newcommand{\JSL}{\mathsf{JSL}}
\newcommand{\Pfn}{\mathsf{Pfn}}
\newcommand{\SMod}{\Srg\text{-}\mathsf{Mod}}
\newcommand{\Gra}{\mathsf{Gra}}
\newcommand{\KVec}{K\text{-}\mathsf{Vec}}
\newcommand{\CPO}{\mathsf{CPO}}
\newcommand{\DCPO}{\mathsf{DCPO}}
\newcommand{\DCPOb}{\mathsf{DCPO}_\bot}
\newcommand{\CMS}{\mathsf{CMS}}
\newcommand{\Met}{\MS}
\newcommand{\Nom}{\mathsf{Nom}}
\newcommand{\MS}{\mathsf{Met}}
\newcommand{\UMet}{\mathsf{UMet}}
\newcommand{\Bool}{\ensuremath{\mathsf{Bool}}}
\newcommand{\Rel}{\mathsf{Rel}}
\newcommand{\Ab}{\mathsf{Ab}}
\newcommand{\Mod}{\mathsf{Mod}}
\newcommand{\CMet}{\CMS}
\newcommand{\Ord}{\mathsf{Ord}}
\newcommand{\Top}{\ensuremath{\mathsf{Top}}}
\newcommand{\KHaus}{\ensuremath{\mathsf{KHaus}}}
\newcommand{\Khaus}{\KHaus}
\newcommand{\Haus}{\mathsf{Haus}}
\newcommand{\Stone}{\mathsf{Stone}}

\newcommand{\Id}{\mathsf{Id}}
\newcommand{\Pow}{\mathscr{P}}
\def\pow{\Pow}
\newcommand{\Powf}{\Pow_\mathsf{f}} 
\def\powf{\Powf}
\def\powfin{\Powf}
\newcommand{\powcl}{\Pow_{\mathsf{cl}}}
\newcommand{\V}{\mathscr{V}}
\renewcommand{\H}{\mathcal{H}}

\newcommand{\M}{\mathcal{M}}

\newcommand{\N}{\mathds{N}}
\def\Nat{\N}
\newcommand{\R}{\mathds{R}}
\newcommand{\Z}{\mathds{Z}}
\newcommand{\Srg}{\mathds{S}}
\newcommand{\fpair}[1]{\ensuremath{\langle #1 \rangle}}

\newcommand{\Coalg}{\mathop{\mathsf{Coalg}}}
\newcommand{\Alg}{\mathop{\mathsf{Alg}}}
\newcommand{\colim}{\mathop{\mathsf{colim}}}

\newcommand{\ter}{\tau}
\newcommand{\ini}{\iota}

\newcommand*\cocolon{%
        \nobreak
        \mskip6mu plus1mu
        \mathpunct{}%
        \nonscript
        \mkern-\thinmuskip
        {:}%
        \relax
}

\newcommand{\HF}{H\!F}
\newcommand{\arity}{\mathop{\mathsf{ar}}}
\newcommand{\set}[1]{\{#1\}}

\newcommand{\Abar}{B}
\newcommand{\dbar}{d'}
\newcommand{\eps}{\varepsilon}
\newcommand{\opp}{\mathsf{op}}
\newcommand{\mhat}{\widehat{m}}
\newcommand{\ehat}{\widehat{e}}
\newcommand{\chat}{\widehat{c}}  
\renewcommand{\o}{\cdot}
\newcommand{\temptree}{\mathsf{tr}}
%
%
\newcommand{\takeout}[1]{\empty}

\newcommand{\Vbar}{\overline{V}}
\newcommand{\vbar}{\overline{v}}
\newcommand{\mbar}{\overline{m}}
\newcommand{\descto}[3][]{\arrow[phantom]{#2}[#1]{\text{\footnotesize{}\begin{tabular}{c}#3\end{tabular}}}}
\newcommand{\desctox}[4][]{\arrow[phantom,#2]{#3}[#1]{\text{\footnotesize{}\begin{tabular}{c}#4\end{tabular}}}}

\newcommand{\pair}[1]{\langle {#1} \rangle}


 \keywords{terminal coalgebra, countable iteration, descending chain condition, Hausdorff endofunctor, Vietoris endofunctor, }

\begin{abstract}
  We present a collection of results that imply that an endofunctor on a category has a
  terminal object obtainable as a countable limit of its terminal-coalgebra chain.  This
  holds for finitary endofunctors preserving nonempty binary intersections on locally finitely
  presentable categories, assuming that the posets of strong quotients and subobjects of
  finitely presentable objects satisfy the descending chain condition.  This allows one to
  adapt finiteness arguments that were originally advanced by Worrell concerning terminal
  coalgebras for finitary set functors.  Examples include the categories of sets, posets,
  vector spaces, graphs, nominal sets, and presheaves on finite sets. Worrell also described,
  without proof, the terminal-coalgebra chain of the finite power-set functor. We provide a
  detailed proof following his ideas.

  We then turn to polynomial endofunctors on the categories of
  Hausdorff topological spaces and metric spaces.  The Vietoris space
  of compact subsets of a given Hausdorff space yields an
  endofunctor~$\V$ on the category of Hausdorff spaces. Vietoris
  polynomial endofunctors on that category are built from~$\V$, the
  identity and constant functors by forming products, coproducts and
  compositions.  We present an analogous class of endofunctors on the
  category of extended metric spaces, using in lieu of $\V$ the
  Hausdorff functor $\H$.  We prove that the ensuing Hausdorff
  polynomial functors have terminal coalgebras and initial algebras.
  We show that every finitary endofunctor on the category of vector
  spaces over a fixed field again has a terminal coalgebra obtained in
  $\omega+\omega$ steps.  Whereas the canonical constructions of
  terminal coalgebras for Vietoris polynomial functors takes $\omega$
  steps, one needs $\omega + \omega$ steps in general for our other
  concrete settings.
\end{abstract}

\maketitle


\section{Introduction}
\label{S:intro}

Coalgebras capture various types of state-based systems in a uniform way by encapsulating the
type of transitions as an endofunctor on a suitable base category. Coalgebras also come with a
canonical behaviour domain given by the notion of a terminal coalgebra. So results on the
existence and construction of terminal coalgebras for endofunctors are at the heart of the
theory of universal coalgebra.  The topic is treated in our monograph~\cite{AMM25}.  A
well-known construction of the terminal coalgebra for an endofunctor was first presented by
Ad\'amek~\cite{A74} (in dual form) and independently by Barr~\cite{barr}. The idea is to
iterate a given endofunctor $F$ on the unique morphism $F1 \to 1$ to obtain the following
$\omega^\opp$-chain
\begin{equation}\label{eq:op-chain}
  1 \xla{!} F1 \xla{F!} FF1 \xla{FF!} FFF1 \xla{FFF!} \cdots
\end{equation}
and then continue transfinitely.  For each ordinal $i$, we write $V_i$ for the $i$th
iterate. Hence, 
\begin{equation}\label{eq:trans}
  V_0 = 1,\qquad V_{i+1} = FV_i, \qquad
  \text{and}
  \qquad
  \textstyle\text{$V_i = \lim_{j<i} V_j$ when $i$ is a limit ordinal};
\end{equation}
the connecting morphisms are as expected. In particular, for every ordinal~$i$, we have a
morphism $V_{i+1} \to V_i$. If the transfinite chain converges in the sense that this morphism
is an isomorphism for some $i$, then its inverse is the structure of a terminal coalgebra for
the functor~$F$~\cite[dual of second prop.]{A74}.  This happens for a limit ordinal $i$
provided that~$F$ preserves the limit $V_i$. However, in general, this transfinite chain does
not converge at all (e.g.~for the power-set functor), and moreover, if it does converge, then
the number of iterations needed to obtain the terminal coalgebra can be arbitrarily large. For
example, the set functor~$\pow_{\alpha}$, which assigns to a set the set of all subsets of
cardinality smaller than $\alpha$, requires $\alpha + \omega$ iterations~\cite{almms}.

A famous result by Worrell~\cite{worrell:05} states that a finitary set functor needs at most
$\omega+\omega$ iterations to converge. We generalize this result to other base categories by isolating
properties of the category of sets and endofunctors on it that entail it:
\begin{enumerate}
\item The \emph{descending chain condition} (DCC), which states that for every finitely
presentable object (a category-theoretic generalization of the notion of a finite set) every
strictly decreasing chain of subobjects or strong quotient
objects is finite.

\item The preservation of \emph{nonempty binary intersections}, that is, pullbacks of two monomorphisms such that the domain is not a strict initial object (cf.~\cref{D:nonempty}).
\end{enumerate}

The first condition is inspired by the descending chain condition in algebra and more
specifically by the Noetherian condition introduced by Urbat and Schr\"{o}der~\cite{UrbatS20}.
Regarding the second one, it was shown by Trnkov\'{a} that every set functor preserves nonempty
binary intersections~\cite{trnkova69}.
In addition, every finitary set functor preserves
\emph{all} nonempty intersections~\cite[Thm.~4.4.3]{AMM25}.


Our first main result (\cref{T:main}) holds for locally finitely presentable categories
satisfying the DCC: for every finitary endofunctor preserving nonempty binary intersections,
the terminal-coalgebra chain converges in $\omega + \omega$ steps. We also show that the DCC is
satisfied by a large number of categories of interest, such as sets, posets, graphs, vector
spaces, boolean algebras, nominal sets, and presheaves on finite sets.

The category of metric spaces and non-expanding maps is not locally finitely presentable, and
so \cref{T:main} is not applicable to it. Nevertheless, we provide in~\cref{T:MetOplusO}
a sufficient condition which implies that the terminal-coalgebra chain of an endofunctor converges in
$\omega + \omega$ steps: the endofunctor should be finitary and preserve nonempty binary intersections
(as in our~\cref{T:main}), and it also should preserve isometric embeddings.

We are also interested in other variations on Worrell's method.  We take an endofunctor~$F$
which has a set of desirable properties and consider \emph{polynomials in $F$}.  By this we
mean the functors built from~$F$ and the constant functors using product, coproduct,
and composition.  These variations are presented in \Crefrange{S:Kripke}{S:Hausdorff}.
They use our sufficient condition (\Cref{P:modest}) but not $\DCC$-categories.

On $\Set$, we will be concerned with polynomials in $\powfin$; we call these \emph{Kripke
  polynomial functors}, following Jacobs~\cite{jacobs}.  On the category~$\Top$ of topological
spaces, a good analog of~$\powf$ is the \emph{Vietoris functor} $\V$ assigning to every space
$X$ the space of all compact subsets equipped with the Vietoris topology (\Cref{S:Vietoris}).
The resulting class of \emph{Vietoris polynomial} functors was first defined by Hofmann et
al.~\cite{HNN19}.  We also study the category $\MS$ of metric spaces and non-expanding
maps. The role of the Vietoris functor is played there by the \emph{Hausdorff functor} $\H$
assigning to every space~$X$ the space $\H X$ of all compact subsets with the Hausdorff metric.
We shall see that Kripke and Hausdorff polynomial functors have $\nu F = V_{\omega+\omega}$,
whereas Vietoris polynomials functors have the stronger bound $\nu F = V_{\omega}$.

A concrete example of a terminal coalgebra related to metric labelled transition systems is
presented in \Cref{S:example}.

  
\mysubsec{Other contributions.}
On the category $\Haus$ of Hausdorff spaces we prove that $\V\colon \Haus \to \Haus$ preserves
limits of $\omega^{\opp}$-chains.  Suppose that a Vietoris polynomial functor $F$ has the
property that all the constants involved in its construction are complete spaces (or Hausdorff
spaces, compact spaces).  Then $\nu F$ again turns out to have this property.  We present a
proof of the description of $\nu\powfin$ and $V_\omega$ for $\powfin$ in terms of trees
mentioned by Worrell~\cite{worrell:05} (the latter without a proof).  We give a concrete
representation of a terminal coalgebra of an endofunctor on metric spaces that again uses
trees.  We simplify a proof of a known negative result: the variation of $\H$ obtained by
moving from compact sets to closed sets has no fixed~points.

\mysubsec{Related work.}
Previous conference papers have appeared containing the material in
\Crefrange{S:Kripke}{S:Hausdorff}~\cite{AdamekEA23} and
\Cref{S:finitary,S:omega+omega}~\cite{AdamekEA25}. \Cref{S:Nom-SetF} presents two new and
important examples of $\DCC$-categories. In addition, the material in
\cref{S:compactness,S:cofree-comonad} as well as \cref{S:example} are new.

As already stated, our main result in \cref{S:omega+omega} generalizes Worrell's theorem beyond the category of sets. 
Our work in \cref{S:Vietoris} is more general and hence improves results by
Abramsky~\cite{abramsky} and Hofmann et al.~\cite{HNN19}.

As we have mentioned above, our DCC condition was inspired by Urbat and
Schr\"{o}der~\cite{UrbatS20}. However, the results here are disjoint from the ones
in op.~cit.

A slightly stronger condition than our DCC was introduced in previous work~\cite{AdamekSousa24}. The relationship of the two condition is discussed in \cref{S:DCC}.

Another related result concerns the category of complete metric spaces: for every locally contracting
endofunctor $F$ on this category satisfying $F\emptyset \neq \emptyset$, the terminal-coalgebra
chain converges in $\omega$ steps~\cite{are} (see also~\cite[Cor.~5.2.18]{AMM25}). Moreover,
the ensuing terminal coalgebra is then also an initial algebra.


\section{Preliminaries}
\label{S:preliminaries}

We assume that readers are familiar with basic notions of category theory as well as algebras
and coalgebras for an endofunctor. In \Cref{S:finitary}, we assume familiarity with locally
finitely presentable categories.
We denote by $\Set$ the category of sets and functions, $\Top$ is the category of topological
spaces and continuous functions, and $\Met$ is the category of \emph{(extended) metric spaces}
(so we might have $d(x,y) = \infty$) and non-expanding maps: the functions $f\colon X \to Y$
where $d(f(x),f(x')) \leq d(x,x')$ holds for every pair $x, x' \in X$.  Note that this class of
morphisms is smaller than the class of continuous functions between metric spaces.  Finally,
$\KVec$ is the category of vector spaces over a fixed but arbitrary field $K$, using linear
maps as morphisms.

We write $S\monoto X$ for monomorphisms and $X \epito E$ for strong epimorphisms.  Given an
endofunctor~$F$, we write $\nu F$ for its terminal coalgebra, if it exists.

Regarding the $\omega^\opp$-chain in~\eqref{eq:op-chain}, let
$\ell_n\colon V_\omega \to F^n 1$ ($n < \omega$) be the limit cone.  We obtain a unique
morphism \mbox{$m\colon FV_\omega \to V_\omega$} such that for all $n \in \omega^\opp$, we have
\begin{equation}
  \label{diag:m}
  \begin{tikzcd}
    FV_\omega
   \ar{rr}{m}
    \ar{rd}[swap]{F\ell_n}
    & &
    V_\omega
    \arrow{dl}{\ell_{n+1}}
    \\
    & F^{n+1} 1
  \end{tikzcd}
\end{equation} 
This is the connecting morphism from $V_{\omega+1} = FV_\omega$ to $V_\omega$ in the
transfinite chain~\eqref{eq:trans}.

If $F$ preserves the limit $V_\omega$, then $m$ is an isomorphism (and conversely).  Therefore,
its inverse yields the terminal coalgebra $m^{-1}\colon V_\omega \to FV_\omega$~\cite[dual of
second prop.]{A74}; shortly $\nu F = V_\omega$. 

This technique of \emph{finitary iteration} is the most basic and prominent construction of
terminal coalgebras.  However, it does \emph{not} apply to the finite power-set
functor~$\powf$.  For that functor $FV_{\omega} \not\cong
V_{\omega}$~\cite[Ex.~3(b)]{ak95}. However, a modification of finitary iteration does apply, as
shown by Worrell~\cite[Th.~11]{worrell:05}.  One needs a \emph{second infinite iteration},
iterating $F$ on the morphism $m\colon FV_{\omega} \to V_{\omega}$ rather than on
$!\colon F1 \to 1$, obtaining the $\omega^\opp$-chain
\begin{equation}\label{diag:mm}
  V_\omega
  \xla{m}
  V_{\omega + 1}
  \xla{Fm}
  V_{\omega + 2}
  \xla{FFm}
  \cdots.
\end{equation}
Its limit is denoted by
\begin{equation}\label{eq:omega+omega}
  V_{\omega+\omega} = \lim_{n < \omega} V_{\omega+n}.
\end{equation}

Worrell proved that when $F$ is a finitary set functor, it preserves
this limit. Therefore, we obtain that $V_{\omega + \omega}$ carries a terminal coalgebra;
shortly $\nu F = V_{\omega +\omega}$.

In general the \emph{terminal-coalgebra chain} is defined by transfinite
recursion: its objects are given by~\eqref{eq:trans} and the connecting morphisms are defined
by 
\[
  \begin{array}{l}
    \text{$v_{1,0}\colon V_1 \to 1$ is unique,}
    \qquad
    \text{$v_{k+1,j+1} = Fv_{k,j}\colon FV_k \to  FV_j$, and}
    \\
    \text{$v_{i,j}$ ($i>j$) is the limit cone for every limit ordinal $i$.}
  \end{array}
\]
We say that the terminal-coalgebra chain \emph{converges in $\lambda$ steps} if
$v_{\lambda+1,\lambda}$ is an isomorphism.

\subsection{Limits of $\omega^\opp$-chains}
We shall frequently use the following characterization of limits of $\omega^\opp$-chains.
\begin{rem}\label{P:chain}
  Consider an $\omega^\opp$-chain
  \begin{equation}\label{Xsequence}
    X_0 \xla{f_0} X_1 \xla{f_1} X_2 \xla{f_2} \cdots.
  \end{equation}
  In $\Set$, $\Top$, $\Met$, and $\KVec$, the limit $L$ is carried by the set of all sequences
  $(x_n)_{n < \omega}$, $x_n \in X_n$ that are \emph{compatible}: $f_{n}(x_{n+1}) = x_n$ for
  every $n$.  The limit projections are the functions $\ell_n\colon L \to X_n$ defined by
  $\ell_n((x_i)) = x_n$.
  \begin{enumerate}
  \item\label{R:Vietoris-tech:1} In $\Top$, the topology on $L$ has as a base the sets
    $\ell_n^{-1}(U)$, for $U$ open in $X_n$.
    
  \item\label{R:Haus:4} In $\Met$,  the metric on $L$
    is defined by $d((x_n), (y_n)) = \sup_{n <\omega} d(x_n,y_n)$.
    
  \item \label{R:Kvec:4} In $\KVec$, the limit $L$ is a subspace of the space $\Pi_i X_i$, and
   the projections $\ell_n$ are linear.
  \end{enumerate}
\end{rem}

\subsection{Cofree Comonads}
\label{S:introduce-cofree-comonads}

A closely related topic to terminal coalgebras are cofree comonads.  Given an endofunctor $F$
on a category $\A$, a comonad $F_{\sharp}$ together with a natural transformation
$\eps\colon F_{\sharp}\to F$ is \emph{cofree} provided that for every comonad $C$ and every
natural transformation $\phi\colon F_{\sharp}\to F$, there is a unique comonad morphism
$\phibar\colon C\to F_{\sharp}$ for which the triangle below commutes:
\[
\begin{tikzcd}
    &
   C
    \arrow[dashed]{d}{\phi}
    \\
    F_{\sharp}
    \ar{r}[swap]{\eps}
    \ar{ru}{\phibar}  & F
  \end{tikzcd}
\]

  \begin{proposition}[Dual to Barr~\cite{barr0}]\label{P:cofree}
    An endofunctor $F$ on a cocomplete category generates cofree
    comonads iff all cofree coalgebras exist, that is,
    $U_F\colon \Coalg F \to \A$ has a right adjoint.
  \end{proposition}
%

Moreover, if the given category has finite products, then a cofree
coalgebra on an object~$Y$ is precisely a terminal coalgebra for the
endofunctor $F(-) \times Y$.

\begin{corollary}\label{C:cofree}
  Let $\A$ be a cocomplete category with finite products. An
  endofunctor $F$ generates a cofree comonad iff for every object $Y$
  of $\A$ a terminal coalgebra for $F(-) \times Y$ exists.
  Moreover, the cofree comonad $F_\sharp$ is given by
  \(
  F_\sharp Y = \nu(F(-)\times Y).
  \)
\end{corollary}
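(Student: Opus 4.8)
The plan is to reduce the statement to \autoref{P:cofree} via the classical observation that a cofree $F$-coalgebra on an object $Y$ is the same thing as a terminal coalgebra for the endofunctor $F(-)\times Y$. Since $\A$ is cocomplete, \autoref{P:cofree} tells us that $F$ generates a cofree comonad if and only if the forgetful functor $U_F\colon \Coalg F\to\A$ has a right adjoint; and $U_F$ has a right adjoint precisely when for every object $Y$ of $\A$ the comma category $U_F\downarrow Y$ has a terminal object (the right adjoint then sends $Y$ to that terminal object, and the counit is the accompanying morphism to $Y$). So everything comes down to recognizing $U_F\downarrow Y$ as the category of coalgebras for $F(-)\times Y$.

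For this, fix $Y$ and write $G_Y = F(-)\times Y$; finite products in $\A$ are exactly what makes $G_Y$ an endofunctor of $\A$. A $G_Y$-coalgebra is an object $X$ together with a morphism $X\to FX\times Y$, equivalently a pair $\pair{c,g}$ consisting of an $F$-coalgebra structure $c\colon X\to FX$ and a morphism $g\colon X\to Y$; and a homomorphism $(X,\pair{c,g})\to(X',\pair{c',g'})$ of $G_Y$-coalgebras is precisely a morphism $h\colon X\to X'$ which is a homomorphism of $F$-coalgebras $(X,c)\to(X',c')$ and satisfies $g'\cdot h = g$. This exhibits an isomorphism of categories $\Coalg G_Y \cong U_F\downarrow Y$. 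Consequently $U_F\downarrow Y$ has a terminal object iff $\nu G_Y$ exists, and their carriers then coincide.

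Putting the pieces together yields the asserted equivalence: $F$ generates a cofree comonad iff $\nu(F(-)\times Y)$ exists for every object $Y$ of $\A$. For the displayed formula, recall that in the situation of \autoref{P:cofree} the underlying endofunctor of $F_\sharp$ is $U_F R$, where $R$ is the right adjoint of $U_F$; hence $F_\sharp Y$ is the carrier of the terminal object of $U_F\downarrow Y$, which by the identification just made is $\nu(F(-)\times Y)$. The only step with genuine content is this identification $\Coalg(F(-)\times Y)\cong U_F\downarrow Y$ together with the standard fact that $U_F$ has a right adjoint exactly when every comma category $U_F\downarrow Y$ has a terminal object; granting these, the corollary is a formal consequence of \autoref{P:cofree}.
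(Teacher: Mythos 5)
Your proof is correct and follows exactly the route the paper intends (it states no explicit proof, but the corollary is meant to follow from \autoref{P:cofree} together with the standard identification of a cofree $F$-coalgebra on $Y$ with a terminal coalgebra for $F(-)\times Y$). Your write-up simply makes that identification explicit via the isomorphism $\Coalg(F(-)\times Y)\cong U_F\downarrow Y$ and the pointwise criterion for right adjoints, which is precisely the expected argument.
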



From the above the following construction of $F_\sharp$ was
derived~\cite[Thm.~23.3]{kelly}. Let $F$ be an endofunctor on a
complete category $\A$. Define an ordinal-indexed chain $F_i$
($i \in \Ord^\opp$) of endofunctors and connecting natural
transformations $f_{i,j}\colon F_i \to F_j$ ($i\geq j$) by transfinite
recursion: put
\[
  \begin{array}{l@{\,}l@{\qquad}l}
    F_0 &= \Id,\\
    F_{j+1} &= FF_j \times \Id& \text{for all ordinals $j$,}\\
    F_j &= \lim_{i<j} F_i & \text{for all limit ordinals $j$, and}
    \\[10pt]
    \multicolumn{3}{l}{
      \text{$f_{1,0}\colon F_1 = F\times \Id \to \Id$ is the projection,}}\\
    \multicolumn{3}{l}{f_{k+1,j+1} = Ff_{k,j} \times \id\colon FF_k
      \times \Id \to FF_j \times \Id,}\\
    \multicolumn{3}{l}{\text{$f_{j,i}\colon F_j \to F_i$ ($j>i$) is the limit cone for
        every limit ordinal $j$.}}
  \end{array}
\]
If $F$ preserves the limit $F_\omega X = \lim_{n<\omega} F_nX$ for
every object $X$, then it generates a cofree comonad carried by
$F_\omega$, and we say that the cofree comonad is \emph{obtained in
  $\omega$ steps.}

Analogously, if $F$ preserves the limit $F_{\omega + \omega} X=
\lim_{n< \omega} F_{\omega +n} X$ for every object $X$, then the
cofree comonad is carried by $F_\sharp = F_{\omega+\omega}$, and we
say that it is \emph{obtained in $\omega+\omega$ steps}.

\subsection{Locally finitely presentable categories}
\label{S:lfp}

We continue with a terse review of locally finitely presentable categories; see~\cite{ar} for
background. A diagram $\D\to \A$ is \emph{directed} if its domain~$\D$ is a directed poset
(i.e.~nonempty and such that every pair of elements has an upper bound). A functor is
\emph{finitary} if it preserves directed colimits. An object $A$ of a category $\A$ is
\emph{finitely presentable} if its hom-functor $\A(A,-)\colon \A \to \Set$ preserves directed
colimits. A category is \emph{locally finitely presentable} (lfp, for short) if it is
cocomplete and has a set of finitely presentable objects such that every object is a directed
colimit of objects from that set.

\begin{example}\label{E:lfp}
  We list a number of examples of lfp categories.
  \begin{enumerate}
  \item The category $\Set$ of all sets and $\Setp$ of pointed sets; the finitely presentable
    objects are precisely the finite sets.
    
  \item The category $\Gra$ of graph and their homomorphisms as well as $\Pos$ of posets and
    monotone maps;  finitely presentable objects are precisely the finite graphs or
    posets, respectively.

  \item Every finitary variety, that is, any category of algebras specified by operations of
    finite arity and equations; the finitely presentable objects are precisely those
    algebras which have a presentation by finitely many generators and relations (in the usual
    sense of universal algebra).
    The following three items are instances of this one.
     
  \item The category $\Bool$ of Boolean algebras and their homomorphisms;  the finitely
    presentable objects are precisely the finite Boolean algebras. The same holds for every
    locally finite variety, e.g.~join-semilattices or distributive lattices.

  \item The category $M$-$\Set$ of sets with an action of a monoid
    $M$, and equivariant maps; the finitely presentable objects are precisely the
    orbit-finite $M$-sets (i.e.~those having finitely many orbits).

  \item The category $\KVec$ of vector spaces over a field $K$ and linear maps;
  the finitely presentable objects are precisely the finite-dimensional vector spaces.

    More generally, given a semiring $\Srg$, the category $\SMod$ of all $\Srg$-semimodules is lfp.

  \item The category $\Nom$ of nominal sets and equivariant maps; the finitely
    presentable objects are precisely the orbit-finite nominal sets.

  \item A poset, considered as a category, is lfp iff it is an algebraic lattice: a complete lattice in which every
    element is a join of compact ones.  (An element $x$ is \emph{compact} if for every
     subset~$S$, $x \leq \bigvee S$ implies that
    $x \leq\bigvee S'$ for some  finite $S' \subseteq S$.)
  \end{enumerate}
\end{example}

\begin{rem}\label{R:subobj}
  We next recall definitions concerning subobjects.
  \begin{enumerate}
  \item For a fixed object $A$, the monomorphisms with codomain $A$ have a natural preorder:
    given $c\colon C\monoto A$ and $c'\colon C'\monoto A$, we say that $c \leq c'$ iff
    $c = c' \o m$ for some monomorphism $m\colon C\to C'$.  A \emph{subobject} of $A$ is an
    equivalence class of monomorphisms under the induced equivalence relation. We write
    representatives to denote subobjects.
    
  \item A subobject (represented by) $c\colon C \monoto A$ is \emph{finitely presentable} if
    its domain $C$ is a finitely presentable object.
  \end{enumerate}
\end{rem}

\begin{rem}\label{rem-LP}
  We recall properties of an lfp category $\A$ used in the proof of \Cref{T:main}:
  \begin{enumerate}
  \item\label{rem-LP-1} $\A$ is complete~\cite[Rem~1.56]{ar} (and cocomplete by definition).

  \item\label{rem-LP-2} $\A$ has a $(\text{strong-epi}, \text{mono})$-factorization system~\cite[Rem.~1.62]{ar}.

  \item\label{rem-LP-3} Every morphism from a finitely presentable object to a
    directed colimit factorizes through one of the colimit maps.
    
  \item\label{rem-LP-4} Every object is the colimit of the canonical directed diagram of all of
    its finitely presentable subobjects~\cite[Lemma~3.1]{AdamekSousa24}.
    Moreover, given any finitely presentable subobject $c\colon C \monoto
    A$, it is easy to see that the object $A$ is the colimit of the diagram
    of all its finitely presentable subobjects $s\colon S \monoto A$ 
    such that $c \leq s$.
    
  \item\label{rem-LP-5} The collection of all finitely presentable objects, up to
    isomorphism, is a set. It is a \emph{generator} of $\A$; it follows that a morphism
    $m\colon X \to Y$ is monic iff for every pair $u, v\colon U \to X$ of morphisms with a
    finitely presentable domain $U$, we have that $m \cdot u = m \cdot v$ implies $u = v$.
  %
  
  \end{enumerate}
\end{rem}

\begin{notheorembrackets}
  \begin{defn}[\cite{mpw20}]\label{D:nonempty}
    An initial object $0$ is \emph{strict} if every
    morphism with codomain $0$ is an isomorphism.   A monomorphism $A\monoto B$ is \emph{empty} 
    if its domain is a strict initial object; it is \emph{nonempty} if it is not empty.%

    An intersection (a wide pullback of monomorphisms) is \emph{empty} if its domain is a
    strict initial object, that is, the limit cone is formed by empty monomorphisms; the
    intersection is nonempty if it is not empty.\footnote{There is no condition on the 
      (non-)emptiness of the \emph{family} of monomorphisms which is intersected here.}
    
    An endofunctor $F\colon\A\to\A$ \emph{preserves nonempty intersections} if $F$ takes a nonempty
    intersection to a (not necessarily nonempty) wide pullback.
  \end{defn}
\end{notheorembrackets}
\begin{rem}\label{R:pres}
  Every endofunctor preserving nonempty binary intersections preserves non\-empty
  monomorphisms. This holds since a morphism is monic iff the pullback along itself is formed
  by a pair of identity morphisms.
\end{rem}

\begin{example}\label{E:nonempty}
  \begin{enumerate}
  \item In $\Set$, the initial object $\emptyset$ is strict.  A nonempty intersection is an
    intersection of subsets having a common element.  Trnkov\'{a}~\cite{trnkova69} proved that
    every set functor preserves nonempty binary intersections. It follows that every finitary
    set functor preserves nonempty intersections~\cite[Thm.~4.4.3]{AMM25}.
    
  \item The initial object $\set{0}$ in $\KVec$ is not strict.  Thus all subobjects are nonempty.
    Every endofunctor on $\KVec$ preserves finite intersections~\cite[Ex.~4.3]{AdamekSousa24}.
    
  \item In $\Gra$ and $\Pos$ nonempty intersections are, as in $\Set$, intersections of
    subobjects having a common element.
  \end{enumerate}
\end{example}

\begin{rem}
  \begin{enumerate}
  \item Unlike on $\Set$ and $\KVec$, on most everyday categories finitary endofunctors may
    fail to preserve nonempty intersections. For example, consider the category $\Gra$ of
    graphs.  We exhibit a finitary endofunctor not preserving nonempty binary intersections. We
    denote by $1$ the terminal graph, a single loop, and by $S$ a single node which has no
    loop. Let~$F$ be the extension of the identity functor with $FX=X$ if $X$ has no loop, else
    $FX=X+1$. The graph $1+1$ has subobjects $S+1$ and $1+S$ with the nonempty intersection
    $S+S$, but~$F$ does not preserve it.
    
  \item The colection of all finitary endofunctors on lfp categories preserving non-empty
    intersections is, nevertheless, large. It contains constant functors, finite power-functors
    $(-)^n$, for $n \in \N$, and it is closed under finite products and composites. It is also
    closed under coproducts provided that they commute with pullbacks (which holds in
    categories such as $\Set$, $\Pos$, $\Gra$, and $\Nom$).

  \item On the category $\Nom$, the abstraction functor (cf.~\cite[Thm.~4.12]{Pitts13}) and the
    finite power-set functor preserve intersections.%
    \smnote{An example of an endofunctor on $\Nom$ not preserving nonempty binary intersections
      is easy (I write it here to not forget it, but we have decided not to include it): define
      $FX = 1$ if $X$ has an element with empty support (i.e.~$X = X' +1$ for some nominal set
      $X'$) and $FX = X$ else. Then $1 + \names + 1$ has two subobjects $1 + \names$ and
      $\names + 1$ whose nonempty intersection is $\names$. But $F$ does not preserve this
      intersection.}
\end{enumerate}
\end{rem}

\begin{rem}\label{R:nonempty}
  Let $A$ be an object of a locally finitely presentable category.
  \begin{enumerate}
    \takeout{ 
    \item\label{R:nonempty:1} Every pair of finitely presentable objects $m_i\colon B_i \monoto A$
      ($i = 1,2$) has a union $m_1\cup m_2$: this is the smallest subobject which includes $m_1$
      and $m_2$.  Indeed, factorize $[m_1,m_2]\colon B_1 + B_2 \to A$ as a strong epimorphism
      $e\colon B_1+ B_2\epito B$ followed by a monomphism $m\colon B\monoto A$.  Then we take $m$
      to be $m_1\cup m_2$.  If either $m_i$ is nonempty, so is $m$: use the unique morphism
      $n\colon B_i \monoto B$ with $m_1 = m\o n$.}
    
  \item\label{R:nonempty:2} If $A$ is not strictly initial, then it has a nonempty finitely
    presentable subobject.  To see this, let $c_i\colon C_i \monoto A$ $(i\in I)$ be the
    colimit cocone of the diagram in~\cref{rem-LP}\ref{rem-LP-4}.  If each~$C_i$ is strictly
    initial, then so is the colimit $A$. Indeed, the colimit of any diagram of strict initial
    objects is itself strict initial.

  \item\label{R:nonempty:2a} Moreover, if $A$ is not strictly initial, then it is the directed
    colimit of the canonical diagram of all its \emph{nonempty} finitely presentable
    subobjects. To see this, combine \cref{rem-LP}\ref{rem-LP-4} with the previous item.
    
  \item\label{R:nonempty:3} If for some ordinal $i \leq \omega+\omega$ the object $V_i$ is
    strictly initial, then the terminal-coalgebra chain converges in $\omega+\omega$ steps by
    default.  Indeed, recall the transfinite chain $V_j$ from~\eqref{eq:trans}.  The connecting
    morphism from $V_{i+1} = FV_i$ to $V_i$ is an isomorphism, whence $\nu F = V_i$.
  \end{enumerate}
\end{rem}

\subsection{Smooth Monomorphisms}
In addition to terminal coalgebras, we also study initial algebras for the functors
of interest in this paper.  For this, we call on a general result which allows one
to infer the existence of the initial algebra for an endofunctor $F$ from the existence
of a terminal coalgebra for $F$ (or in fact of any algebra with monic
structure).

For a class $\M$ of monomorphisms we denote by $\Sub_\M(A)$ the
collection of subobjects of~$A$ represented by monomorphisms from
$\M$ (cf.~\cref{R:subobj}). To say that this is a dcpo means that it is a set which (when
ordered by factorization in the usual way) is a poset having directed joins.

\removeThmBraces
\begin{defn}[{\cite[Def.~3.1]{amm21}}]\label{D:smooth}
  Let $\M$ be a class of monomorphisms closed under isomorphisms and
  composition.
  \begin{enumerate}
  \item\label{D:smooth:1} We say that an object $A$ has \emph{\smooth
      $\M$-subobjects} provided that $\Sub_\M(A)$ is a dcpo with
    bottom~$\bot$, where the least element
    and directed joins are given by colimits of the corresponding
    diagrams of subobjects.

  \item\label{D:smooth:2} The class $\M$ is \emph{smooth} if every
    object of $\A$ has smooth $\M$-subobjects.
  \end{enumerate}
  A category has \emph{smooth monomorphisms} if the class of all
  monomorphisms is smooth.
\end{defn}
\resetCurThmBraces
\begin{example}\label{E:cms}
  \begin{enumerate}
  \item \label{E:cms:set} The categories $\Set$ and $\Top$ have smooth
    monomorphisms, and so does the full subcategory of Hausdorff
    spaces. This is easy to see.
    
  \item\label{E:cms:met}
    The category $\Met$ also has smooth monomorphisms (these are the injective
    non-expanding maps)~\cite[Lemma~A.1]{amm21}.
    
    The full subcategory $\CMS$ of complete metric spaces does not have
    smooth monomorphisms. However, strong monomorphisms 
    are smooth in both $\Met$ and $\CMS$~\cite[Lemma~A.2]{amm21}.
These are the \emph{isometric embeddings}: morphisms which preserve distances.
    
  \item Strong monomorphisms (subspace embeddings) in $\Top$ are not
    smooth~\cite[Ex.~3.5]{ahrt23}. 
    
  \end{enumerate}
\end{example}
\removeThmBraces
\begin{theorem}[{\cite[Cor.~4.4]{amm21}}]\label{T:initial}
  Let $\M$ be a smooth class of monomorphisms. If an endofunctor~$F$
  preserving~$\M$ has a terminal coalgebra, then it has an initial algebra.
\end{theorem}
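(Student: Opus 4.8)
The plan is to build the initial algebra as the least $\M$-subalgebra of the terminal coalgebra. By Lambek's lemma the terminal-coalgebra structure $\tau\colon\nu F\to F\nu F$ is an isomorphism, so $(\nu F,\tau^{-1})$ is an $F$-algebra whose structure map is invertible, in particular a monomorphism. Everything takes place inside the poset $\Sub_\M(\nu F)$, which by smoothness of $\M$ is a dcpo whose bottom $\bot$ and whose directed joins are computed as colimits in the ambient category; in particular $\bot$ is the initial object, regarded as an $\M$-subobject of $\nu F$.

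I would then define a monotone operator $\Phi\colon\Sub_\M(\nu F)\to\Sub_\M(\nu F)$ sending the subobject of an $\M$-mono $m\colon S\monoto\nu F$ to the subobject of $\tau^{-1}\cdot Fm\colon FS\to\nu F$. This is well defined: $F$ preserves $\M$, so $Fm\in\M$, and composing with the isomorphism $\tau^{-1}$ stays in $\M$. Monotonicity is immediate: if $m'\cdot j=m$ then $Fj$ witnesses $\Phi[m]\le\Phi[m']$; and $\bot\le\Phi(\bot)$ trivially. Iterate $\Phi$ from the bottom: $\Phi^0=\bot$, $\Phi^{i+1}=\Phi(\Phi^i)$, and $\Phi^\lambda=\bigvee_{i<\lambda}\Phi^i$ at limit ordinals (the chain is increasing, hence directed, so these joins exist). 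Since $\Sub_\M(\nu F)$ is a set, the chain stabilizes: $\Phi^{\xi+1}=\Phi^\xi$ for some $\xi$; write $\mu:=\Phi^\xi$, with representing mono $m_\xi\colon\mu\monoto\nu F$. Unpacking $\Phi^{\xi+1}=\Phi^\xi$ yields an isomorphism $\alpha_0\colon F\mu\to\mu$ with $m_\xi\cdot\alpha_0=\tau^{-1}\cdot Fm_\xi$, so $(\mu,\alpha_0)$ is an algebra with invertible structure and $m_\xi$ is an algebra homomorphism into $(\nu F,\tau^{-1})$. Finally, if $s\colon S\monoto\nu F$ carries any $\M$-subalgebra of $(\nu F,\tau^{-1})$, then $\tau^{-1}\cdot Fs$ factors through $s$, i.e. $\Phi(S)\le S$, whence $\Phi^i\le S$ for all $i$ by transfinite induction and so $\mu\le S$. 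Thus $\mu$ is the least $\M$-subalgebra of $(\nu F,\tau^{-1})$.

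It remains to see that $(\mu,\alpha_0)$ is initial, and here is where I expect the real work. The point is that the chain $(\Phi^i)$ with its connecting morphisms $c_{i,i+1}$ is, up to isomorphism, the transfinite initial-algebra chain $0\to F0\to F^20\to\cdots$ of \cite{A74} (dual to the chain recalled in the introduction): $\Phi^0$ is the initial object, $\Phi^{i+1}$ is carried by $F\Phi^i$, the connecting map $\Phi^{i+1}\to\Phi^{i+2}$ is $F$ applied to $c_{i,i+1}$ (apply $F$ to the defining equations $m_{i+1}\cdot c_{i,i+1}=m_i$), and at limit stages the colimit is the directed join, again by smoothness. Since this chain stabilizes at $\xi$ with connecting isomorphism $\mu\cong F\mu$, the standard construction yields that $(\mu,\alpha_0)$ is the initial algebra: for any algebra $(B,\beta)$ one defines $h_i\colon\Phi^i\to B$ by transfinite recursion ($h_0$ the unique map out of $0$, $h_{i+1}=\beta\cdot Fh_i$, $h_\lambda$ induced from the colimit), checks that the $h_i$ form a cocone over the chain — which at stage $\xi$ is exactly the assertion that $h:=h_\xi\colon\mu\to B$ is a homomorphism — and proves uniqueness by an analogous induction, using that the colimit injection $\Phi^{i+1}\to\mu$ equals $\alpha_0\cdot F(\text{the injection }\Phi^i\to\mu)$.

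The main obstacle is to carry out this last step cleanly: the transfinite recursion defining $h$ and the precise identification of $(\Phi^i)$ with the initial-algebra chain, in particular the limit stages, where one must invoke that directed joins in $\Sub_\M(\nu F)$ are genuine colimits (not merely order-theoretic suprema). Everything before that — the monotonicity and well-definedness of $\Phi$, the stabilization of the iteration, and the minimality of $\mu$ among $\M$-subalgebras — is routine.
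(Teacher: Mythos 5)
Your proposal is correct, but note first that the paper itself contains no proof of \autoref{T:initial}: the result is imported from \cite[Cor.~4.4]{amm21}, so the only meaningful comparison is with that cited proof. What you give is essentially the classical iterative Initial Algebra Theorem transplanted into $\Sub_\M(\nu F)$: smoothness supplies exactly the bottom element and the limit-stage colimits needed to realize the initial-algebra chain $0 \to F0 \to FF0 \to \cdots$ as an increasing chain of $\M$-subobjects of $\nu F$, the fact that $\Sub_\M(\nu F)$ is a set forces stabilization, and the step you defer is the standard convergence argument. That deferred step does go through: the key identity $c_{i+1,\xi} = \alpha_0\cdot Fc_{i,\xi}$ follows by cancelling the monomorphism $m_\xi$ in $m_\xi\cdot c_{i+1,\xi} = m_{i+1} = \tau^{-1}\cdot Fm_\xi\cdot Fc_{i,\xi} = m_\xi\cdot\alpha_0\cdot Fc_{i,\xi}$, and the usual cocone-compatibility checks in the transfinite recursion are eased here because any two factorizations through a mono into $\nu F$ coincide. (Two conventions you tacitly use are part of the smoothness setup of \cite{amm21}: isomorphisms lie in $\M$, so that $\tau^{-1}\cdot Fm\in\M$, and the bottom of $\Sub_\M(\nu F)$ is carried by the initial object.) The cited proof is genuinely different: its very purpose is to avoid transfinite iteration. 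It obtains the least $\M$-subalgebra as the least fixed point of (essentially your) operator $\Phi$ by a purely order-theoretic argument on the dcpo $\Sub_\M$ (Pataraia-style, no ordinals, no choice), and it derives initiality without ever forming the initial-algebra chain, via the theory of well-founded and recursive coalgebras; this buys constructivity and also the sharper statement recorded after \autoref{T:initial}, namely that the initial algebra is the least $\M$-subalgebra of \emph{any} algebra $m\colon FA\monoto A$ with $m\in\M$. Your route is more elementary in categorical terms at the price of ordinal bookkeeping and replacement; note that it too generalizes verbatim to such an arbitrary $\M$-pre-fixed point, since invertibility of the structure map $\tau^{-1}$ is never actually needed to define $\Phi$ or to extract the isomorphism $\alpha_0$ at the stabilization stage.
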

\resetCurThmBraces
\noindent
Note that loc.~cit.~states more: given any algebra
$m\colon FA \monoto A$ where $m$ lies in $\M$, the initial algebra
exists and is a subalgebra of $(A,m)$.

\section{A Sufficient Condition for $\nu F = V_{\omega+\omega}$}
\label{S:finitary}

We first present a simple result that holds for all  endofunctors of all categories.
This result will then be used several times in the sequel.
In it, we recall the notation $V_{\omega+\omega}$ from \eqref{diag:mm}.
Following this, we introduce $\DCC$-categories, and prove a generalization of Worrell's result for them
(\Cref{T:main}).

\begin{proposition}\label{P:modest}
  Let $\A$ be a category with terminal object $1$.  Let $F\colon\A\to\A$ be an endofunctor, let
  $\M$ be a class of monomorphisms closed under composition, and suppose that the following
  hold:
  \begin{enumerate}
  \item\label{P:modest:1}
    The limit $V_{\omega}$ of the $\omega^{\opp}$-chain~\eqref{eq:op-chain}
    exists, and the canonical morphism $m\colon FV_{\omega}\to V_{\omega}$ belongs to~$\M$. 
  \item\label{P:modest:2} $F$ preserves $\M$: if $m$ belongs to $\M$, so does $Fm$.
  \item\label{P:modest:3} $\A$ has and $F$ preserves intersections of
    $\omega^\opp$-limits of $\M$-subobjects.
  \end{enumerate}
  Then $V_{\omega + \omega}$ exists, and it is preserved by $F$.  Moreover,
  $\nu F = V_{\omega+\omega}$.
\end{proposition}
\begin{proof} 
  Let $V_i$ be defined for all ordinals $i$ by $V_0 = 1$, $V_{i+1} = FV_i$, and
  $V_i = \lim_{j<i} V_j$ for limit ordinals $i$.  The $\omega^{\opp}$-chain
  \eqref{eq:op-chain} is its beginning, \eqref{diag:m} defines the connecting morphism
  $m\colon V_{\omega+1}\to V_{\omega}$, and the $\omega^{\opp}$-chain~\eqref{diag:mm}, repeated
  below, is the continuation of the chain in~\eqref{eq:op-chain} up to
  $V_{\omega+\omega} = \lim_{i< \omega+\omega}V_{i}$:
  \[
    V_\omega
    \xla{m}
    V_{\omega + 1}
    \xla{Fm}
    V_{\omega + 2}
    \xla{FFm}
    \cdots.
  \]
  From \cref{P:modest:1,P:modest:2}, the morphisms in the chain~\eqref{diag:mm} belong to $\M$.
  Since $\M$ is closed under composition, that chain is an $\omega^{\opp}$-chain of subobjects
  of $V_{\omega}$ carried by members of~$\M$. Its limit is its intersection.  By
  \cref{P:modest:3}, $F$ preserves this limit, and it follows that $\nu F =
  V_{\omega+\omega}$~\cite[dual of second prop.]{A74}.
\end{proof}

\subsection{$\DCC$-Categories}\label{S:DCC}
We introduce lfp categories satisfying a descending chain condition, shortly $\DCC$-categories.
Examples are presented and the related condition of graduatedness is discussed.
We prove that $\omega^{\opp}$-limits in $\DCC$-categories are finitary.
In  \cref{S:omega+omega}, we  prove that $\nu F = V_{\omega +\omega}$ for all finitary
endofunctors on $\DCC$-categories preserving nonempty binary intersections.


We have already seen the order of subobjects of a fixed object $A$ (cf.~\cref{R:subobj}).
(This corresponds to the preordered collection in the slice category $\A \downarrow A$.)
Dually, we use the order on strong quotients, represented by strong epimorphisms
$e\colon A \epito E$: given $e'\colon A \epito E'$, we have $e\leq e'$ iff $e' = u\o e$ for some
$u\colon E\to E'$.   This corresponds to the preordered collection in the slice category $A\downarrow\A$.
In the literature, the opposite order on quotients is also used.   For
example, Urbat and Schr\"{o}der~\cite{UrbatS20}, whose work has inspired our 
next definition, use that opposite order.  So readers of papers in this area should be careful.

\begin{defn}\label{D:DCC}
  A locally finitely presentable category $\A$ is a \emph{DCC-category} if every finitely
  presentable object $A$ satisfies the following \emph{descending chain condition}: Every
  strictly descending chain of subobjects or strong quotients of $A$ is finite.
\end{defn}

Our notion is also related to the stronger notion of graduatedness~\cite{AdamekSousa24}: a
locally finitely presentable category is \emph{graduated} if to every every finitely
presentable object $A$ a natural number $n$ is assigned, called the \emph{grade of $A$}, such
that every (proper) subobject and every (proper) strong quotient is finitely presentable, and
with a grade at most (smaller than, respectively) the grade of $A$.

\begin{proposition}\label{P:grad-DCC}
  Every graduated locally finitely presentable category is
  $\DCC$.
\end{proposition}
\begin{proof}
  Let $A$ be a finitely presentable object having grade $n$. Then every strictly decreasing chain of
  subobjects or strong quotients of $A$ has length at most $n+1$. 
\end{proof}

The converse of~\cref{P:grad-DCC} does not hold:
\begin{example}
  Here is a $\DCC$-category which is not graduated.  Consider the
  poset $A$ with top element $\top$, bottom element $\bot$, and
  elements $a_{nm}$ ($n\leq m < \omega$) ordered as follows:
  $a_{ij} \leq a_{nm}$ iff $i = n$ and $j \leq m$.
  \[
    \tikzcdset{scale cd/.style={every label/.append style={scale=#1},
        cells={nodes={scale=#1}}}}
    \begin{tikzcd}[scale cd=.9,sep=small]
      & \top & & \\
      a_{00} \ar[-]{ur }& a_{11}\ar[-]{u}  & a_{22} \ar[-]{ul}& \\
      & a_{10}  \ar[-]{u}& a_{21}\ar[-]{u} & \cdots \\
      & & a_{20}\ar[-]{u} & \\
      & \bot\ar[-]{uu} \ar[-]{uuul}  \ar[-]{ur}& \\
    \end{tikzcd}   
  \]
  This is a complete lattice with all elements compact (i.e.~finitely
  presentable).  Thus, it is a locally finitely presentable category.
  The $\DCC$ condition is obvious.  But $\top$ cannot have a (finite)
  grade: its grade would have to be at least $2$, due to
  $\bot < a_0 < \top$, and at least $3$ due to
  $\bot < a_{10}< a_{11} < \top$, etc.
\end{example}    

\begin{notheorembrackets}
  \begin{example}[{\cite{AdamekSousa24}}] Here are examples of
    graduated categories.  In the first four items the grade is the
    cardinality of the underlying set.
    \begin{enumerate}
    \item $\Set$, and $\Set_p$ (pointed sets).   
    \item Boolean algebras and homomorphisms.
    \item Modules over a finite semiring and linear maps.
    \item $M$-$\Set$, sets with an action of a finite monoid $M$, and equivariant maps.
    \item $\Gra$, the category of graphs and homomorphisms.  The grade of a graph on $n$
      vertices with $k$ edges is $n+k$.
    \item $\KVec$, the category of vector spaces over a fixed field $K$ and linear
      maps. The grade of~$A$ is its dimension.
      
    \item $\Pos$, the category of posets and monotone maps.  Let $\Nat\times \Nat$ be
      the poset of pairs of natural numbers ordered lexicographically,
      and let $\mathbb{P}$ be the subposet of pairs $(n,k)$ with
      $k\leq n^2$.  There is an isomorphism
      $\phi\colon \mathbb{P}\to \Nat$.  The grade of a poset on $n$
      elements which contains $k$ comparable pairs is $\phi(n,k)$.
    \end{enumerate}
  \end{example}
\end{notheorembrackets}

\ifbolt
\begin{rem}
  While graduatednes is a stronger notion, $\DC$ categories are
  thunderstruck while graduated categories are too old to rock'n'roll
  but too young to die (cf.~Ian Anderson et al.~\cite{JethroTull76}).
\end{rem}
\fi

\subsection{Nominal Sets and Presheaves on Finite Sets}\label{S:Nom-SetF}

In this section we present two important examples of a
graduated categories which are not included in the previous work~\cite{AdamekSousa24}. The
first one is $\Nom$, the category of nominal sets and equivariant maps, and the second one is
the presheaf category $\SetF$, where $\F$ is the category of all finite sets and maps between
them. Both categories are used in well-known approaches to the category-theoretic study of
syntax with variable binding operations~\cite{GabbayPitts02,fpt}, such as the $\lambda$- or
$\pi$-calculus, nominal sets are also used as a framework for automata for data
languages~\cite{BojanczykEA14,skmw17}.

\mysubsec{Nominal sets.} We first present a proof that $\Nom$ is graduated based on ideas by
Urbat and Schröder~\cite{UrbatS20}. We assume that readers are familiar with basic notions
(like orbit and support) from the theory of nominal sets, see Pitts~\cite{Pitts13}.
\begin{proposition}
  The category $\Nom$ is a graduated locally finitely presentable category, therefore it is $\DCC$.
\end{proposition}

\begin{proof}
  \begin{enumerate}
  \item The finitely presentable objects of $\Nom$ are precisely the
    orbit-finite nominal sets~\cite[Prop.~2.3.7]{petrisanphd}. Since
    subobjects of a nominal set $X$ are given by a number of orbits of
    $X$ due to equivariance, the descending chain condition on
    subobjects of an orbit-finite nominal set clearly holds.

  \item For the descending chain condition for strong quotients, recall that in $\Nom$ all
    quotients are strong, and they are represented by the surjective equivariant maps.  We
    first consider single-orbit nominal sets and recall that the supports of elements of an
    orbit all have the same cardinality. We also recall the standard
    fact~\cite[Exercise~5.1]{Pitts13} that every single-orbit nominal set $X$ whose elements
    have supports of cardinality $n$ (this is the \emph{degree} of $X$) is a quotient of the
    nominal set $\names^{\# n} = \set{(a_1, \ldots, a_n) : |\set{a_1, \ldots, a_n}| = n}$,
    where $\names$ denotes the set of names (or atoms). Now observe that a quotient of
    $\names^{\# n}$ having degree $n$ is determined by a subgroup $G$ of the symmetric
    group~$S_n$. More specifically, the quotient determined by~$G$ identifies
    $(a_1, \ldots, a_n)$ and $(a_{\pi(1)}, \ldots, a_{\pi(n)})$ for every
    $(a_1, \ldots, a_n) \in \names^{\#n}$ and every $\pi \in G$. Conversely, given a quotient
    $e\colon \names^{\#n} \epito X$ we obtain $G$ as consisting of all those $\pi$ for
    which~$e$ identifies the above two $n$-tuples for every $a_1, \ldots, a_n$ in $\names$. We
    conclude that every strictly descending chain of quotients of $\names^{\#n}$ all having
    degree $n$ corresponds to a strictly descending chain of subgroups of $S_n$; the same holds
    of course for every single-orbit nominal set of degree $n$. For $n\geq 2$, such a chain of
    subgroups of~$S_n$ has length at most $2n-3$~\cite{Babai86} (and for $n=1$, $S_n$ is
    trivial, of course, so chains of subgroups have length 0).

  \item \label{threenumbers} Given a general orbit-finite
    set $X$, we now conclude that for every proper strong
    quotient of $X$, one of three numbers strictly
    decreases: the number of orbits, the degree of some orbit of $X$,
    or the maximum length of the above chain of subgroups of $S_n$ for
    some orbit.  We conclude that $\Nom$ is $\DCC$.

  \item To see that $\Nom$ is even graduated, observe that we can
    assign to each orbit-finite nominal set~$X$ the sum of the
    three numbers mentioned in point \ref{threenumbers} above.
    It is then clear that for every proper nominal subset or quotient
    of $X$ the grade is strictly smaller.  \qedhere
  \end{enumerate}
\end{proof}

\mysubsec{Presheaves on finite sets.} We now turn to $\SetF$, the category of presheaves on
finite sets; Fiore et al.~\cite{fpt} have named this the category of sets in context. We 
just speak of presheaves.

\begin{rem}
  Note that $\SetF$ is equivalent to the category of all finitary set functors. In one
  direction, this equivalence is given by restricting the domain of a finitary set functor
  to~$\F$, and in the other direction one takes the left Kan extension of a given presheaf along
  the inclusion functor~$\F \subto \Set$.
\end{rem}

Like every presheaf category, $\SetF$ is locally finitely presentable.  We proceed to describe
the finitely presentable presheaves.  In what follows we identify natural numbers with finite
ordinals $n = \set{0, \ldots, n-1}$, as usual. The following definition was given for finitary
set functors by Ad\'amek and Trnkov\'a~\cite{ATbook}; we state it for presheaves.

\removeThmBraces
\begin{defn}\label{D:super}
  A presheaf $P\colon \F \to \Set$ is \emph{super-finitary} if there is a \emph{generating
    (natural) number} $n$, that is, (1)~$Pn$ is a finite set, and~(2) for every finite set $X$
  and $x \in PX$, there exists a map $f\colon n\to X$ such that $x \in Pf[Pn]$. (This implies
  that $PX$ is finite.)
\end{defn}

A strong quotient of a presheaf $P$ is represented by a natural transformation
$q\colon P \epito Q$ whose components $q_X$ are surjective maps (i.e.~epic in $\Set$); we
simply speak of quotients subsequently. The next result has appeared in previous
work~\cite[Lemma~3.30 and Cor.~3.34]{amsw19_1}.%
%
\begin{proposition}\label{P:super}
  For every presheaf $P$, the following are equivalent:
  \begin{enumerate}
  \item $P$ is finitely presentable,
  \item $P$ is super-finitary,
  \item\label{P:super:3} $P$ is a quotient of a presheaf $X \mapsto A\times X^n$ for a finite
    set $A$ and a natural number~$n$.
  \end{enumerate}
\end{proposition}
\noindent
It follows from the proof that $n$ in \cref{P:super:3} can be chosen to be any
generating number of~$P$.
\begin{rem}\label{R:super}
  \begin{enumerate}
  \item Note that for finite $A$, every presheaf $A\times (-)^n$ is clearly super-finitary with
    $n$ as its generating number.

  \item\label{R:super:1a} It is easy to infer from~\cref{P:super} that a presheaf has
    generating number $0$ iff it is a constant functor with a finite set~$A$ as value.

  \item\label{R:super:1b} From \cref{D:super} we see that if $n$ is a generating number of $P$, then so is every $n' \geq n$. 

  \item\label{R:super:2} The proof of \cref{P:super} in op.~cit.~shows that $n$ is a generating
    number for a presheaf $P$ precisely when $P$ is a quotient of $A \times (-)^n$ for
    some finite set $A$.

  \item\label{R:super:3} Super-finitary presheaves are closed under subpresheaves and
    quotients~\cite[Prop.~3.31]{amsw19_1}. Moreover, for a quotient presheaf $P \epito Q$ or a
    subpresheaf $Q \subto P$, every generating number for $P$ is also generating for
    $Q$. For quotients and subpresheaves of $(-)^n$, see~\cite[Ex.~3.29]{amsw19_1}.
    For subpresheaves of $A\times (-)^n$, the argument is the same as in op.~cit. 
    
    For a general subpresheaf $Q \subto P$ and a generating natural number $n$ of $P$, write
    $P$ as a quotient $q\colon A\times (-)^n \epito P$ for some finite set $A$, using
    \cref{R:super:2}. Now consider the pullback
    \[
      \begin{tikzcd}
        Q'
        \pullbackangle{-45}
        \ar[hook]{d}
        \ar[->>]{r}
        &
        Q
        \ar[hook]{d}
        \\
        A \times (-)^n
        \ar[->>]{r}{q}
        &
        P
      \end{tikzcd}
    \]
    Then $n$ is a generating natural number for $Q'$ and whence for $Q$ by what we have
    already stated above. 
  \end{enumerate}
\end{rem}
\begin{lemma}\label{L:super}
  Let $P$ be a super-finitary presheaf with a generating number $n$, and put $k = 2n$. For every
  \mbox{$\ell \geq k$} and every pair $x_1, x_2 \in F\ell$, there exists an injective map
  $f\colon k \monoto \ell$ such that $x_1, x_2 \in Ff[Fk]$.
\end{lemma}
\begin{proof}
  \begin{enumerate}
  \item\label{L:super:1} Suppose first that the presheaf $P$ is of the form $A \times
    (-)^n$. Suppose that $\ell \geq k$, and let $x_1,x_2 \in P\ell$. Then we have
    $x_i = (a_i, t_i)$ for some $a_i \in A$ and $t_i \in \ell^n$, for $i = 1,2$. Let
    $M \subseteq \ell$ be the set of all components of the $n$-tuples $t_1, t_2$. Then clearly
    $|M| \leq k$. Thus, we can find an injective map $f\colon k \to \ell$ and $n$-tuples
    $t_i' \in k^n$ such that $f^n(t_i') = t_i$ for $i = 1,2$. Hence, for $y_i = (a_i, t_i')$ we
    have $Pf(y_i) = x_i$ for $i = 1,2$, and so we are done.

  \item Given any super-finitary presheaf $P$, we write it as a quotient
    $q\colon A \times (-)^n \epito P$ for some finite set $A$.
    Given $x_1, x_2 \in P\ell$, choose $y_1, y_2 \in A\times \ell^n$ such that
    $q_\ell(y_i) = x_i$ for $i = 1,2$. By \cref{L:super:1} we have an injective map $f\colon k
    \monoto \ell$ and elements $y_1',y_2'\in A \times k^n$ such that $(A \times f^n)(y_i') = y_i$
    for $i = 1,2$. Now due to the naturality square
    \[
      \begin{tikzcd}[column sep = 40]
        A\times k^n
        \ar{d}[swap]{q_k}
        \ar{r}{A\times f^n}
        &
        A\times \ell^n
        \ar{d}{q_\ell}
        \\
        Pk
        \ar{r}{Pf}
        &
        P\ell
      \end{tikzcd}
    \]
    we have that $x_i' = q_k(y_i')$ satisfies $Pf(x_i') = Pf(q_k(y_i')) = q_\ell((A\times
    f^n)(y_i')) = q_\ell(y_i) = x_i$ for $i = 1,2$. Thus, $x_1,x_2 \in Pf[Pk]$ as desired.
    \qedhere
  \end{enumerate}
\end{proof}
\begin{lemma}\label{L:super-quo}
  Every super-finitary presheaf has only finitely many quotients, which are all super-finitary (\cref{R:super}\ref{R:super:3}). 
\end{lemma}
\begin{proof}
  Let $P$ be a super-finitary presheaf, let $k$ be the natural number as in \cref{L:super}. Using
  \cref{R:super}\ref{R:super:1b}, we may assume that $k > 0$. We shall prove that every
  quotient $q\colon P \epito Q$ is determined by the (finite) collection of equivalence
  relations $\ker q_n = \set{(x,y) : q_n(x) = q_n(y)}$ on the sets~$Pn$ for $n \leq k$. Indeed,
  the kernel of $q_\ell$ for $\ell > k$ is derived from $\ker q_k$ as follows. We show below
  that elements $x_1,x_2 \in P\ell$ are merged by $q_\ell$ iff there is a monic
  $f\colon k \monoto \ell$ and elements $y_1,y_2 \in Pk$ such that $x_i = Pf(y_i)$, for $i =1,2$,
  and which are merged by $q_k$:
  \begin{enumerate}
  \item If $f$ and $y_1,y_2$ exist, then $q_\ell(x_1) = q_\ell(x_2)$, since
    \[
      q_\ell(x_i) = q_\ell(Pf(y_i)) = Qf(q_k(y_i))
      \qquad
      \text{for i = 1,2},
    \]
    due to the naturality of $q$.
  \item Conversely, if $q_\ell(x_1) = q_\ell(x_2)$, then
    \[
      Qf(q_k(y_1)) = q_\ell(Pf(y_1)) = q_\ell(x_1) = q_\ell(x_2) = q_\ell(Pf(y_2)) =
      Qf(q_k(y_2)),
    \]
    again due to the naturality of $q$. Since $k >0$, we see that $f$ is split monic, whence so
    is $Qf$. We conclude that $q_k(y_1) = q_k(y_2)$. \qedhere
  \end{enumerate}
\end{proof}
\takeout{
\begin{lemma}\label{L:super-sub}
  Every presheaf has finitely many subpresheaves, and they are all super-finitary.
\end{lemma}
\begin{proof}
  \begin{enumerate}
  \item\label{L:super-sub:1} First, we consider the presheaf $(-)^n$. Observe that every subpresheaf
    $P \subto (-)^n$ defines a set $E$ of equivalence relations on $n$:
    \[
      E = \set{\ker h : \text{$h\colon n \to n$ lies in $Pn$}}.
    \]
    We verify that for every finite set $S$ and every map $f\colon n \to S$ we have
    \begin{equation}\label{eq:ker}
      f\in PS \qquad\text{if and only if}\qquad \ker f\in S.
    \end{equation}
    For `if', suppose that $\ker f \in E$ and choose any $h \in Pn$ such that $\ker h = \ker
    f$. Define $u\colon n \to S$ by $u(h(x)) = f(x)$, and for $y \not\in h[n]$,
    $u(y)$ is arbitrary. Since $\ker f = \ker h$, the map $u$ is well-defined. Using that $P$ is a
    subfunctor of $(-)^n$, we conclude that $f = u  \cdot h = Pu(h) \in PS$, as desired.

    For `only if', let $f \in PS$ and choose any $h\colon n \to n$ such that $\ker h = \ker
    f$. Define $v\colon S \to n$ by $u(f(x)) = h(x)$, and for $y \not\in f[n] \subseteq S$,
    $u(y)$ is arbitrary. Since $\ker f = \ker h$, the map $u$ is well-defined. Using that $P$
    is a functor, we conclude that $h = u \cdot f = Pu(f) \in Pn$. Thus, $\ker h \in E$.

    It follows from~\eqref{eq:ker} that $E$ is uniquely determined by $P$; different
    subfunctors of $(-)^n$ yield different sets $E$. Hence, since we only have finitely many
    different sets $E$ of equivalence relations on $n$, we see that there are only finitely
    many different subfunctors of $(-)^n$. 
      
  \item\label{L:super-sub:2} A subpresheaf of $P = A\times (-)^n$, for a finite set $A$ and a
    natural number $n$, is given by a subset of $A$ and a subpresheaf of $(-)^n$. Using
    \cref{L:super-sub:1}, we see that $P$ has only finitely many subpresheaves.

  \item Now let $P$ be any super-finitary presheaf. So $P$ is a quotient $q\colon A \times (-)^n
    \epito P$ for a finite set $A$ and a natural number $n$ according to \cref{P:super}. Given
    a subpresheaf $Q\subto P$ we obtain a subpresheaf $Q' \subto A \times (-)^n$ by forming the
    pullback below:
    \[
      \begin{tikzcd}
        Q'
        \pullbackangle{-45}
        \ar[hook]{d}
        \ar[->>]{r}
        &
        Q
        \ar[hook]{d}
        \\
        A \times (-)^n
        \ar[->>]{r}{q}
        &
        P
      \end{tikzcd}
    \]
    Since pullbacks in $\SetF$ are formed object wise and epimorphisms are stable under
    pullback in $\Set$, we see that $Q$ is quotient of $Q'$.  By \cref{L:super-sub:2}, there
    are only finitely many possible $Q'$ each of which is super-finitary. Hence, by
    \cref{L:super-quo}, each such $Q'$ has only finitely many quotients $Q$, which are all
    super-finitary. So there are only finitely many subpresheaves of $P$, which are all
    super-finitary. 
    \qedhere
  \end{enumerate}
\end{proof}
}
\begin{theorem}
  The category $\SetF$ is graduated, and therefore it is $\DCC$.
\end{theorem}
\begin{proof}
  We have already seen in \cref{R:super}\ref{R:super:3} that super-finitary presheaves are
  closed under subobjects and quotients. Given a super-finitary presheaf $P$, let $n$ be the smallest
  generating number of $P$. According to \cref{R:super}\ref{R:super:2}, $n$ is the
  least number such that we have a quotient $A \times (-)^n \epito P$ for
  some finite set $A$. 

  The grade of $P$ is defined as $|A|+\sum_{k\leq 2n} |Pk|$. The reasons why this works are
  that (1)~$P$ is a quotient of $A \times (-)^n$ and~(2)~that quotients of $P$ are determined by
  quotients of $Pk$, for $k \leq 2n$, according to the proof of \cref{L:super-quo}. If
  $n = 0$, then $P$ is constant with value~$A$ (\cref{R:super}\ref{R:super:1a}), and so the
  grade of any proper subfunctor or quotient of $P$ is clearly smaller than that of~$P$.

  Observe further that if $Pn = \emptyset$, then $P$ is constant with value
  the empty set; we have
  \[
    PX = \bigcup_{f\colon n \to X}Pf[Pn] = \emptyset.
  \]
  Again, we are done in this case. So from now on we assume that $n > 0$ and that
  $Pn \neq \emptyset$.

  Let $Q\subto P$ be a proper subpresheaf. If the least generating number $m$ of $Q$ is smaller
  than $n$, then the grade of $Q$ is smaller than that of $P$, because $|Qk| \leq |Pk|$ for all
  $k \leq 2n$ and $|P(2n)| >0$, since $|Pn| > 0$. If $m = n$, then $Qn \subseteq Pn$ must be a
  proper subset; for otherwise we have the following naturality square of the inclusion $Q \subto P$
  \[
    \begin{tikzcd}
      Qn
      \ar[equals]{r}
      \ar{d}[swap]{Qf}
      &
      Pn
      \ar{d}{Pf}
      \\
      QX
      \ar[hook]{r}
      &
      PX\rlap{,}
    \end{tikzcd}
  \]
  which we use in the penultimate step of the following computation to obtain a contradiction:
  \[
    QX
    =
    \bigcup_{f\colon n \to X} Qf[Qn]
    =
    \bigcup_{f\colon n \to X} Qf[Pn]
    =
    \bigcup_{f\colon n \to X} Pf[Pn]
    =
    PX
    \quad
    \text{for every finite set $X$}.
  \]
  Thus, the grade of~$Q$ is strictly smaller than that of~$P$.

  Now let $q\colon P\epito Q$ be a proper quotient. Recall from the proof of \cref{L:super-quo}
  that the kernel $\ker q_\ell$ for every $\ell > 2n$ is determined by the morphisms $q_k$ for
  $k \leq 2n$; note that $2n$ is the number from (the proof of) \cref{L:super}. Equivalently,
  the quotient $q_\ell\colon P\ell \epito Q\ell$, for every $\ell >k$, is determined by the
  quotients $q_k\colon Pk \epito Qk$ for $k \leq 2n$. Hence, since $Q$ is a proper quotient of
  $P$, we have $|Qk| < |Pk|$ for some $k \leq 2n$. Thus, the grade of $Q$ is smaller than that
  of $P$.
  \end{proof}
We conclude this subsection with an example of a category which is locally finitely presentable
but not DCC.
\begin{example}
  The category $\Ab$ of abelian groups is not $\DCC$.  The group
  $\mathbb{Z}$ of integers is finitely presentable, but it has the
  following descending sequence of proper subgroups
  $\mathbb{Z}\supset 2 \mathbb{Z}\supset 4 \mathbb{Z} \supset \cdots$.
\end{example}

\subsection{Limits of $\omega^{\opp}$-chains in  $\DCC$-Categories}
The main point about locally finitely presentable $\DCC$-categories in this paper is that 
they have finitary $\omega^{\opp}$-limits as we define them just below.  This will be used 
subsequently to prove our first main result \Cref{T:main}.

\begin{defn}
\label{D:fO}
  A category has \emph{finitary $\omega^{\opp}$-limits} provided that
  for every limit $\ell_n \colon L\to A_n$ of an $\omega^\opp$-chain,
  and every subobject $m\colon M\to L$ with $M$ finitely presentable,
  some morphism $\ell_n\o m: M \to A_n$ is monic.
\end{defn}
\begin{proposition}\label{P:ACDC}
  Every $\DCC$-category has finitary $\omega^{\opp}$-limits.
\end{proposition}
\begin{proof}
  Let $\ell_n\colon L \to A_n$ be a limit cone of an $\omega^{\opp}$-chain $D =
  (A_n)$ with connecting morphisms $a_{n+1}\colon A_{n+1} \to
  A_n$.  Given a finitely presentable object $M$ and a monomorphism $m\colon M \monoto
  L$, factorize $\ell_n\o m$ as a strong epimorphism $e_n\colon M\epito
  B_n$ followed by a monomorphism $u_n \colon B_n\monoto
  A_n$ (\Cref{rem-LP}\ref{rem-LP-2}).  We obtain a subchain $(B_n)$ of
  $(A_n)$ with connecting maps~$b_n$ given by the diagonal fill-ins, as shown below:
  \begin{equation}\label{FP-om}
    \begin{tikzcd}
      M \ar[->>]{dd}[swap]{e_{n}}
      \ar[->>]{r}{e_{n+1}}  
      &  B_{n+1}   \ar[>->]{d}{u_{n+1}} \ar[->>,ddl, "b_n" description]\\ 
      & A_{n+1}   \ar{d}{a_{n}}  \\
      B_n \ar[>->]{r}[swap]{u_n}  & A_{n} 
    \end{tikzcd}
  \end{equation}
    
  Notice that $b_n$ is a strong epimorphism, since so is $e_n$.  We thus have a descending
  chain~$(B_n)$ of strong quotients of the finitely presentable object $M$:
  $e_0 \geq e_1 \geq e_2 \geq \cdots$.  By the DCC condition, there is some $k$ such that for
  $n\geq k$, $b_n$ is an isomorphism.  For $n\geq k$, let $b_{n,k}\colon B_n\to B_k$ be the
  evident composition, and note that $b_{n+1,k} = b_{n,k}\o b_n$.  Thus, for all $n\geq k$, the
  triangle below commutes, where the lower part commutes by~\eqref{FP-om}:
  \begin{equation}\label{allfigs}
    \begin{tikzcd}[column sep = 10]
      &  &  B_k \ar{dl}[swap]{b_{n,k}^{-1}} \ar{dr}{b_{n+1,k}^{-1}} & & \\
      & B_n  \ar{dl}[swap]{u_n} & & \ar{ll}{b_n}  B_{n+1}\ar{dr}{u_{n+1}} & \\
      A_n & && & \ar{llll}{a_n} A_{n+1} 
    \end{tikzcd}
  \end{equation}

  Let $D'$ be the $\omega^{\opp}$ chain $(A_n)_{n\geq k}$.  This is a shortening of our
  original $\omega^{\opp}$-chain $D$, and so its limit is $\ell_n\colon L\to A_n$ ($n \geq k)$.
  The commutativity of all figures \eqref{allfigs} shows that we have a cone
  $(u_n\o b_{n,k}^{-1})_{n\geq k}$. Thus, there exists $b\colon B_k\to L$ such that
  \[
    \ell_n \o b = u_n\o b_{n,k}^{-1} \qquad (n\geq k).
  \]
  Consider the following diagram for $n\geq k$:
  \begin{equation}\label{sqtri}
    \begin{tikzcd}
      & B_k \ar{d}[swap]{b} \ar{r}{b^{-1}_{n,k}} & B_n  \ar{d}{u_{n}}
      \\
      M \ar{r}{m}  \ar{ru}{e_k} 
      & 
      L \ar{r}{\ell_n}  
      & A_n
    \end{tikzcd}
  \end{equation}
  The square commutes, and we now prove that so does the outside.  We show that for all
  $n \geq k$ and all $0\leq i \leq n -k$,
  \begin{equation}\label{ind}
    u_{n} \o b_{n, n-i}^{-1}\o e_{n-i} = \ell_{n}\o m.
  \end{equation}
  We argue by induction on $i$.  For $i = 0$, this holds using $b_{n,n} = \id$ and the
  factorization $u_n \o e_n = \ell_n \o m$.  Assume \eqref{ind} for $i$.  Fix $n \geq k$ such
  that $n -k \geq i+1$.  Then
  \[
    \begin{array}{cll}
      & u_{n} \o b_{n, n-(i+1)}^{-1}\o e_{n-(i+1)}  \\
      = &  
      u_{n} \o b_{n, n-i}^{-1}\o b_{n-i-1}^{-1}\o e_{n-i -1}
      & \mbox{since $b_{n,n-i-1} =   b_{n-i-1} \o b_{n,n-i } $}
      \\
      = & u_{n} \o b_{n, n-i}^{-1}\o e_{n-i} 
      & \mbox{since $e_{n-i-1} = b_{n-i-1} \o e_{n-i}$}
      \\
      = & \ell_{n}\o m  & \mbox{by induction hypothesis}
    \end{array}
  \]
  The induction completed, we take $i =n-k$ in~\eqref{ind} to see the commutativity of the
  outside of \eqref{sqtri} for all $n$.  Since the limit cone $(\ell_n)_{n\geq k}$ is
  collectively monic, the triangle commutes: $m = b\o e_k$.  As $m$ is monic, so is $e_k$.
  Thus, $\ell_k \o m = u_k \o e_k$ is also monic, as desired.
\end{proof}

\section{Terminal Coalgebras in $\omega+\omega$ Steps}
\label{S:omega+omega}

We are ready to state and prove the first main theorem of this paper.
\begin{theorem}\label{T:main}
  For every $\DCC$-category and every finitary endofunctor $F\colon\A \to\A$ preserving
  nonempty binary intersections, the terminal-coalgebra chain converges in
  $\omega+\omega$~steps.
\end{theorem}
\begin{proof}
  We will apply \Cref{P:modest}. Due to \cref{R:nonempty}\ref{R:nonempty:3} we can assume
  without loss of generality that $V_i$ is not strictly initial for any $i \leq \omega + \omega$.
  \begin{enumerate}
  \item\label{T:main:1} We first show that the canonical morphism
    $m\colon V_{\omega+1}\to V_{\omega}$ is monic.  Consider a parallel pair
    $q,q'\colon Q \parto FV_{\omega}$ such that $m\o q = m\o q'$.  We prove that $q = q'$. By
    \Cref{rem-LP}\ref{rem-LP-5}, we may assume that $Q$ is a finitely presentable
    object. Using that $V_\omega$ can be assumed not to be strictly initial and
    \cref{R:nonempty}\ref{R:nonempty:2a}, we may
    express~$V_{\omega}$ as a directed colimit of nonempty finitely presentable subobjects, say
    $m_t\colon M_t\monoto V_{\omega}$ ($t\in T$).  Since~$F$ is finitary,
    $Fm_t\colon FM_t\to FV_{\omega}$ is also a directed colimit.  Hence, $q$ and $q'$ factorize
    through~$Fm_t$ for some $t$.  We denote the factorizing morphisms by $r$ and $r'$,
    respectively. It is sufficient to show that they are equal. To this end consider the
    following diagram:
    \begin{equation}\label{eq:twotri}
      \begin{tikzcd}[column sep = 35]
        & FM_t\ar{d}{Fm_t}
        &&
        V_{\omega} \ar{d}{v_{\omega,i+1}}
        \\
        Q
        \ar[yshift=2pt]{r}{q}
        \ar[yshift=-2pt]{r}[swap]{q'}
        \ar[yshift=4pt]{ur}[near end, inner sep = 1.5]{r}
        \ar[yshift=0pt,xshift=2pt]{ur}[swap,near end, inner sep=0]{r'}    
        &
        FV_{\omega} 
        \ar{rr}[swap]{v_{\omega+1,i+1} = Fv_{\omega,i}}
        \ar{rru}{m}
        &&
        FV_{i} = V_{i+1} 
      \end{tikzcd}
    \end{equation}
    The limit $v_{\omega,i}\colon V_{\omega}\to V_i$ is finitary (\Cref{P:ACDC}).  Thus, there
    is some $i$ so that \mbox{$v_{\omega,i}\o m_t\colon M_t \to V_{i}$} is monic, and this
    monomorphism is nonempty.  Since $F$ preserves nonempty binary intersections, it preserves
    nonempty monomorphisms (\cref{R:pres}).  Hence, the following morphism is monic:
    \[
      \ell = \big(
      FM_t
      \xra{Fm_t}
      FV_\omega
      \xra{Fv_{\omega,i}}
      FV_{i}
      \big).
    \]
    It is enough to show that $\ell$ merges $r$ and
    $r'$. The triangle on the right in~\eqref{eq:twotri} commutes. Thus we obtain
    \[
      \ell = F v_{\omega,i}\o Fm_t = v_{\omega,i+1}\o m \o Fm_t. 
    \]
    Using that $m$ merges $q$ and $q'$, we see that $\ell$ merges $r$ and
    $r'$:
    \begin{align*}
      \ell\o r & = v_{\omega,i+1}\o m \o Fm_t \o r \\
      &= v_{\omega,i+1}\o m \o q \\
      &= v_{\omega,i+1}\o m \o q'\\
      &= v_{\omega,i+1}\o m \o Fm_t \o r'\\
      &= \ell \o r'.
    \end{align*}
    Since $\ell$ is monic, we have $r = r'$ whence $q = q'$, as desired.
    
  \item\label{T:main:2} Next, we prove that $F$ preserves nonempty intersections of
    $\omega^\opp$-chains of subobjects.  Consider such a chain $a_i\colon A_{i+1}\monoto
    A_i$, and let its limit cone be $\ell_i\colon L\monoto A_i$, where $L$ is
    not strictly initial. It follows that neither is any of the $A_i$. Take a cone
    \[
      q_i \colon Q\to FA_i \quad (i<\omega).
    \]
    Our task is to find a morphism $q\colon Q \to FL$ such that
    $q_i = F\ell_i \o q$ for all $i$.
    (This is unique: all maps $\ell_i$ are nonempty monic, whence all $F\ell_i$ are monic.)

    Using \Cref{rem-LP}\ref{rem-LP-4}, we can assume, without loss
    of generality, that $Q$ is finitely presentable: for a
    general object $Q$, express it as a colimit of finitely
    presentable subobjects~$Q_t$, and use the result which we prove for each $Q_t$.

    Choose a nonempty, finitely presentable subobject $c\colon C \monoto L$
    (\cref{R:nonempty}\ref{R:nonempty:2}). Note that this gives nonempty, finitely
    presentable subobjects
    \[
      c_i = \big(
      \begin{tikzcd}[cramped,column sep=20]
        C \ar[>->]{r}{c}
        &
        L
        \ar[>->]{r}{\ell_i}
        &
        A_i
      \end{tikzcd}
      \big)
      \qquad\text{for every $i < \omega$},
    \]
    which, moreover, form a cone: $c_i = a_i \cdot c_{i+1}$ for every $i < \omega$. 
    
    By recursion on $i$ we define a subchain $(B_i)$ of $(A_i)$ given by intersections
    \[
      \begin{tikzcd}
        B_0  \ar[>->]{d}[swap]{u_0}
        &
        B_1
        \ar[>->]{l}[swap]{b_0}
        \ar[>->]{d}{u_1}
        \pullbackangle{225}
        &
        B_2
        \ar[>->]{l}[swap]{b_1}
        \ar[>->]{d}{u_2}
        \pullbackangle{225}
        &
        \cdots 
        \ar[>->]{l}[swap]{b_2}
        \\
        A_0
        &
        A_1
        \ar[>->]{l}[swap]{a_0}
        &
        A_2
        \ar[>->]{l}[swap]{a_1}
        &
        \cdots 
        \ar[>->]{l}[swap]{a_2} 
      \end{tikzcd}
    \]
    together with a cone $p_i\colon Q\to FB_i$ such that
    $F u_i \o p_i = q_i$ and a cone $m_i\colon C \monoto B_i$ such that $c_i = u_i \cdot
    m_i$; this shows that all the intersections are nonempty. 

    To define $B_0$ and $u_0$, express $A_0$ as a directed colimit of all its finitely
    presentable subobjects $u\colon B\monoto A_0$ that contain $c_0$
    (\Cref{rem-LP}\ref{rem-LP-4}).  Then use that $F$ preserves this colimit: for the morphism
    $q_0\colon Q\to FA_0$ we may find a subobject $u_0\colon B_0\monoto A_0$ containing~$c_0$
    such that $q_0$ factorizes through $F u_0$ via some $p_0\colon Q\to FB_0$, say:
    \[
      \begin{tikzcd}
        &
        F B_0
        \ar[>->]{d}{Fu_0}
        \\
        Q
        \ar[dashed]{ru}{p_0}
        \ar{r}{q_0}
        &
        F A_0 
      \end{tikzcd}
    \]
    Since $u_0$ contains the subobject $c_0$, we have a monomorphism $m_0\colon C_0
    \monoto B_0$ such that $c_0 = u_0 \cdot m_0$.

    In the induction step we are given $B_i$, $u_i$, $p_i$, and $m_i$. Form the intersection of~$u_i$
    and $a_i$ to obtain $B_{i+1}$, $b_i$, and $u_{i+1}$ as shown in the left-hand square below:
    \[
      \begin{tikzcd} 
        B_i
        \ar[>->]{d}[swap]{u_i}
        &
        B_{i+1}
        \pullbackangle{225}
        \ar[>->]{l}[swap]{b_i}
        \ar[>->]{d}{u_{i+1}}
        &
        C
        \ar[>->,dashed]{l}[swap]{m_{i+1}}
        \ar[>->]{ld}{c_{i+1}}
        \ar[>->,shiftarr = {yshift=20}]{ll}[swap]{m_i}
        \\ 
        A_i
        &
        A_{i+1} 
        \ar[>->]{l}[swap]{a_i}  
      \end{tikzcd}
    \]
    The outside commutes by induction hypothesis: $u_i \cdot m_i = c_i = a_i \cdot
    c_{i+1}$. Hence, we obtain the monomorphism $m_{i+1}$ as indicated such that the upper part
    and right-hand triangle commute, as desired. Since $C$ is not strictly initial, neither is
    $B_{i+1}$, whence the intersection of $a_i$ and $u_i$ is nonempty.
    
    So by hypothesis, $F$ preserves the above pullback. Since the square below commutes
    \[
      \begin{tikzcd}
        F B_i  \ar[>->]{d}[swap]{Fu_i}
        &\ar{l}[swap]{p_i}  Q   \ar{d}{q_{i+1}} \ar[dl, "q_i" description]\\ 
        FA_i &\ar[>->]{l}[swap]{F a_i}  FA_{i+1} 
      \end{tikzcd}
    \]
    there is a unique morphism $p_{i+1}\colon Q\to FB_{i+1}$ such that 
    \[
      p_i = Fb_i \o  p_{i+1}
      \qquad\text{and}\qquad
      q_{i+1}= F u_{i+1}\o p_{i+1}.
    \]

    For all $i\leq j <\omega$, we form the composite morphism
    \[
      b_{j,i} = \big(
      \begin{tikzcd}[cramped, column sep = 30]
        B_j
        \ar[>->]{r}{b_{j-1}}
        &
        B_{j-1}
        \ar[>->]{r}{b_{j-2}}
        &
        \cdots
        \ar[>->]{r}{b_{i+1}}
        &
        B_{i+1}
        \ar[>->]{r}{b_i}
        &
        B_i
      \end{tikzcd}
      \big).
    \]
    We obtain a descending chain of subobjects
    $b_{j,0} \colon B_j \monoto B_0$ $(j < \omega)$ of the finitely
    presentable object $B_0$.  Since $\A$ is $\DCC$, there is some
    $k^*<\omega$ such that $b_{k^*,0}$ represents the same subobject as
    $b_{j,0}$ for every $j\geq k^*$. Hence,  the morphism $b_{j,k^*}$ is
    an isomorphism.
    
    The shortened $\omega^{\opp}$-chain $(A_i)_{i\geq k^*}$ has the limit cone $(\ell_i)_{i \geq k^*}$.  The morphisms
    \[
      h_i =
      \big(
      B_{k^*}
      \xra{b^{-1}_{i,k^*}}
      B_i
      \xra{u_i}
      A_i
      \big)
      \qquad (i\geq k^*)
    \]
    form a cone: we see that $h_i = a_i \o h_{i+1}$ from the commutativity of the diagram below:
    \[
      \begin{tikzcd}[column sep = 20]
        &
        B_{k^*}
        \ar[>->]{ld}[swap,inner sep = 0]{b_{i,k^*}^{-1}}
        \ar[>->]{rd}[inner sep = 0]{b_{i+1,k^*}^{-1}}
        \arrow[rounded corners, to path={
          -- ([xshift=-64.5]\tikztostart.center)
          -- ([xshift=-20]\tikztotarget.center)\tikztonodes  
          -- (\tikztotarget)}]%
        {ldd}[swap]{h_i}
        \arrow[rounded corners, to path={
          -- ([xshift=74.5]\tikztostart.center)
          -- ([xshift=25]\tikztotarget.center)\tikztonodes  
          -- (\tikztotarget)}]%
        {rdd}{h_{i+1}}
        \\
        B_i
        \ar{d}[swap]{u_i}
        &&
        B_{i+1}
        \ar[>->]{ll}{b_i}
        \ar{d}{u_{i+1}}
        \\
        A_i
        &&
        A_{i+1}
        \ar{ll}{a_i}
      \end{tikzcd}
    \]
    So there is a unique morphism $h\colon B_{k^*} \to L$ such that
    $\ell_i \o h = u_i\o b^{-1}_{i,k^*}$ for $i\geq k^*$.
    
    The desired morphism is 
    \[
      q =
      \big(
      Q
      \xra{p_{k^*}}
      FB_{k^*}
      \xra{Fh}
      FL
      \big).
    \]
    In order to verify that $q_i = F\ell_i \o q$, it is sufficient to show this for
    $i\geq k^*$; it then follows also for all $i < k^*$, since the $q_i$ and $\ell_i$ form
    cocones:
    \[
      q_i = Fa_{k^*,i} \cdot q_{k^*} = Fa_{k^*,i} \cdot F\ell_{k^*} \o q = F\ell_i \o q
      \qquad
      \text{for $i < k^*$.}
    \]
    Now observe first that since~$(p_i)$ form a cone of
    $(FB_i)$, we have
    \[
      Fb_{i,k^*}\o p_i = p_{k^*}.
    \]
    By definition of $h$, we also have 
    $u_i = \ell_i \o h \o b_{i,k^*}$.
    Therefore for all $i\geq k^*$, we obtain
    \[
      q_i = Fu_i \o p_i = F\ell_i \o Fh \o Fb_{i,k^*}\o p_i = F\ell_i \o
      Fh \o p_{k^*} = F\ell_i \o q.
    \]
    This extends to all $i < k^*$, the argument is as above.
  \end{enumerate}
  Having checked all the conditions in~\Cref{P:modest}, we are done.
\end{proof}

\begin{corollary}\label{C:set-kvec}
  For every finitary endofunctor on $\Set$ or $\KVec$, the terminal-coalgebra chain
  converges in $\omega+\omega$ steps. 
\end{corollary}

\noindent
Indeed, every set functor preserves nonempty binary intersections~\cite[Prop.~2.1]{trnkova69},
and every endofunctors on $\KVec$ preserves finite intersections~\cite[Ex.~4.3]{AdamekSousa24}.

The following example demonstrates that without extra conditions there is no uniform bound on
the convergence of the terminal-coalgebra chain for finitary functors on locally finitely
presentable categories.
\begin{example}
  For every ordinal $n$, we present a locally finitely presentable category and a finitary
  endofunctor which needs $n$ steps for its terminal-coalgebra chain to converge. The category
  is the complete lattice of all subsets of $n$ (considered as the set of all ordinals $i<n$).
  The functor is the monotone map $F$ defined by $F\emptyset = \emptyset$, and on all other
  sets $X \subseteq n$,
  \[
    FX = X \setminus\set{\min X}.
  \]
 This is monotone, since given $X \subseteq Y$, if $X$ contains $\min
 Y$, then $\min X = \min Y$. The only coalgebra for $F$ is empty; thus $\nu F = \emptyset$.

 The functor $F$ is finitary because for every directed union $X= \bigcup_{t\in T} X_t$ of
 nonempty subsets of $n$, $\min X$ lies in some $X_t$. Since the union is directed, $X$ is also
 a union of all~$X_s$ where $s \geq t$. Then $\min X$ is contained in all $X_s$. It follows
 that $\min X_s = \min X$, thus $FX_s = X_s \setminus \set{\min X}$ for all $ s \geq
 t$. Consequently,
 \[
   \colim FX_s = \colim X_s \setminus\set {\min X} = X\setminus
   \set{\min X} = FX.
 \]
 The terminal-coalgebra chain $V_i$ is given by $V_0 = n$ and $V_i= n\setminus i$ for all
 $ 0<i<n$, which is easy to prove by transfinite induction. Thus, that chain takes precisely
 $n$ steps to converge to the empty set, the terminal coalgebra for $F$.
\end{example}

\subsection{Finitary Endofunctors on Metric Spaces}

We have seen in~\cref{T:main} a result which gives a sufficient condition for an endofunctor to
have a terminal coalgebra in $\omega+\omega$ steps.  This result does not apply to $\Met$, the
category of extended metric spaces (distances can be $\infty$) and non-expanding maps, 
since that category is not locally finitely presentable; in fact,
the empty space is the only finitely presentable object~\cite[Rem.~2.7]{AdamekR22}.  However,
for finitary endofunctors on $\Met$, we are able to prove an analogous result.  To do this, we
work with finite spaces in lieu of finitely presentable objects. Moreover, note that there is a
bijective correspondence between subobjects of~$M$ represented by isometric embeddings and
subspaces of $M$ (i.e.~subsets $S\subseteq M$ equipped with the metric of $M$ restricted to
$S$): indeed, for every subspace $S \subseteq M$, the inclusion $S \subto M$ is an isometric
embedding, and conversely, if $f\colon M' \to M$ is any isometric embedding, then it is monic
and represents the same subobject of $M$ as the inclusion map $f[M'] \subto M$ of the subspace
on the image of $f$. We need the following fact.

\begin{lemma}\label{L:finsub}
  Every metric space is a directed colimit of the diagram of all its
  finite subspaces.
\end{lemma}
\begin{proof}
  Fix a metric space $M$. Let $f_S\colon S \to A$ be a cocone of the diagram of all finite
  subspaces $m\colon S \subto M$ of $M$. Then there is a unique map $f\colon M \to A$ which
  restricts to $f_S$ for each finite subspace: $f \o m = f_S$. This map is non-expanding: given
  elements $x,y \in M$, let $S$ be the subspace of $M$ given by $\set{x,y}$. Since
  $f_S \colon S \subto A$ is non-expanding, the distance of $f(x)$ and $f(y)$ in $A$ is at most
  equal to the distance of $x$ and $y$ in $M$, that is, in $S$.
\end{proof}
\begin{rem}
  One easily derives that, given a metric space $M$ and a finite subspace $S \subto M$, the
  metric space $M$ is the directed colimit of the diagram of all its finite subspaces containing~$S$
  (cf.~\cref{rem-LP}\ref{rem-LP-4}).
\end{rem}
\begin{proposition}
  The category $\Met$ has finitary $\omega^\opp$-limits in the following sense: for every limit
  $l_n\colon L \to A_n$ ($n < \omega$) of an $\omega^\opp$-chain and every finite subobject
  $m\colon M \monoto L$, some morphism $l_k \cdot m\colon M \to A_k$ is monic.
\end{proposition}
\begin{proof}
  This follows since $\Set$ has finitary $\omega^\opp$-limits (\cref{P:ACDC}) because the
  forgetful functor into $\Set$~(1) preserves limits and (2)~preserves and reflects
  monomorphisms.
\end{proof}
\begin{lemma}\label{L:emb}
  Let $F$ be a finitary endofunctor on $\Met$ preserving isometric
  embeddings. For every non-expanding map $q\colon Q \to FM$ where $Q$
  is finite, there exists a factorization through~$Fm$ for some finite
  subspace $m\colon S \subto M$:
  \[
    \begin{tikzcd}
      &
      FS
      \ar[>->]{d}{Fm}\\
      Q
      \ar{r}{q}
      \ar[dashed]{ru}
      &
      FM
    \end{tikzcd}
  \]
\end{lemma}
\begin{proof}
  \begin{enumerate}
  \item\label{L:emb:1} Given a directed diagram $D$ in $\Met$ of metric spaces $A_i$
    ($i \in I$) and subspace embeddings $a_{i,j}\colon A_i \subto A_j$
    ($i \leq j$), the colimit $C$ is the union $\bigcup_{i\in I} A_i$
    with the metric inherited from the subspaces: given
    $x, y \in \bigcup_{i\in I} A_i$, the distance $d(x,y)$ in $C$ is
    their distance in $A_i$ for any $i \in I$ such that $x,y \in A_i$.

    An analogous statement holds for a directed diagram whose connecting morphisms are
    isometric embeddings.

  \item
    Given $q\colon Q \to FM$, let $D_M$ be the directed diagram of all
    finite subspaces of $M$ and all inclusion maps. Its colimit is $M$
    using \cref{L:emb:1}. Since $F$ is finitary, $FM$ is the colimit
    of $FD_M$, which is a directed diagram of isometric
    embeddings. The image $q[Q]$ is a finite subspace of $FM$. By
    \cref{L:emb:1}, there exists a finite subspace $m\colon S \subto
    M$ such that the colimit injection $Fm$ of $FC = \colim FD_M$
    satisfies $q[Q] \subseteq Fm[FS]$. Let $q'\colon Q \to FS$ be the
    unique map such that $q = Fm \o q'$. Then $q'$ is non-expanding
    because so is $q$, and $Fm$ is an isometric embedding. 
    \qedhere
  \end{enumerate}
\end{proof}

The following theorem has a proof analogous to that of
\Cref{T:main}. Recall that a functor preserving nonempty binary
intersections also preserves monomorphisms. This time, we need the extra
condition that also isometric embeddings are preserved:
\begin{theorem}\label{T:MetOplusO}
  For every finitary endofunctor $F$ on $\Met$ preserving nonempty binary intersections and
  isometric embeddings, the terminal-coalgebra chain converges in $\omega + \omega$~steps.
\end{theorem}
\begin{proof}
  We again use \Cref{P:modest}. By \cref{R:nonempty}\ref{R:nonempty:3}, we may assume without
  loss of generality that all $V_i$, $i \leq \omega + \omega$ are nonempty.
  \begin{enumerate}
  \item The morphism $m \colon V_{\omega+1} \to V_\omega$ is monic:
    given a non-empty space $Q$ and $q, q'\colon Q \to FV_\omega$ such that $m \o q = m\o q'$,
    we prove that $q = q'$. By \Cref{L:finsub}, we may assume that $Q$ is
    finite. Thus, there exists a nonempty finite subspace
    $m_t\colon M_t \subto Q$ such that both $q$ and~$q'$ factorize
    through $Fm_t$: we have morphisms $r, r'\colon Q \to FM_t$ such
    that $q = Fm_t \o r$ and $q' = Fm_t \o r'$. As in
    \cref{T:main:1} of the proof of
    \Cref{T:main}, we derive $r = r'$. Since $Fm_t$ is
    monic (because $F$ preserves nonempty binary intersections), this proves $q = q'$.
    
  \item We prove that $F$ preserves nonempty limits of $\omega^\opp$-chains of 
    monomorphisms
    \[
      a_i\colon A_{i+1} \monoto A_i
      \qquad\text{($i < \omega$)}.
    \]
    Let $\ell_i\colon L \to A_i$ be the limit cone. Given a cone
    $q_i\colon Q \to FA_i$ ($i < \omega$), we only need to find a
    morphism $q\colon Q \to FL$ such that $q_i = F\ell_i \o q$ ($i <
    \omega$).

    Using \Cref{L:finsub}, we may assume that $Q$ is finite. We define a subchain $(B_i)$ of
    $(A_i)$ carried by nonempty binary subspaces $u_i\colon B_i \subto A_i$, together with
    cones $p_i \colon Q \to FB_i$ and $m_i\colon C\to B_i$ such that $Fu_i \o p_i = q_i$ and
    $c_i = u_i \o m_i$.  We use recursion analogous to that in \cref{T:main:2} of the proof of
    \Cref{T:main}.  In order to define $B_0$, $u_0$, and $p_0$, use \Cref{L:emb}: there is a
    nonempty binary subspace $u_0\colon B_0 \subto A_0$ and a morphism $p_0\colon Q \to FB_0$
    such that $q_0 = Fu_0 \o p_0$.
    The induction step and the rest of the proof is as in \Cref{T:main}. 
    \qedhere
  \end{enumerate}
\end{proof}
\begin{example}\label{E:Hausdorff}
  \begin{enumerate}
  \item\label{E:Hausdorff:1} The Hausdorff functor $\H\colon \Met\to \Met$ maps a metric
    space~$X$ to the space~$\powf X$ of all finite subsets of $X$ equipped with the Hausdorff
    distance\footnote{The definition goes back to Pompeiu~\cite{Pompeiu05} and was popularized by
      Hausdorff~\cite[p.~293]{Hausdorff14}.} given by
    \[
      \bar{d}(S,T)
      =
      \max \left(\sup\nolimits_{x \in S} d(x,T), \sup\nolimits_{y \in T}d(y,S)\right),
      \qquad\text{for $S, T \subseteq X$ compact},
    \]
    where $d(x,S) = \inf_{y \in S} d(x,y)$. In particular
    $\bar d(\emptyset, T) = \infty$ for nonempty compact sets $T$. For
    a non-expanding map $f\colon X \to Y$ we have $\powf f\colon S \mapsto
    f[S]$.

    The Hausdorff functor clearly preserves isometric embeddings, and it preserves nonempty
    binary intersections: for every pair $A_1, A_2$ of subsets of a metric space $B$, the
    compact subsets of the space $A_1 \cap A_2$ are precisely the compact subsets of $B$
    contained in $A_i$ for $i = 1, 2$. However, $\H$ is not finitary on $\Met$: consider any
    infinite compact space $M$ (e.g.~$M = [0,1]$) and write it as a directed colimit. Then we
    have $M \in \H M$, but $M$ is not contained in~$\H S$ for any finite subspace $S \subto
    M$. This shows that \cref{L:emb} does not hold for $\H$ and $q\colon Q = \set{1} \to \H M$
    which picks $M$. Note that as an endofunctor on $\CMS$, the Hausdorff functor is
    finitary~\cite[Ex.~3.13]{AdamekEA15}.

  \item The finite power-set functor has a lifting
    $\powf\colon \Met \to \Met$ that maps a metric space~$X$ to the space~$\powf X$ of all
    finite subsets of $X$ equipped with the Hausdorff distance. This functor is clearly
    finitary; in fact, it is the free algebra monad for the variety of quantitative
    semilattices~\cite[Sec.~9]{MardareEA16}.

    We now show that it preserves isometric embeddings. Indeed, if $m\colon X \subto Y$ is the
    inclusion of a subspace, then $\powf m$ preserves distances: given finite subsets $S$ and $T$
    of the metric space $X$, then for every $x \in S$, we have that the distance $d(x,T)$ is the
    same in $X$ and~$Y$. By symmetry, the Hausdorff distance of $S$ and $T$ is also the same
    in~$\powf X$ and~$\powf Y$. 
  
    Finally, $\powf$ preserves nonempty binary intersections because it is a lifting of a set
    functor and since intersections of metric spaces are formed on the level of sets.
  \end{enumerate}
\end{example}
\begin{corollary}
  The lifted functor $\powf\colon \Met \to \Met$ has a terminal coalgebra
  $\nu \powf = V_{\omega + \omega}$.
\end{corollary}

We shall see in \cref{E:Haus-ter} that for the Hausdorff functor on $\Met$, the
terminal-coalgebra chain converges in $\omega +\omega$ steps, even though that functor is not
finitary.

\section{Kripke Polynomial Functors}
\label{S:Kripke}

We turn to the Kripke polynomial set functors.  The name stems from \emph{Kripke structures}
used in modal logic. Our definition below is a slight generalization of the (finite) Kripke
polynomial functors presented by Jacobs~\cite[Def.~2.2.1]{jacobs}. We admit arbitrary products
in lieu of just arbitrary exponents. (Kripke polynomial functors using the full power-set
functor were originally introduced by R\"{o}{\ss}iger~\cite{Roessiger00}.)
\begin{defn}\label{D:Kripke}
  The \emph{Kripke polynomial functors} $F$ are the set functors built
  from the finite power-set functor, constant functors and the
  identity functor, by using product, coproduct, and composition. In
  other words, Kripke polynomial functors are built according to the
  following grammar:
  \[
    \textstyle
    F ::= \powfin \mid  A \mid \Id \mid \prod_{i\in I} F_i \mid
    \coprod_{i\in I} F_i \mid FF,
  \]
  where $A$ ranges over all sets (and is interpreted as a constant
  functor) and $I$ is an arbitrary index set.
\end{defn}
\begin{rem}
\label{R:omit}
  The constant functors could be omitted from the grammar since they
  are obtainable from the rest of the grammar.  The constant functor
  with value $1$ is the empty product.  For each set $A$, the constant
  functor $C_A$ with value $A$ is then a coproduct, since  $A = \coprod_{a \in A} 1$.
\end{rem}
\begin{example}\label{E:fin-branch} 
\begin{enumerate}
\item  The Kripke polynomial functor $FX = \powfin(A \times X)$ is the type
  functor of finitely branching labelled transition systems with a set
  $A$ of actions.
\item The closely-related functor $FX = \powfin^{+}X$ of all nonempty finite sets is not a Kripke polynomial functor.
This follows from \Cref{P:Kripke-only}.    
\item 
The functor  $FX = \powfin^{+}X$  is finitary.
The functor $FX   = X^\N$, where $\N$ is the set of natural numbers, is a Kripke polynomial functor which is not finitary.
\end{enumerate}  
\end{example}

\begin{proposition}\label{P:Kripke-only}
  The only Kripke polynomial functors $F$ with $|F1| = 1$ are the powers of the identity functor.
\end{proposition}

\begin{proof}
  We prove by induction on the Kripke polynomial functor $F$ that if
  $|F1| = 1$, then there is some $n$ such that $F \simeq \Id^n$.  (With
  $n = 0$, we mean that $F\simeq C_1$.)  Here and below, we write
  $F\simeq G$ to indicate that $F$ and $G$ are naturally isomorphic
  functors.

  Our base cases are for $\Id$ and $\powf$ (see~\Cref{R:omit}).
  These cases are clear.  Thus, it is sufficient to prove that given
  $F$ as a product, coproduct or composite of functors $F_i$
  ($i \in I$) satisfying the proposition, then from $|F1|=1$ it
  follows that $F\simeq \Id^n$ for some cardinal $n$.
  \begin{enumerate}  
  \item Let $F = \prod_{i\in I} F_i$.  Since $|F1| = 1$, the same
    holds for all $F_i$.  Thus, we have cardinals $n_i$ for $i\in I$
    such that $F_i \simeq \Id^{n_i}$.  We conclude that
    $F\simeq \Id^n$ for $n = \sum_{i\in I} n_i$.  (The index set $I$
    might well be empty, and in that case $n = 0$ and
    $F \simeq C_1 \simeq \Id^0$.)

  \item Suppose that $F = \coprod_{i\in I} F_i$.  There is some
    $j\in I$ such that $|F_j 1| = 1$, and $F_i 1 = \emptyset$ for
    $i \neq j$.  This implies that $F_i = C_{\emptyset}$ for
    $i \neq j$: given a set $X$ the function $f\colon X\to 1$ is
    mapped to $F_i f\colon FX \to \emptyset$.  Hence
    $F_i X = \emptyset$.  We conclude that
    $F \simeq F_j \simeq \Id^{n_j}$.

  \item Let $F = G\o H$.  If $H 1 = \emptyset$, then the argument in
    the last item shows that $H = C_{\emptyset}$.  Thus,
    $F = G \o C_{\emptyset}$ is a constant functor.  Since $|F1| = 1$,
    we see that $F \simeq C_1$.  If $H 1\neq \emptyset$, then the
    split epimorphism $f\colon H 1 \to 1$ is mapped to an epimorphism
    \[
      Ff\colon F1 \to G 1.
    \]
    Since $Ff$ is surjective, $|G 1| = 1$.  By induction hypothesis,
    there is some $n$ such that $G \simeq \Id^n$. Hence, we have 
    $|H 1|^n = |F1 | =1$, which implies that $|H1| = 1$ or else
    $n=0$. In the first case,  we have some $m$ such that
    $H \simeq \Id^m$. It follows that
    $F \simeq (\Id^n)\o (\Id^m) = \Id^{m n}$. 
    In the second case, we obtain $F\simeq \Id^0 \o H = C_1 \o H = C_1$.  \qedhere
  \end{enumerate}
\end{proof}
\begin{rem}\label{R:finitary-Kripke}
  Recall from \cref{S:lfp} that an endofunctor is finitary if it preserves directed colimits.
  Worrell~\cite{worrell:05} proved that for every finitary set functor, the terminal-coalgebra
  chain converges in $\omega+\omega$ steps. We shall proceed to prove a version of Worrell's result
  but for Kripke polynomial functors.
\end{rem}
\begin{proposition}\label{P:Kripke}
  For every Kripke polynomial functor $F$, the terminal-coalgebra chain
  converges in $\omega+\omega$ steps: $\nu F = V_{\omega + \omega}$.%
\end{proposition}
\begin{proof}
  We use \Cref{P:modest}, taking $\M$ to be the class of all monomorphisms.
  \begin{enumerate}
  \item\label{P:Kripke:1} We first observe that $F$ preserves
    monomorphisms and intersections of monomorphisms.  This is clear
    for constant functors and for $\Id$, and it is easy to see for
    $\powf$. Moreover, these properties are clearly preserved by
    product, coproduct, and composition.

  \item\label{P:Kripke:2} Let $(X_n)_{n < \omega}$ be an
    $\omega^\opp$-chain in $\Set$. Then the canonical morphism
    $m\colon F(\lim X_n) \to \lim FX_n$ is monic.  This is obvious for
    constant functors and $\Id$.  Let us check it for $\powfin$.
    Denote the limit projections by $\ell_n\colon \lim X_n \to X_n$
    and $p_n\colon \lim \powf X_n \to \powf X_n$ ($n < \omega$); the
    canonical morphism $m$ is unique such that
    $p_n \cdot m = \powf\ell_n$. Now given $S \neq T$ in
    $\powf(\lim X_n)$, without loss of generality we can pick
    $x \in T \setminus S$. Using that the $\ell_n$ are jointly monic,
    for every $s \in S$ we can choose $n< \omega$ such that
    $\ell_n(x) \neq \ell_n(s)$. Since $S$ is finite, this choice can
    be performed independently of $s \in S$. Thus,
    $\ell_n(x) \not\in \ell_n[S]$, and hence
    $\powf\ell_n(T)\neq \powf(S)$. Thus, $\powf\ell_n$ is a jointly
    monic family.  Since $p_n \cdot m = \powf\ell_n$, we see that $m$
    is monic.

  \item\label{P:Kripke:2.5} An induction on Kripke polynomial functors
    $F$ now shows that $m\colon V_{\omega+1}\to V_{\omega}$ is monic.
    We have seen this for the base case functors in
    \Cref{P:Kripke:2}.  The desired property that $m$ is monic is
    preserved by products, coproducts, and composition. In particular,
    for a composition $FG$ note that the canonical morphism for $FG$
    is the composition
    \[
      FG(\lim X_n) \xra{Fm} F(\lim GX_n) \xra{m'} \lim FGX_n,
    \]
    where $m$ is the canonical morphism for $G$ with respect to the given
    $\omega^\opp$-chain and $m'$ the one for $F$ and the
    $\omega^\opp$-chain $(GX_n)_{n< \omega}$. So this morphism
    $m'\o Fm$  is monic since
    both $m$ and $m'$ are so and $F$ preserves monomorphisms by
    \Cref{P:Kripke:1}. 

  \item\label{P:Kripke:3} Since $F$ preserves monomorphisms, we see
    that $Fm$, $FFm$, \dots\/ are monic. We obtain a decreasing chain of
    subobjects $ V_{\omega+n } \monoto V_\omega$.  Therefore, the
    limit $V_{\omega + \omega} = \lim_{n < \omega} V_{\omega + n}$ is
    simply the intersection of these subobjects. From
    \Cref{P:Kripke:1} we know that $F$ preserves this limit. It
    follows that $\nu F = V_{\omega + \omega}$, as desired.\qedhere
  \end{enumerate}
\end{proof}
\begin{corollary}\label{C:K-I}
  Every Kripke polynomial set functor $F$ has an initial algebra.
\end{corollary}

\noindent
This follows from \Cref{P:Kripke}, \Cref{E:cms}\ref{E:cms:set},
and \Cref{T:initial} since $F$ preserves monomorphisms.

\begin{corollary}
  Kripke polynomial functors have a cofree comonad obtained
  in $\omega + \omega$~steps.
\end{corollary}
\noindent
This follows from \Cref{P:Kripke} and \Cref{C:cofree}: if $F$ is a
Kripke polynomial functor, then so is $F(-) \times Y$ for every set $Y$.

\begin{example}
  \begin{enumerate}
  \item For $FX = X+1$, a cofree comonad
    $F_\sharp$ is obtained in $\omega$ steps: 
    \[
      F_\sharp X = X^* + X^\omega
    \]
    is the set of all finite and infinite words on the set $X$.
    
  \item For a polynomial set functor $H_\Sigma$, a cofree comonad is
    obtained in $\omega$ steps: $F_\sharp Y$ is the set of all
    $\Sigma_Y$-trees where $\Sigma_Y$ is the signature obtained from
    $\Sigma$ by adding a new constant symbol for every $y \in Y$.
  \end{enumerate}
\end{example}

\begin{example}
  In locally finitely presentable categories, $\omega^{\opp}$-limits
  need not be finitary in general. For example, the category $\Ab$ of
  abelian groups does not have that property: Consider the chain $A_n$
  of quotients of the additive group $\mathbb{Z}$ modulo the subgroups
  $2^n \mathbb{Z}$, represented by
  \[
    A_n = \set{0, 1, ..., 2^n -1}.
  \]
  The connecting morphisms $f_n\colon A_{n+1} \to A_n$ are given by
  \[
    f_n (t) = t \mod (2^n) \quad (t=0,...,2^{n+1}-1).
  \]
  The family of elements $1$ of $A_n$ ($n<\omega$) is compatible with
  that chain.  Thus, in the limit~$L$ there is a unique element $x$
  mapped by all $\ell_n \colon L \to A_n$  to $1$:
  $\ell_n(x) = 1 $ for all~$n$.  The subgroup~$M$ of~$L$ generated by
  $x$ is infinite. Indeed, $nx \neq 0$ for all $n$ because
  $l_{n+1}(nx) = n \neq 0 $ in~$A_{n+1}$. Thus, none of the limit maps
  $\ell_n$ restricts to a monomorphism $M\monoto A_n$. But $M$ is
  finitely presentable, since in $\Ab$ this is the same as being
  finitely generated.
\end{example}

\section{Trees and the Limit of the Terminal-Coalgebra Chain for $\powf$}
\label{S:app-worrell}

As mentioned in the Introduction, Worrell~\cite{worrell:05} described
the terminal coalgebra of $\powfin$ and the limit $V_{\omega}$ using
trees.  Worrell proved that $\powfin$ has a terminal coalgebra
consisting of the finitely branching strongly extensional trees (up to
isomorphism of trees). Moreover, the limit $V_\omega$ consists of all
compactly branching strongly extensional trees, where the notion of
compactness uses a particular pseudo-metric \eqref{psW} on the class
of all strongly extensional trees.  However, the proof of this
characterization result was not spelled out in his paper, or in
related papers such as~\cite{abramsky}.  We provide a full exposition of these
results.



A \emph{tree} is a directed graph $t$ with a distinguished node
$\root(t)$ from which every other node can be reached by a unique
directed path.
Every tree in our sense must have a root, so there is no empty tree.
All of our trees are \emph{unordered}, that is, there is no order on
the children of a node. We always identify isomorphic trees.
\removeThmBraces
\begin{defn}
 \label{D-SE}
  \begin{enumerate}
  \item We use the notation $t_x$ for the subtree of $t$ rooted in the
    node $x$ of $t$.
    \item A tree $t$ is \emph{extensional} if for every node $x$
      distinct children $y$ and $z$ of $x$ give different (that is,
      non-isomorphic) subtrees $t_y$ and $t_z$.

  \item A \emph{graph bisimulation} between two trees $t$ and $u$ is a
    relation between the nodes of $t$ and the nodes of $u$ with the
    property that  whenever $x$ and $y$ are related: (a) every child of $x$
    is related to some child of $y$, and (b) every child of $y$ is
    related to some child of $x$.
    
  \item A \emph{tree bisimulation} between two trees $t$ and $u$ is a
    graph bisimulation such that 
    \begin{enumerate}
    \item The nodes $\root(t)$ and $\root(u)$
    are related; the roots are not
      related to other nodes;
      and
    \item  whenever two nodes are related, their parents are also related.
    \end{enumerate}
    
  \item Two trees are \emph{tree bisimilar} if there is a tree
    bisimulation between them.

  \item A tree $t$ is \emph{strongly extensional}
    if every tree bisimulation on it is a
    subset of the diagonal
    \[
      \Delta_t = \set{(x,x): x\in t}.
    \]
    In other words, $t$ is strongly extensional iff distinct children
    $x$ and $y$ of the same node define subtrees $t_x$ and $t_y$ which
    are \emph{not} tree bisimilar.

  \end{enumerate}
\end{defn}
\resetCurThmBraces
\begin{rem}\label{rem:strongext}
  \begin{enumerate}
  \item Every composition and every union of tree bisimulations is again a tree
    bisimulation. In addition, the opposite relation of every tree bisimulation is a tree
    bisimulation: if $R$ is a tree bisimulation from $t$ to $u$, then $R^{\opp}$ is a tree
    bisimulation from $u$ to $t$. Consequently, the largest tree bisimulation on every tree is
    an equivalence relation.

  \item\label{rem:strongext:5} A subtree $s$ of a strongly extensional tree $t$ is strongly
    extensional. Indeed, if $R$ is a tree bisimulation on $s$, then $R\cup\Delta_t$ is a tree
    bisimulation on $t$.  Since $R\cup\Delta_t\subseteq \Delta_t$, 
    $R\subseteq \Delta_s$.

  \end{enumerate}
\end{rem}

\begin{proposition}\label{P:finite}
  A finite tree is extensional iff it is strongly extensional.
\end{proposition}
\begin{proof}
  It is clear that strong extensionality implies extensionality.  In
  the other direction let~$t$ be a finite extensional tree, and let
  $R$ be a tree bisimulation on it.  We claim that if $x$ and~$y$ are
  nodes and $x \mathbin{R} y$, then the corresponding subtrees~$t_x$
  and~$t_y$ are equal. First notice that every node of $t_x$ must be
  related by~$R$ to some node of $t_y$ (to see this, use induction on
  the depth of nodes, i.e.~their distance from the root) and vice
  versa. Thus, $t_x$ and $t_y$ have the same height, $n$ say. We now
  prove $t_x = t_y$ by induction on $n$. For $n = 0$, the result is
  obvious because nodes of height $0$ are leaves.  Assume our result
  for $n$, and let $x$ and $y$ be related by $R$ and of height
  $n+1$. Then by the induction hypothesis and extensionality of $t$,
  for every child $x'$ of $x$, there is a \emph{unique} child $y'$ of
  $y$ with ($x' \mathbin{R} y'$ and hence) $t_{x'} = t_{y'}$; and
  vice-versa. This implies that $t_x = t_y$.  It now follows that $t$
  is strongly extensional.
\end{proof}

\begin{lemma}
 \label{lem:w1} 
  If $t$ and $u$ are strongly extensional and related by a tree
  bisimulation, then~$t = u$.
\end{lemma}
\begin{proof}
  Let $R$ be a tree bisimulation between $t$ and $u$.  By
  \Cref{rem:strongext}, $R^{\opp} \o R$ is a tree bisimulation on
  $t$, whence $R^{\opp} \o R \subseteq \Delta_t$ by strong
  extensionality. But every node of $t$ is related to at least one
  node of $u$ (use induction on the depth of nodes) implying that
  $R^{\opp} \o R = \Delta_t$. Similarly, $R \o R^{\opp} =
  \Delta_u$. Thus, $R$ (is a function and it) is an isomorphism of
  trees, and we identify such trees.
\end{proof}
\begin{notation}\label{N:partial}
\begin{enumerate}
\item Let $\TT$ be the class of trees.  We define maps
  $\partial_n\colon \TT \to V_n= \powf^n 1$ as follows: $\partial_0$
  is the unique map to $1$, and given the map $\partial_n$ and a tree
  $t$, we put
  \begin{equation}
  \label{eq-partial-np}
    \partial_{n+1}(t) = \set{\partial_n(t_x ): \text{$x$ is a child of
        the root of $t$}}. 
  \end{equation}
  On the right we have a subset of $\powf^n 1$, and this is an element
  of $\powf^{n+1}1$.
  
\item The trees $t$ and $u$ are \emph{Barr equivalent} if 
  $\partial_n t = \partial_n u$ for all $n$.  We write $t\approx u$ in
  this case.
  
\item For every tree $t$, we define maps
  $\treepartial^t_n \colon t \to V_n = \powf^n 1$ in the following
  way: $\treepartial^t_0$ is the unique map $t\to 1$, and for all
  nodes $x$ of $t$,
  $\treepartial^t_{n+1}(x) = \set{\treepartial^t_n(y): \mbox{$y$ is a
      child of $x$ in $t$}}$.  This family of maps $\treepartial^t_n$
  is a cone: we have
  $\treepartial^t_n = v_{m,n} \cdot \treepartial^t_m$ for every
  connecting map $v_{m,n}\colon \powfin^m 1 \to\powfin^n 1$,
  $m \geq n$. Hence, there is a unique map
  $\treepartial^t_{\omega}\colon t\to V_{\omega}$ such that
  $\ell_n\o\treepartial^t_\omega = \treepartial^t_n$ for all $n$.
\end{enumerate}
\end{notation}
\begin{rem}\label{R:exttree}
  Note that $V_n = \powfin^n 1$ may be described as the set of all
  extensional trees of height at most $n$. Indeed, $1$ is described as
  the singleton set consisting of the root-only tree, and every finite
  set of extensional trees in $V_{n+1} = \powfin V_n$  is represented by
  the extensional tree obtained by tree-tupling the trees from the given set.
\end{rem}
\begin{rem}\label{R:treepartial}
  \begin{enumerate}
  \item\label{R:treepartial:1} If
    $\treepartial^t_{n+1}(a) = \treepartial^t_{n+1}(b)$, then for all
    children~$a'$ of $a$, there is some child $b'$ of $b$ and
    $\treepartial^t_{n}(a') = \treepartial^t_{n}(b')$.  This is easy
    to see from the definition of $\treepartial^t_{n+1}$.%
  \item For all trees $t$,
    \( \treepartial^t_i(\root(t))= \partial_i(t).  \)
Furthermore, let $b\colon t \to \TT$ be given
    by $b(x) = t_x$.  Then
    $\treepartial^t_i = \partial_i\o b$.%
  \end{enumerate}
\end{rem}
\begin{defn} \label{dcb} Let $x_0, x_1, \ldots, $ be a sequence of
  nodes in a tree $t$, and let $y$ also be a node in $t$.  We write 
  \(
    \lim x_n = y
  \)
  to mean
  that for every $n$ there is some $m$ such that
  $\treepartial^t_n(x_p) = \treepartial^t_n(y)$ whenever $p \geq m$.

  A tree $t$ is \emph{compactly branching} if for all nodes $x$ of
  $t$, the set of children of $x$ is \emph{sequentially compact}:
  for every sequence of $(y_n)$ of children of $x$
  there is 
  a subsequence $(w_n)$ of $(y_n)$ and 
  some child $z$ of $x$ such that $\lim {w_n} = {z}$.
\end{defn}

\begin{example}\label{E-sat}
  The following tree $t$ is not compactly branching:
  \[
    t\colon \qquad
    \begin{tikzpicture}[
      x=25mm,scale=.4,
      dot/.style={circle, fill, minimum size=#1, inner sep = 0, outer sep =0},
      dot/.default=4pt, 
      baseline = (B.base),
      sibling distance=60
      ]
      \node[dot] {}
      child{node[dot,label=left:$y_0$] (B) {}}
      child{node[dot,label=left:$y_1$] {}
        child{node[dot] {}}
      }
      child{node[dot,label=right:$\quad\cdots$,label=left:$y_2$] {}
        child{node[dot] {}
          child{node[dot] {}}
        }
      };
    \end{tikzpicture}
    \]
    To see this, consider the sequence $y_0$, $y_1$, $\ldots$.
    Note that for $n \geq m$, 
    $\treepartial^{t}_n(y_n) = 
    \partial_i(t_{y_n}) =
     t_{y_m}$.   
 We
    claim that  for every subsequence 
    $(y_{k_n})$
    of this sequence $(y_n)$ 
    there is no $y_p$ such that $\lim_n  y_{k_n} = y_p$.
      To simplify the notation, we only verify this for
    the sequence $(y_n)$ itself.  It does not converge to any
    fixed element $y_m$ because for $p > m$,
    \[
      \treepartial^{t}_p(y_m)
      = 
      \partial_p(t_{y_m})
      \neq
      \partial_p(t_{y_p})
      =
      \treepartial^{t}_p(y_p).
    \]
 
    In contrast, the following tree is compactly branching (also observe that
    $t \approx t'$):
    \[
      t'\colon \qquad
      \begin{tikzpicture}[
        x=25mm,scale=.4,
        dot/.style={circle, fill, minimum size=#1, inner sep = 0, outer sep =0},
        dot/.default=4pt, 
        baseline = (B.base),
        sibling distance=70
        ]
        \node[dot] {}
        child{node[dot,label=left:$z$] (B) {}
          child{node [dot] {}
            child{node[dot] {}
              child{node[dot,label=below:$\vdots$] {}
              }
            }
          }
        }
        child{node[dot,label=left:$y_0$] {}}
        child{node[dot,label=left:$y_1$] {}
          child{node[dot] {}}
        }
        child{node[dot,label=right:$\quad\cdots$,label=left:$y_2$] {}
          child{node[dot] {}
            child{node[dot] {}}
          }
        };
      \end{tikzpicture}
    \]
    To check the compactness, consider a sequence of children of the
    root, say $({x_n})$.  If there is an infinite subsequence which
    is constant, then of course that sequence converges.  If not, then
    there is a subsequence of $({x_n})$, say $({w_n})$, where each
    $w_n$ is $y_k$ for some $k\geq n$.  In this case,
    $\lim_n ({w_n}) =z$.  This is because for all but
    finitely many $n$,
    $\treepartial^t_n(z) =
    \partial_n(t_z) = t_{w_n} = \partial_n(t_{w_n}) = \treepartial^t_n(w_n) $.
\end{example}

\begin{lemma}\label{seqK-part}
  If $t$ and $u$ are compactly branching, and if
  $\treepartial^t_{\omega}(\root(t)) = \treepartial^u_{\omega}(\root(u)) $, then
  there is a tree bisimulation between $t$ and $u$ which includes
  \(
    \set{(x,y) \in t\times u \colon \treepartial^t_{\omega}(x) =
      \treepartial^u_{\omega}(y)}.
  \)
\end{lemma}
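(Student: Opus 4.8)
The plan is to take the relation $R=\set{(x,y)\in t\times u: \treepartial^t_{\omega}(x)=\treepartial^u_{\omega}(y)}$ as the basic object, to show first that it is a \emph{graph} bisimulation, and then to cut it down to a genuine tree bisimulation that still relates the roots. The entire use of compact branching sits in the forth and back clauses, so I would dispatch those first and treat them as the crux.

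For the forth clause, fix $(x,y)\in R$ and a child $x'$ of $x$. Since $\treepartial^t_{\omega}(x)=\treepartial^u_{\omega}(y)$ we have $\treepartial^t_{n+1}(x)=\treepartial^u_{n+1}(y)$ for every $n$, and by the defining recursion of $\treepartial_{n+1}$ (see \autoref{N:partial}) this equates the set of $n$-behaviours of the children of $x$ with that of the children of $y$. Hence for each $n$ there is a child $y_n$ of $y$ with $\treepartial^u_n(y_n)=\treepartial^t_n(x')$. These witnesses need not stabilise, and here I invoke that $u$ is compactly branching: some subsequence $(w_k)=(y_{n_k})$ satisfies $\lim_k w_k=z$ for a child $z$ of $y$ in the sense of \autoref{dcb}. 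I then verify $(x',z)\in R$, i.e.\ $\treepartial^u_m(z)=\treepartial^t_m(x')$ for every $m$: for large $k$ convergence gives $\treepartial^u_m(w_k)=\treepartial^u_m(z)$, while once $n_k\ge m$ the connecting map $v_{n_k,m}$ applied to the equality $\treepartial^u_{n_k}(w_k)=\treepartial^t_{n_k}(x')$ gives $\treepartial^u_m(w_k)=\treepartial^t_m(x')$; comparing the two for large $k$ yields the claim. The back clause is symmetric, using that $t$ is compactly branching. This establishes that $R$ is a graph bisimulation, and by hypothesis it relates the roots.

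To upgrade this to a tree bisimulation I would pass to the sub-relation $B$ of those pairs $(x,y)$ whose root-paths $\root(t)=x_0,\dots,x_d=x$ and $\root(u)=y_0,\dots,y_d=y$ satisfy $(x_i,y_i)\in R$ for all $i\le d$. The forth/back property just proved lets one extend a matched pair of paths by one step, so $B$ is again a graph bisimulation; and by construction $B$ relates only nodes of equal depth along $R$-matched paths, so it relates the roots, relates a root to nothing else, and is closed under taking parents. Thus $B$ is a tree bisimulation with $(\root(t),\root(u))\in B$, which is precisely what \autoref{lem:w1} consumes.

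I expect two difficulties. The genuine obstacle is the forth/back step: the interplay between the chosen subsequence, the convergence of \autoref{dcb}, and the connecting maps $v_{n_k,m}$ must be arranged so that a single child $z$ works simultaneously at all levels $m$. The second point is a matter of the direction intended by ``includes.'' The full relation $R$ is a graph bisimulation and contains itself, but it need not be a tree bisimulation: $R$ may relate equal-behaviour nodes lying at different depths (for instance two leaves at different depths), whereas the root and parent clauses force a tree bisimulation to relate only nodes of \emph{equal} depth. The tree bisimulation one can actually produce is therefore the depth-matched refinement $B\subseteq R$; it contains every pair reachable from the roots by $R$-matched paths, in particular the root pair, and this is exactly what the subsequent appeal to \autoref{lem:w1} requires.
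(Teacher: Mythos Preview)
Your proof is correct and essentially identical to the paper's: the paper directly defines the relation you call $B$ (pairs with equal $\omega$-behaviour whose parents are already related, anchored at the root pair) and verifies the forth/back clauses by exactly the same compact-branching subsequence argument you give. Your observation about the direction of ``includes'' is apt---the paper's own tree bisimulation is, like your $B$, a \emph{subset} of the stated set rather than a superset, and this is all that the subsequent application via \autoref{lem:w1} needs.
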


\begin{proof}
  Given compactly branching trees $t$ and $u$, we define a relation 
  $R\subseteq t\times u$ inductively: 
  \begin{align*}
    x\mathbin{R} y \quad \mbox{iff} \quad &
    \text{(1)~$x = \root(t)$ and $y= \root(u)$, or $x$ and $y$ have $R$-related parents, and} \\
    &
    \text{(2)~$\treepartial^t_{\omega}(x) = \treepartial^u_{\omega}(y) $}.
  \end{align*}
  Let us check that $R$ is a tree bisimulation.  Suppose that $(x,y)$
  are related by $R$ as above, and let $x'$ be a child of $x$ in $t$.
  Using \Cref{R:treepartial}\ref{R:treepartial:1} we see that for each
  $n$, there is some child~$y'_n$ of $y$ in $u$ with
  $\treepartial^t_{n}(x') = \treepartial^u_{n}(y_n')$.  Consider the
  sequence $y'_0$, $y'_1$, $\ldots$.  Now
  $\treepartial^t_{n}(x') = \treepartial^u_{n}(y_m')$ if
  $m \geq n$, since $\treepartial^t_n$ and $\treepartial^u_n$ form
  cones:
  \(
    \treepartial^t_n(x')
    =
    v_{m,n} \cdot \treepartial^t_m(x')
    =
    v_{m,n} \cdot \treepartial^u_m(y_m')
    =
    \treepartial^u_n(y_m').
  \)
  By sequential compactness, there is a
  subsequence $z_0$, $z_1$, $\ldots$, and also some child $z^*$ of $y$
  such that $\lim z_n = z^*$.  Being a subsequence,
  $\treepartial^t_{n}(x') = \treepartial^u_{n}(z_m)$ whenever
  $m \geq n$.  Let us check that for all $n$,
  $\treepartial^t_n(x') = \treepartial^u_n(z^*)$.  To see this, fix
  $n$ and let $m\geq n$ be large enough so that for $p\geq m$,
  $\treepartial^u_n(z_p) = \treepartial^u_n(z^*)$.  Thus,
  $\treepartial^t_{n}(x') = \treepartial^u_{n}(z_m)
  =\treepartial^u_n(z^*) $.
  Hence, $\treepartial^t_\omega(x') = \treepartial^u_\omega(z^*)$,
  which shows $x' \mathbin{R} z^*$, as desired. 

  The other half of the verification that $R $ is a tree bisimulation is similar.
\end{proof}

\begin{corollary}
\label{corccb}
Two compactly branching trees are Barr equivalent (\Cref{N:partial})
iff they are tree bisimilar.
\end{corollary}
\begin{notation}\label{N:tx}
  In this section, $V_{\omega}$ denotes the limit of
  \eqref{eq:op-chain} for the finite power-set functor.
  \begin{enumerate}
  \item We take the elements of $V_{\omega}$ to be compatible
    sequences $(x_n)$. That is, $x_n \in \powfin^n 1$ and
    $\powfin^{n}!(x_{n+1}) = x_n$ for every $n < \omega$. To save on
    notation, we write $x$ for $(x_n)$.  We consider the relation
    $\leadsto$ on $V_{\omega}$ defined by
    \begin{equation}\label{leadsto}
      x\leadsto y \quad\mbox{iff}\quad
      \mbox{for all $n$, $y_{n}\in x_{n+1}$}.
    \end{equation}

  \item Let $L^+$ be the set of nonempty finite sequences from
    $V_{\omega}$.  We write such a sequence with the notation
    $\pair{x^1, \ldots, x^n}$.  We consider the relation $\Rightarrow$
    on $L^+$ defined by
    \[
      \pair{x^1, \ldots, x^n} \Rightarrow \pair{y^1, \ldots, y^m}
      \quad\mbox{iff}\quad \mbox{$m = n+1$, $x^1 = y^1$, $\ldots$,
        $x^n = y^n$, and $x^n \leadsto y^{n+1}$}.
    \]
    In other words, $m = n+1$,
    $ \pair{y^1, \ldots, y^{m-1}} = \pair{x^1, \ldots, x^n}$, and
    $x^n \leadsto y^m$.

  \item\label{N:tx:3} For each $x\in V_{\omega}$, let $\temptree_x$ be the tree
    whose nodes are the sequences $\pair{x,x^2, \ldots, x^n} \in L^+$
    whose first entry is $x$, with root the one-point sequence
    $\pair{x}$, and with graph relation the restriction of
    $\Rightarrow$.  For readers familiar with tree unfoldings of
    pointed graphs, $\temptree_x$ is the tree unfolding of the graph
    $(V_\omega,\leadsto)$ at the point $x$.
    
  \item Finally, let
    \begin{equation}\label{eq:T}
      T = \set{\temptree_{x} : x\in V_{\omega}}.
    \end{equation}
  \end{enumerate}
\end{notation}

Recall the connecting maps $\powfin^n!\colon \powfin^{n+1} 1 \to
\powfin^n 1$.
\begin{lemma}\label{lemma-seqseq}
  Let $x\in V_{\omega}$.  
  \begin{enumerate}
  \item\label{lemma-seqseq:1} For all $k$ and all
    $\pair{x,x^2, \ldots, x^n}\in \temptree_x$,
    $\treepartial^{\temptree_x}_k(\pair{x,x^2, \ldots, x^n}) = x^n_k$.

  \item\label{lemma-seqseq:2} Let $R$ be a tree bisimulation on $\temptree_x$.
    If $\pair{x, x^2, \ldots, x^n}\ R\ \pair{x, y^2, \ldots, y^n}$, then
    for all $k$,
    \[
      \treepartial^{\temptree_x}_k(\pair{x, x^2, \ldots, x^n}) =
      \treepartial^{\temptree_x}_k(\pair{x, y^2, \ldots, y^n}).
    \]
    
  \item\label{lemma-seqseq:3} The tree $\temptree_x$ is strongly
    extensional and compactly branching, and
    $ \partial_\omega(\temptree_x) =
    \treepartial^{\temptree_x}_{\omega}(\pair{x}) = x$.
  \end{enumerate}
\end{lemma}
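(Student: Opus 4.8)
The plan is to prove the three items in order, since item~\ref{lemma-seqseq:3} rests on the first two. Throughout I keep in mind that ``$x^n$'' denotes the last entry of a node $\pair{x,x^2,\dots,x^n}$ of $\temptree_x$, while ``$x^n_k$'' $\in \powfin^k 1$ is its $k$-th coordinate as an element of $V_\omega$, and that the children of $\pair{x,x^2,\dots,x^n}$ are exactly the sequences $\pair{x,x^2,\dots,x^n,y}$ with $y\in V_\omega$ and $x^n\leadsto y$.

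For item~\ref{lemma-seqseq:1} I would induct on $k$. The case $k=0$ is trivial, since $\treepartial^{\temptree_x}_0$ is the constant map into $1=\powfin^0 1$ and $x^n_0$ is the unique element of $1$. For the inductive step, unfolding the definition of $\treepartial^{\temptree_x}_{k+1}$ and applying the induction hypothesis to the children gives
\[
  \treepartial^{\temptree_x}_{k+1}(\pair{x,\dots,x^n})
  = \set{\treepartial^{\temptree_x}_k(\pair{x,\dots,x^n,y}) : y\in V_\omega,\ x^n\leadsto y}
  = \set{y_k : y\in V_\omega,\ x^n\leadsto y},
\]
so it remains to show that this set equals $x^n_{k+1}$. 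The inclusion ``$\subseteq$'' is immediate from the definition of $\leadsto$. For ``$\supseteq$'', given $z\in x^n_{k+1}$ I would build a compatible sequence $y=(y_j)_{j<\omega}$ with $y_j\in x^n_{j+1}$ for all $j$ and $y_k=z$: set $y_k=z$; define $y_j$ for $j<k$ by applying the connecting maps $\powfin^j!$, which keep us inside the $x^n_{j+1}$ because $\powfin^{j+1}!(x^n_{j+2})=x^n_{j+1}$ is literally the image of $x^n_{j+2}$ under $\powfin^j!$; and define $y_j$ for $j>k$ recursively, at each stage picking a $\powfin^j!$-preimage of $y_j$ inside $x^n_{j+2}$ using the same image identity. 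Then $y\in V_\omega$, $x^n\leadsto y$, and $y_k=z$, as desired.

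For item~\ref{lemma-seqseq:2} I would prove the slightly more general fact that for \emph{any} graph bisimulation $R$ on a tree $t$ and any $k$, related nodes have the same $\treepartial^t_k$-value; this is a one-line induction on $k$ (base: $\treepartial^t_0$ is constant; step: by the two back-and-forth clauses each child of one related node matches, under $\treepartial^t_k$ and the induction hypothesis, some child of the other, so the two finite image-sets defining $\treepartial^t_{k+1}$ coincide), and specialising to a tree bisimulation $R$ on $\temptree_x$ yields item~\ref{lemma-seqseq:2}. For strong extensionality in item~\ref{lemma-seqseq:3}: if $R$ is a tree bisimulation on $\temptree_x$ and $\pair{x,\dots,x^n}\mathbin{R}\pair{x,\dots,y^n}$, then the two nodes have the same depth (a tree bisimulation relates parents to parents and relates the roots only to each other), hence the same length; by item~\ref{lemma-seqseq:2} followed by item~\ref{lemma-seqseq:1} we get $x^n_k=y^n_k$ for all $k$, i.e.\ $x^n=y^n$, and running the same argument down the chain of (pairwise related) parents yields $x^j=y^j$ for all $j\le n$, so the two nodes coincide and $R\subseteq\Delta_{\temptree_x}$.

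For compact branching, fix a node $p=\pair{x,\dots,x^n}$; via last entries a sequence of children of $p$ is a sequence $(y^{(i)})_i$ in $V_\omega$ with $y^{(i)}_k\in x^n_{k+1}$ for all $i,k$. Since each $x^n_{k+1}$ is finite, a diagonal argument yields a nested family of infinite subsequences, the $k$-th of which is constant on coordinate $k$ with some value $z_k\in x^n_{k+1}$; compatibility of the $y^{(i)}$ forces $(z_k)$ to be compatible, so $z=(z_k)\in V_\omega$ and $x^n\leadsto z$, i.e.\ $\pair{x,\dots,x^n,z}$ is a child of $p$. Along the diagonal subsequence $y^{(i_j)}_m=z_m$ once $j\ge m$, which by item~\ref{lemma-seqseq:1} says exactly that $\treepartial^{\temptree_x}_m$ agrees on the corresponding children for $j\ge m$; hence $\lim_j$ of those children equals $\pair{x,\dots,x^n,z}$, which proves $\temptree_x$ is compactly branching. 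Finally, $\treepartial^{\temptree_x}_\omega(\pair x)=\partial_\omega(\temptree_x)$ follows from the second part of \autoref{R:treepartial} by comparing $n$-th coordinates, and the $n$-th coordinate of $\treepartial^{\temptree_x}_\omega(\pair x)$ equals $\treepartial^{\temptree_x}_n(\pair x)=x_n$ by item~\ref{lemma-seqseq:1}, so $\treepartial^{\temptree_x}_\omega(\pair x)=x$. The step I expect to cost the most care is not conceptual but the two ``preimage'' constructions — the recursive lift of $z$ in item~\ref{lemma-seqseq:1} and the diagonalization for compact branching in item~\ref{lemma-seqseq:3} — together with the constant need to distinguish the last entry $x^n$ of a node from its coordinate $x^n_k$.
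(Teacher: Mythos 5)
Your proof is correct and follows essentially the same route as the paper's: item~\ref{lemma-seqseq:1} by induction on $k$ via the image identity $\powfin^{k+1}!(x^n_{k+2})=x^n_{k+1}$ and the two-directional lifting of $a\in x^n_{k+1}$ to a compatible $y$ with $x^n\leadsto y$; item~\ref{lemma-seqseq:2} by the same induction on $k$ (your formulation for graph bisimulations is a harmless mild generalization); and item~\ref{lemma-seqseq:3} by combining the first two items for strong extensionality and by the same thinning/diagonal subsequence argument for compact branching (the paper only writes it out for children of the root, and you are slightly more explicit about compatibility of the diagonal sequence $z$). No gaps.
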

\begin{proof}
\begin{enumerate}
\item By induction on $k$.  For $k = 0$, our result is clear: the
  codomain of $\treepartial_k$ is $1$.  Assume our result for $k$, fix
  $x\in L^+$ and $\pair{x^1, \ldots, x^n}\in \temptree_x$. We first
  prove that
  \begin{equation}\label{eq:aux}
    \set{ y_k : x^n \leadsto y} = x^n_{k+1}.
  \end{equation}
  Indeed, if $x^n\leadsto y$, then
  $y_k\in x^n_{k+1}$. Conversely, if $a\in x^n_{k+1}$, we construct
  $y\in V_{\omega}$ such that $x^n\leadsto y$ with $y_k = a$.  Note
  that
  \[
    x^n_k = \powf^{k}!(x^n_{k+1}) = \pow\powf^{k-1}!(x^n_{k+1}) =
    \powf^{k-1}![x^n_{k+1}].
  \]
  Since $a\in x^n_{k+1}$,  we have $\powf^{k-1}!(a) \in x^n_k$.  So we let
  $y_{k-1} = \powf^{k-1}!(a)$.  We repeat this argument to define
  $y_{k-2}$, $\ldots$, $y_1$, $y_0$; the point is that
  $y_{k-i} \in x^n_{k-i+1}$ for $i = 0,\ldots, k$.  Choices are needed
  when we go the other way from $k$.  Note that
  \[
    \powf^{k+1}![x^n_{k+2}]
    =
    \powf(\powf^{k+1}!)(x^n_{k+2}) 
    =
    \powf^{k+2}!(x^n_{k+2}) = x^n_{k+1}.
  \]
  Every set functor preserves surjective functions, and so
  $\powf^{k+1}!$ is surjective.  Thus, there is some
  $y_{k+1} \in x^n_{k+2}$ such that $\powf^{k+1}!(y_{k+1}) = y_k$.
  The same argument enables us to find by recursion on $i$ a sequence
  $y_{k+i + 1} \in x^n_{k+i+2}$ such that
  $\powf^{k+i+1}!(y_{k+i+1 }) = y_{k+i}$.  This defines~$y$ such that
  $x^n \leadsto y$ according to~\eqref{leadsto}
  with $y_k = a$.  

  The induction step is now easy:
  \begin{align*}
    \treepartial^{\temptree_x}_{k+1}(\pair{x,x^2, \ldots, x^n})
    &= \set{\treepartial^{\temptree_x}_k(\pair{x,x^2, \ldots, x^n,y}): x^n \leadsto y}\\
    &= \set{ y_k : x^n \leadsto y} &\text{by induction hypothesis} \\
    & = x^n_{k+1} &\text{by~\eqref{eq:aux}}\rlap{\text{.}} 
  \end{align*}

\item This again is an induction on $k$, and the steps are similar to
  what we have just seen.  We also note that tuples in $\temptree_x$
  related by a tree bisimulation must have the same length.

\item Note first that by \Cref{lemma-seqseq:1} with $n = 1$, we have
  $\treepartial^{\temptree_x}_k(\pair{x}) = x_k$ for all $k$. This implies that
  $\treepartial^{\temptree_x}_{\omega}(\pair{x}) = x$.  For the strong extensionality,
  let $R$ be a tree bisimulation on $\temptree_x$.  Suppose that
  $\pair{x, x^2, \ldots, x^n}$ and $\pair{x, y^2, \ldots, y^n}$ are
  related by $R$.  Using
  \renewcommand{\itemautorefname}{items}%
  \Cref{lemma-seqseq:1} and~\ref{lemma-seqseq:2},
  \renewcommand{\itemautorefname}{item}%
  we see that for all $k$, we have $x^n_k = y^n_k$.  Thus, $x^n = y^n$.  In
  addition, since $R$ is a tree bisimulation, the parents of the two
  nodes under consideration are also related by $R$.  So the same
  argument shows that $x^{n-1}= y^{n-1}$.  Continuing in this way
  shows that $x^{n-2}= y^{n-2}$, $\ldots$, $x^2 = y^2$.  Hence
  $\pair{x, x^2, \ldots, x^n}=\pair{x, y^2, \ldots, y^n}$.

  Finally, we verify that $\temptree_x$ is compactly branching.  To simplify the notation, we
  shall show this for children of the root $\pair{x}$.  Suppose we have an infinite sequence
  ${\pair{x,y^1}}, {\pair{x,y^2}}, \ldots$.  Recall that each set $\powf^n 1$ is finite.  By
  successively thinning the sequence $y^1, y^2, \ldots$, we may assume that for all $n\in
  \omega$ and all $p, q \geq n$, $y^n_p = y^n_q$.  Let $z\in
  V_{\omega}$ be the `diagonal' sequence $z_n = y^n_n$. Since every
  $\pair{x,y^n}$ is a child of the root $\pair{x}$ (in symbols: $\pair{x} \Rightarrow
  \pair{x,y^n}$), we have $x \leadsto y^n$. This implies that for all $n$, we have $z_n = y^n_n
  \in x_{n+1}$, whence $x\leadsto z$. Thus,
  $\pair{x,z}$ is a child of the root of
  $\temptree_x$. Recall from \Cref{lemma-seqseq:1} that
  $\treepartial^{\temptree_x}_n(\pair{x,z}) = z_n$.  So we obtain the desired conclusion: $\lim
  \pair{x,y^n} = \pair{x,z}$.\qedhere
\end{enumerate}
\end{proof}

\begin{lemma}\label{forBarr}
For every tree $t$ there is a Barr-equivalent tree
    $t^*\in T$ such that $t^*$ is strongly extensional and compactly
    branching.
\end{lemma}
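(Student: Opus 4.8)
The plan is to take $t^{*}$ to be the canonical tree from $T$ determined by the ``Barr address'' $\partial_{\omega}(t)\in V_{\omega}$ of $t$, and then to read off all three required properties directly from \autoref{lemma-seqseq}.\ref{lemma-seqseq:3}.

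Concretely, first I would observe that for every tree $t$ the sequence $x := (\partial_{n} t)_{n<\omega}$ is compatible: the maps $\treepartial^{t}_{n}$ form a cone (cf.~\autoref{N:partial}), and evaluating the identities $\treepartial^{t}_{n} = v_{m,n}\circ\treepartial^{t}_{m}$ at $\root(t)$ gives, via \autoref{R:treepartial}, the compatibility equations $\powf^{n}!(\partial_{n+1}t)=\partial_{n}t$. Hence $x\in V_{\omega}$, where, as in \autoref{N:tx}, $V_{\omega}$ is realised concretely as the set of compatible sequences; this $x$ is precisely $\partial_{\omega}(t)$, and $\ell_{n}\circ\partial_{\omega}=\partial_{n}$ by the description of the limit projections.

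Next I would set $t^{*} := \temptree_{x}$, which belongs to $T$ by~\eqref{eq:T}. By \autoref{lemma-seqseq}.\ref{lemma-seqseq:3}, the tree $\temptree_{x}$ is strongly extensional and compactly branching, and moreover $\partial_{\omega}(\temptree_{x}) = \treepartial^{\temptree_{x}}_{\omega}(\pair{x}) = x = \partial_{\omega}(t)$. Applying $\ell_{n}$ to both sides, this equality of compatible sequences says precisely that $\partial_{n}(t^{*})=\partial_{n}(t)$ for all $n$, i.e.~$t\approx t^{*}$. Thus $t^{*}\in T$ is strongly extensional, compactly branching, and Barr-equivalent to $t$, as required.

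There is essentially no obstacle remaining: the entire content of the lemma has already been packaged into the construction of $\temptree_{(-)}$ and into \autoref{lemma-seqseq}. The only points requiring (routine) care are verifying that $(\partial_{n}t)_{n}$ genuinely defines an element of $V_{\omega}$ and keeping the bookkeeping between $\partial_{n}$, $\treepartial^{t}_{n}$, and $\partial_{\omega}$ straight.
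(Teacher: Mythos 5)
Your proposal is correct and is essentially the paper's own proof: both take $x=\partial_\omega(t)\in V_\omega$, set $t^{*}=\temptree_{x}\in T$, and invoke \autoref{lemma-seqseq} to get strong extensionality, compact branching, and $\partial_n(t^{*})=x_n=\partial_n(t)$ for all $n$. The extra compatibility check for $(\partial_n t)_{n<\omega}$ is a harmless piece of bookkeeping that the paper leaves implicit.
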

\begin{proof}
  Given any tree $t$, we have $x = \partial_\omega(t)\in V_{\omega}$.
  For all $n$, $x_n = \partial_n(t)$.  The tree $t^* = \temptree_x$ in
  \Cref{lemma-seqseq}\ref{lemma-seqseq:3} is strongly extensional
  and compactly branching.  Recall that the root of $t^*$ is
  $\pair{x}$.  By \Cref{lemma-seqseq}\ref{lemma-seqseq:1}, we have
  that for all $n<\omega$,
  \[
    \partial_n(t^*)
    =
    \treepartial^{t^*}_n(\root(\temptree_x))
    =
    \treepartial^{t^*}_n(\pair{x})
    =
    x_n
    =
    \partial_n(t).\tag*{\qedhere}
  \]
\end{proof}

\begin{proposition}\label{P:BB}
  Bisimilar trees are Barr-equivalent.
\end{proposition}

\begin{proof}
Let $R$ be a tree bisimulation between trees $t$ and $u$.
We show by induction on $n$ that whenever $x\in t$ and $y\in u$ are related by $R$,
then $\rho^t_n(x) = \rho^t_n(y)$.
This is clear for $n = 0$.  Assuming it for $n = 1$, we have 
\[
\rho^t_{n+1}(x) = \set{\rho^t_{n}(x'): x' \mbox{ is a child of $x$}} =  \set{\rho^u_{n}(y'): y' \mbox{ is a child of $y$}}  = \rho^t_{n+1}(y).
\]
For equality in the middle, let $x'$ be a child of $x$ in $t$.
Since $R$ is a bisimulation, there is some child $y'$ of $y$ in $u$ such that $x'$ and $y'$ are related by $R$.
By induction hypothesis, $\rho^t_{n}(x') = \rho^u_{n}(y')$.  This implies that the set on the left is a subset of the set on the right.
The converse is similar.

Now $\partial_n(t) = \rho^t_n(\root(t))$, and similarly for $u$.  So since $R$ relates the roots of $t$ and $u$, 
$\partial_n(t) = \partial_n(u)$.
This holds for all $n$, and we are done.
\end{proof}

\begin{lemma}\label{bijection}
  The set $T$ defined in~\eqref{eq:T} is the set of all compactly
  branching, strongly extensional trees.  It corresponds bijectively to $V_{\omega}$ by
  $ \temptree_x \mapsto x$.
\end{lemma}
\begin{proof}
  By \Cref{lemma-seqseq}\ref{lemma-seqseq:3} we know that every
  tree in $T$ is strongly extensional and compactly branching. For the
  reverse inclusion, let $t$ be compactly branching and strongly
  extensional.  Let $t^*$ be as in \Cref{forBarr} for $t$. By
  \Cref{lem:w1,seqK-part}, $t = t^*$. Thus, we have $t\in T$.
In the second assertion, we need only show that if $\temptree_x = \temptree_y$, then $x = y$.
But from the proof of~\cref{forBarr}, $x = \partial_{\omega}(\temptree_x)  =\partial_{\omega}(\temptree_y) = y$.
\end{proof}

\begin{defn}\label{D:D}
  Let $D$ be the set of finitely branching strongly extensional trees.
  Let $\delta\colon D\to \powf D$ take a strongly extensional tree $t$
  to the (finite) set of its subtrees $t_x$.
\end{defn}

\noindent
Here we use
\Cref{rem:strongext}\ref{rem:strongext:5}: a subtree of a strongly
extensional tree is strongly extensional.

\removeThmBraces
\begin{lemma}[\cite{worrell:05}] \label{L:Worrell}
  For the finite power-set functor $\powf$ the following hold:
  \begin{enumerate}
  \item the maps $\partial_n\colon T \to \powf^n 1$ given by  $\partial_n(\temptree_x) = x_n$ form a limit of~\eqref{eq:op-chain}; thus, $V_\omega \cong T$,
  \item\label{L:Worrell:2} the coalgebra $(D,\delta)$ is terminal.
  \end{enumerate}
\end{lemma}
\resetCurThmBraces
\begin{proof}
\begin{enumerate}
\item The map $\phi\colon V_{\omega} \to T$ given by
  $\phi(x) = \temptree_x$ is obviously surjective.  Suppose that
  $\temptree_x = \temptree_y$.  The roots of these trees are
  $\pair{x}$ and $\pair{y}$.  For all $n$, we have that 
  \[
    x_n = \treepartial^{\temptree_x}_n(\pair{x}) = \treepartial^{\temptree_y}_n(\pair{y}) = y_n.
  \]
  Thus, $\partial_\omega(\pair{x}) = \partial_\omega(\pair{y})$. By \Cref{lem:w1,seqK-part},
  $x = y$.  So $\phi$ is injective. The formula for~$\partial_n$ comes from
  \Cref{lemma-seqseq}\ref{lemma-seqseq:1}.

\item We use \Cref{P:Kripke}. The map $m\colon V_{\omega + 1} \to V_\omega$ in~\eqref{diag:m}
  assigns to a finite set of trees in~$V_\omega$ their tree-tupling. Its image is the set of
  all strongly extensional, compactly branching trees which are finitely branching at the
  root. An easy induction on $n$ shows that $V_{\omega+n}$ is the set of all compactly
  branching, strongly extensional trees $t$ with the property that the topmost $n$ levels of
  $t$ are finitely branching.  With this description, $V_{\omega+n} \subseteq D$, and the
  limit~$V_{\omega+\omega}$ is simply the intersection $D = \bigcap_n V_{\omega+n}$.  This
  shows that the carrier set of $\nu \powf$ is $D$.  For the structure map $\delta$, note that
  $m\colon \powf V_{\omega} \to V_{\omega}$ in \eqref{diag:m} is tree-tupling, as are the maps
  $\powf m$, $\powf \powf m$, etc.  It follows that in the intersection, $D$, the coalgebra
  structure is the inverse of tree-tupling.\qedhere
\end{enumerate}  
\end{proof}
\begin{theorem}\label{T:s-ext}
  \begin{enumerate}
  \item The limit $V_\omega$ in the terminal-coalgebra chain for
    $\powfin$ consists of all compactly branching, strongly
    extensional trees.
  \item\label{T:s-ext:2} The terminal coalgebra for $\powfin$ is the coalgebra of all
    strongly extensional, finitely branching trees. The coalgebra
    structure is the inverse of tree tupling.
  \end{enumerate}
\end{theorem}

\noindent
The first part follows from~\cref{L:Worrell} and~\cref{bijection}, and the second
from~\cref{L:Worrell} and~\cref{D:D}.

\subsection{Compactness and the Worrell pseudo-metric}\label{S:compactness}
We can also explicate the termino\-logy of `compactly branching'
(\Cref{dcb}).  We have formulated this using convergence in a
formal sense, and here we connect this to a metric.  The collection
$\mathcal{C}$ of all strongly extensional trees is a proper class.
Setting this aside, it also comes with \emph{Worrell's pseudo-metric}~$d$,
 given by
\begin{equation}
\label{psW}
  d(t,u) = \inf \set{ 2^{-n} : \partial_n(t) =  \partial_n(u)}.
\end{equation}

\begin{proposition} The relation $\approx$ on the class $\TT$ of all
  trees has $2^{\aleph_0}$ equivalence classes, and $d$ is a metric on
  $\TT/{\approx}$.
\end{proposition}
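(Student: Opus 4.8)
The plan is to handle the two assertions separately. The metric claim is essentially routine. First I would observe that the displayed formula makes sense for any two trees $s,t$ and defines a \emph{pseudo-ultrametric} on the class $\TT$. The key point is that the set $\set{\,n<\omega:\partial_n(s)=\partial_n(t)\,}$ is downward closed in $\omega$: it contains $0$ since $\partial_0$ is the unique map to $1$, and if $\partial_{n+1}(s)=\partial_{n+1}(t)$ then $\partial_n(s)=\partial_n(t)$ because $\partial_n=f_n\circ\partial_{n+1}$ for the connecting map $f_n\colon\powf^{n+1}1\to\powf^n 1$ (an easy induction, cf.\ \autoref{R:treepartial}). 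Hence this set is either all of $\omega$ --- so $d(s,t)=0$ and $s\approx t$ --- or an initial segment $\{0,\dots,N-1\}$ with $N\ge 1$, so $d(s,t)=2^{-(N-1)}>0$ and $s\not\approx t$. From this one reads off immediately that $d(s,t)\in[0,1]$, that $d$ is symmetric, that $d$ depends only on the $\approx$-classes of its arguments (hence is well defined on $\TT/{\approx}$), and that $d([s],[t])=0$ iff $[s]=[t]$. For the triangle inequality I would prove the stronger ultrametric inequality: if $d(s,t),d(t,u)\le 2^{-j}$, then $\partial_i(s)=\partial_i(t)$ and $\partial_i(t)=\partial_i(u)$ for all $i\le j$, so $\partial_i(s)=\partial_i(u)$ for $i\le j$ and $d(s,u)\le 2^{-j}=\max\{d(s,t),d(t,u)\}$.

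For the cardinality claim, I would first identify $\TT/{\approx}$ with $V_\omega$. The assignment $t\mapsto(\partial_n(t))_{n<\omega}$ maps $\TT$ into the limit $V_\omega=\lim_n\powf^n 1$ (the sequences are compatible, as above), and two trees have equal image precisely when they are Barr equivalent; the induced map $\TT/{\approx}\to V_\omega$ is surjective since $\partial_n(\temptree_x)=x_n$ for each $x\in V_\omega$ by \autoref{lemma-seqseq}.\ref{lemma-seqseq:3}. Thus $\TT/{\approx}\cong V_\omega$, and it suffices to show $|V_\omega|=2^{\aleph_0}$. The upper bound is immediate: $V_\omega$ is a subset of the product $\prod_{n<\omega}\powf^n 1$ of countably many finite sets, hence $|V_\omega|\le 2^{\aleph_0}$.

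For the lower bound I would exhibit $2^{\aleph_0}$ pairwise non-Barr-equivalent trees. Let $P_k$ be the path of length $k$ (with $P_0$ the one-node tree); an easy induction on $n$ gives $\partial_n(P_k)=P_{\min(k,n)}$. For $\sigma\in\{0,1\}^{\N}$ let $t_\sigma$ be the tree whose root has exactly one child for each $k$ with $\sigma(k)=1$, that child being the root of a copy of $P_k$. Then
\[
  \partial_{n+1}(t_\sigma)=\set{\,P_{\min(k,n)}:\sigma(k)=1\,}\subseteq\powf^n 1,
\]
and since $\min(k,n)=i<n$ forces $k=i$, we get $P_i\in\partial_{n+1}(t_\sigma)$ iff $\sigma(i)=1$ for every $i<n$. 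So $\partial_{n+1}(t_\sigma)$ determines $\sigma\restriction\{0,\dots,n-1\}$, and therefore $\sigma\ne\tau$ implies $\partial_n(t_\sigma)\ne\partial_n(t_\tau)$ for some $n$, i.e.\ $t_\sigma\not\approx t_\tau$. This yields an injection $\{0,1\}^{\N}\hookrightarrow\TT/{\approx}$, so $|\TT/{\approx}|\ge 2^{\aleph_0}$, and with the upper bound we conclude $|\TT/{\approx}|=2^{\aleph_0}$.

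I do not expect a real obstacle. The only delicate point is in the lower bound: one must ensure the ``labels'' attached to the root can be recovered from $\partial_{n+1}(t_\sigma)$ one coordinate of $\sigma$ at a time. Using paths of pairwise distinct lengths is exactly what makes this transparent --- $\partial_n$ acts on a path by truncation, so the approximations of distinct short paths stay distinct and are not absorbed by the deeper parts of the tree.
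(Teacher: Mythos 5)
Your proposal is correct and takes essentially the same route as the paper: an injection $t\mapsto(\partial_n t)_{n<\omega}$ into $V_\omega$ (a subset of a countable product of finite sets) for the upper bound, an explicit continuum-sized family of pairwise non-Barr-equivalent trees for the lower bound, and the observation that $d(s,t)=0$ exactly when $s\approx t$, so the standard pseudo-metric quotient yields a metric. The only differences are cosmetic: your witness family (a root whose children are paths $P_k$ for $k$ in a given subset of $\N$) replaces the paper's infinite paths with additional leaves at prescribed depths, and you spell out the ultrametric verification and the ``one can check'' step that the paper leaves to the reader.
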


\begin{proof}
  We have a well-defined injective map
  \[
    b\colon \TT/{\approx} \to V_\omega
  \]
  assigning to an equivalence class of a tree $t$ the sequence
  \(
  (\partial_n t)_{n < \omega}$. Hence,
  $|\TT/{\approx}| \leq 2^{\aleph_0}.
  \)
  We check the reverse
  inequality.  For every set $A\subseteq \Nat \setminus \set{0}$, let
  $t_A$ be an infinite path with additional leaves of depth $n$ for
  each $n\in A$.  Then one can check that for $A \neq B$, $t_A$ and
  $t_B$ are not Barr equivalent. Hence
  $|\TT/{\approx}| \geq 2^{\aleph_0}$ since the set
  $\Nat \setminus \set{0}$ has $2^{\aleph_0}$ subsets.

The general construction of a metric space from a pseudo-metric space 
takes equivalence classes of points of distance $0$.  In our setting, points (trees)
have distance $0$ in $\TT$ exactly when they are Barr-equivalent.
\end{proof}

It is a standard fact that for metric spaces, compactness and
sequential compactness are equivalent.  
 Thus, for all
trees $t$ and all nodes $x$ of $t$,
$\set{t_y : \text{$y$ a child of $x$ in $t$}}$ is compact iff it is sequentially
compact.  This explains why we were able to define the compactly
branching trees using sequential compactness.

\begin{theorem}\label{T:compact}
  The metric space $(\TT/{\approx},d)$ is compact.
\end{theorem}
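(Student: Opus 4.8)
The plan is to transport the statement along a bijection between $\TT/{\approx}$ and the inverse limit $V_\omega=\lim_{n<\omega}\powf^n 1$, and then exploit that each set $\powf^n 1$ is finite. Concretely, I would use the map $b\colon \TT/{\approx}\to V_\omega$ sending the class of a tree $t$ to the compatible sequence $(\partial_n t)_{n<\omega}$. It is injective, since $\partial_n s=\partial_n t$ for all $n$ is precisely the definition of $s\approx t$; and it is surjective, since for every $x\in V_\omega$ the tree $\temptree_x$ of \autoref{N:tx} satisfies $\partial_n(\temptree_x)=x_n$ for all $n$ by \autoref{lemma-seqseq}.\ref{lemma-seqseq:1} (one may also invoke the bijection $\phi$ of \autoref{T:Worrell}). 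Define $d'$ on $V_\omega$ to be the metric transported along $b$, so that $b$ is an isometry $(\TT/{\approx},d)\to(V_\omega,d')$; unravelling, $d'(x,y)=\inf\set{2^{-n}\colon x_n=y_n}$, which is in fact an ultrametric. Thus it suffices to show that $(V_\omega,d')$ is compact.

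To that end I would first record two elementary facts. Since the connecting maps $\powf^n{!}\colon\powf^{n+1}1\to\powf^n 1$ form a cone, an equality $x_n=y_n$ forces $x_m=y_m$ for all $m\le n$; hence $\set{n:x_n=y_n}$ is a down-set and $d'(x,y)\le 2^{-n}$ if and only if $x_n=y_n$. Secondly, for each fixed $n$ the assignment $x\mapsto x_n$ maps $V_\omega$ into the \emph{finite} set $\powf^n 1$. From these, compactness follows by showing $(V_\omega,d')$ is totally bounded and complete. For total boundedness: given $\eps>0$, pick $n$ with $2^{-n}<\eps$; the finitely many nonempty fibres of $x\mapsto x_n$ cover $V_\omega$ and each has $d'$-diameter at most $2^{-n}<\eps$. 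For completeness: a $d'$-Cauchy sequence $(x^{(k)})_k$ is eventually constant in every coordinate, say equal to $y_n$ in coordinate $n$ from some index on; the sequence $y=(y_n)$ is compatible (for each $n$ pick $k$ large enough that $x^{(k)}_n=y_n$ and $x^{(k)}_{n+1}=y_{n+1}$ simultaneously, then apply $\powf^n{!}$), so $y\in V_\omega$, and $x^{(k)}\to y$ since for any $n$ and large $k$ we have $x^{(k)}_n=y_n$, hence $d'(x^{(k)},y)\le 2^{-n}$. A complete, totally bounded metric space is compact, and transporting back along $b$ gives the theorem. (Alternatively: $V_\omega$ is a closed subspace of the compact product $\prod_n\powf^n 1$ of finite discrete spaces, hence compact by Tychonoff, and the natural metric on that product restricts on $V_\omega$ to $\tfrac12 d'$, inducing the same topology.)

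I do not expect a genuine obstacle here: once the reduction to $V_\omega$ is in place the argument is pure bookkeeping. The two points that deserve a little care are the surjectivity of $b$ — this is exactly where the construction of $\temptree_x$, and hence the preceding appendix material, is really used — and the observation that $d$ pushes forward to the ``first difference'' metric $d'$ rather than to a sup-metric; it is precisely this feature that makes each finite quotient $x\mapsto x_n$ of $V_\omega$ into an $\eps$-net and drives the compactness proof.
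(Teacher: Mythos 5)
Your proof is correct and is essentially the paper's own argument: the paper likewise shows total boundedness using the finiteness of $\powf^n 1$ (taking the extensional trees of height at most $n$ as a $2^{-n}$-net) and completeness via eventual constancy of each coordinate $\partial_k(t_n)$, invoking \autoref{forBarr} to realize the resulting compatible sequence as a tree — exactly the point where you use surjectivity of $b$. Transporting the metric to $V_\omega$ first (or the Tychonoff aside) is only a cosmetic repackaging of the same idea.
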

\begin{proof}
\begin{enumerate}
\item The space $\TT/{\approx}$ is totally bounded: for every
  $\eps > 0$, there is a finite cover by $\eps$-balls. Indeed,
  choose~$n$ with $2^{-n} < \eps$ and take the (finite) set~$A$ of all
  extensional trees of height at most~$n$. Then every tree $t$
  satisfies $\partial_n t \in A$ and
  $d(t, \partial_n t) \leq 2^{-n} < \eps$.

\item We recall that a metric space is compact iff it is totally
  bounded and complete. Towards showing that $\TT/{\approx}$ is
  complete, let $(t_n)$ be a Cauchy sequence.  For each fixed~$k$, the
  sequence $(\partial_k(t_n))$ is eventually constant.  Let
  $x_k\in V_k$ be such that for all but finitely many~$n$,
  $(\partial_k(t_n)) =x_k$.  We may choose the sequence $(x_k)$ so
  that it is compatible: $\partial_k(x_{k+1}) = x_k$.  Then
  \Cref{forBarr} provides us with a single tree~$t$ such that
  $\partial_k(t) = x_k$ for all $k$.  This tree $t$ is a limit of the
  original sequence~$(t_n)$.  \qedhere
\end{enumerate}
\end{proof}
\begin{corollary}\label{corWor}
  A set $M \subseteq \TT$ is compact iff for every sequence $(t_n)$
  from $M$, if $\lim_n t_n = s$, there exists $t\in M$ such that
  $t\approx s$.
\end{corollary}
\begin{proof}
  The quotient map $q\colon \TT\to \TT/{\approx}$ is continuous and
  closed, and thus $M$ is compact iff its image $q[M]$ is.  The latter
  holds iff $q[M]$ is a closed set.  This is what the condition in our
  corollary states.
\end{proof}

To conclude, the reader might wonder why this section did not simply mention Corollary~\ref{corWor}.
Why was it necessary to consider the graph $L^{+}$  of all finite sequences of trees,
and the associated set $T$ from \eqref{eq:T}?
The answer is that the linchpin of this section is   \Cref{forBarr}, and we were not able to prove it more
directly.  In addition, we aimed for the characterization in~\Cref{L:Worrell}.
For this, the development using  $T$ turned out to be an elegant way to proceed, reminiscent of the
way one uses maximal objects of various types (such as ultrafilters) in proving representation results.

\subsection{The Cofree Comonad on $\powf$}
\label{S:cofree-comonad}
We have introduced the topic of cofree comonads in
\Cref{S:introduce-cofree-comonads}.  We also mentioned
in~\Cref{C:cofree} that for a category with products, an
endofunctor $F$ generates a cofree comonad exactly when for each
object $Y$, a terminal coalgebra for $F(-)\times Y$ exists.  We
illustrate this by describing the terminal coalgebra for
$\powf(-)\times A$ for all sets $A$.

Fix a set $A$ for the remainder of this discussion.

\begin{defn}
\begin{enumerate}
\item An \emph{$A$-tree} is a pair $(t,\ell)$ consisting of a tree $t$
  together with a \emph{labelling} $\labelling\colon t\to A$ of the
  nodes of $t$.  We use $\TT$ for the class of $A$-trees.

\item We again use the notation $t_x$ for the sub-$A$-tree of $t$ rooted
  in the node $x$ of $t$.  The labelling of~$t_x$ is inherited from
  $t$.

\item For most concepts, we obtain an `$A$-version' by appending $A$
  everywhere in the obvious way.  So we obtain the concepts of a
  \emph{isomorphic}, \emph{extensional}, \emph{Barr-equivalent},
  \emph{finitely branching}, and \emph{compactly branching} $A$-trees.

\item A \emph{graph bisimulation} between $A$-trees $(t,\ell)$ and
  $(u,m)$ is a graph bisimulation $R$ of the trees $t$ and $u$ which
  in addition has the property that if $x\, R\, y$, then
  $\ell(x) = m(y)$.  In words, nodes which are related have the same
  labelling.  This gives us the concepts of a \emph{tree bisimulation}
  and a \emph{strongly extensional} $A$-tree.

\item We write $F$ for the endofunctor $FX = \powf X \times A$.  As
  before, $F^n 1$ is written $V_n$.  The maps $\partial_n:\TT\to V_n$
  are given by
  \[
    \partial_{n+1}(t, \ell)
    =
    \big(\set{\partial_n(t_x): \text{$x$ is a child of $\root(t)$}}, \ell(\root(t))\big).
  \]
  The maps $\rho^t_n \colon t \to V_n$ are defined similarly.

\item We again take $V_{\omega}$ to be the set of compatible sequences
  $(x_n)$ of $A$-trees.  For all $1 \leq m \leq n$ and all $x\in V_n$,
  $\ell(x) = \ell(F^{n-m}! (x))$.  This allows us to define a map
  $\ell\colon V_{\omega} \to A$: we put $\ell((x_n)) = \ell(x_n)$ for
  any $n \geq 1$.

  For every $x\in V_{\omega}$, we have an $A$-tree $\temptree_x$ with the labelling defined by
  \[
    \ell(\pair{x, x^2, \ldots, x^n}) = \ell(x^n).
  \]

  The set $T$ is defined as before in~\eqref{eq:T}, and $D$ is the
  set of all finitely branching strongly extensional~$A$-trees.  We have a
  map $\delta\colon D \to \powf D \times A$ taking a strongly
  extensional $A$-tree~$t$ to the set of its subtrees along with the
  label of the root of $t$.  This map is a bijection.
\end{enumerate}
\end{defn}

With these substitutions, one can read all of \Cref{S:app-worrell} again.
The proofs in that section are nearly the same as what we have seen, and the points we make above are all that needs to be changed. We thus obtain the following result:

\removeThmBraces
\begin{theorem} 
  Let $A$ be a set, and consider $FX =  \powf X \times A$:
  \begin{enumerate}
  \item the maps $\partial_n\colon T \to F^n 1$ given by $\partial_n(\temptree_x) = x_n$ form a limit of~\eqref{eq:op-chain}; thus, $V_\omega \cong T$,
    
  \item\label{L:Worrell:3} the coalgebra $(D,\delta)$ is terminal.
  \end{enumerate}
\end{theorem}
\resetCurThmBraces

\begin{corollary}
  The cofree comonad on $P_f$ assigns to every set $A$ the coalgebra
  of all finitely branching, strongly extensional $A$-trees. The
  coalgebra structure is given by the inverse of tree tupling, and the
  counit assigns to a tree the label of its root.
\end{corollary}

\section{Vietoris Polynomial Functors}
\label{S:Vietoris}

We continue the theme of terminal coalgebras of functors in an inductively defined class.  We
move from $\Set$ to $\Haus$, the category of Hausdorff topological spaces, and we
exchange~$\powfin$ for the Vietoris functor $\V$, defined below.  Hofmann et al.~\cite{HNN19}
proved that, for Vietoris polynomial functors on the category $\Haus$ of Hausdorff spaces, the
terminal-coalgebra chain converges in $\omega$ steps. Our proof is slightly different from
theirs because we wish to avoid a result stated by Zenor~\cite{zenor70} whose proof is
incomplete.

Recall that a \emph{base} of a topology is a collection $\mathcal{B}$ of open sets such that
every open set is a union of members of $\mathcal{B}$.  A \emph{subbase} is a collection of
open sets whose finite intersections form a base.  For every collection $\mathcal{B}$ of
subsets of the space, there is a smallest topology for which $\mathcal{B}$ is a (sub)base, the
family of unions of finite intersections from $\mathcal{B}$.
\begin{defn}\label{D:Vietoris}
  \begin{enumerate}
  \item Let $X$ be a topological space.  We denote by $\V X$  the space of
    compact subsets of $X$ equipped with the `hit-and-miss'
    topology.  This topology has as a subbase all sets of the following forms:
    \begin{equation}\label{eq:predlift}
      \begin{aligned}[c]
        U^{\Diamond}
        & = \set{R \in VX : R \cap U\neq \emptyset}
        & \text{($R$ hits $U$)}, \\
        U^{\Box}
        &= \set{R \in VX : R \subseteq U}
        & \text{($R$ misses $X \setminus U$),}
      \end{aligned}
    \end{equation}
    where $U$ ranges over the open sets of $X$.    We call $\V X$ the \emph{Vietoris space of
      $X$}, also known as the \emph{hyperspace} of $X$.
    
  \item Recalling that the image of a compact set under a continuous
    function is compact, for a continuous function
    $f\colon X \to Y$, we put $\V f (A) = f[A]$ for every compact
    subset $A$ of~$X$.
  \end{enumerate}
\end{defn}
\begin{rem}\label{R:Vietoris}
  \begin{enumerate}
  \item For a compact Hausdorff space $X$, Vietoris~\cite{Vietoris22} defined $\V X$ to consist
    of all \emph{closed} subsets of $X$.  These are the same as the compact subsets in this
    case. In the coalgebraic literature, $\V X$ has also mostly been studied for spaces $X$
    which are compact Hausdorff. However, the `classic Vietoris space' (using closed subsets)
    does not yield a functor on $\Top$ (see Hofmann et al.~\cite[Rem.~2.28]{HNN19}). Hofmann et
    al.~\cite[Def.~2.27]{HNN19} call the functor $\V$ in \Cref{D:Vietoris} the \emph{compact
      Vietoris functor}.

  \item\label{R:Vietoris:2} Michael~\cite[Thm.~4.9.8]{Michael51}
    proved that $X$ is Hausdorff iff so is $\V X$.
         
  \item\label{R:Vietoris:3} Vietoris~\cite{Vietoris22}
    originally proved that for a compact Hausdorff space $X$ (the
    classic Vietoris space)~$\V X$ is compact Hausdorff, too. 

  \item\label{R:Vietoris:4} A \emph{Stone space} is a compact Hausdorff space having a base of
    clopen sets.  If $X$ is a Stone space, so is $\V X$; see~\cite[Thm.~4.9.9]{Michael51}
    or~\cite[Section~III.4]{Johnstone86}.
  \end{enumerate}
\end{rem}
\begin{proposition}\label{prVnat}
  For every continuous function $f\colon X\to Y$ and every open
  $U\subseteq Y$, 
  \[
    (f^{-1}(U))^{\Diamond} = (\V f)^{-1}(U^{\Diamond})
    \qquad\text{and}\qquad
    (f^{-1}(U))^{\Box} = (\V f)^{-1}(U^{\Box}).
  \]
\end{proposition}
\begin{proof}
  Let $R\in \V X$.   Observe that
  \[
    R \cap f^{-1}(U) \neq \emptyset
    \iff 
    f[R] \cap U \neq \emptyset
    \iff
    f[R]\in U^{\Diamond}
    \iff
    R\in (\V f)^{-1}( U^{\Diamond}).
  \]
  This proves our first assertion for all $R$. For the second assertion, we have
  \[
    R \subseteq f^{-1}(U)
    \iff 
    f[R] \subseteq U
    \iff
    f[R]\in U^{\Box}
    \iff
    R\in (\V f)^{-1}(U^{\Box}).
    \tag*{\qedhere}
  \]
\end{proof}
\begin{corollary}\label{V-functor}
  The mappings $X \mapsto \V X$ and $f \mapsto \V f$ form a
 functor~$\V$ on~$\Top$.
\end{corollary}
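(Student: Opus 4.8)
The plan is to check the three things required for $(X \mapsto \V X, f \mapsto \V f)$ to be a functor on $\Top$: that each $\V f$ is a well-defined continuous map $\V X \to \V Y$, that $\V$ preserves identities, and that it preserves composition. Only the continuity of $\V f$ requires any work, and for that the essential computation has already been carried out in \autoref{prVnat}.

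First I would record well-definedness of $\V f$ on underlying sets: if $A \subseteq X$ is compact, then its continuous image $f[A]$ is compact, so $\V f(A) = f[A]$ does lie in $\V Y$ (this is exactly the observation recalled in \autoref{D:Vietoris}). For continuity, I would use that a map into a space equipped with a chosen subbase is continuous as soon as the preimages of the subbasic opens are open. The subbase of $\V Y$ consists of the sets $U^{\Diamond}$ and $U^{\Box}$ for $U \subseteq Y$ open, and \autoref{prVnat} gives $(\V f)^{-1}(U^{\Diamond}) = (f^{-1}(U))^{\Diamond}$ and $(\V f)^{-1}(U^{\Box}) = (f^{-1}(U))^{\Box}$. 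Since $f$ is continuous, $f^{-1}(U)$ is open in $X$, so both of these are subbasic opens of $\V X$; hence $\V f$ is continuous.

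Finally, the functor axioms are immediate from elementary properties of images. For the identity, $\V(\id_X)(A) = \id_X[A] = A$ for every compact $A \subseteq X$, so $\V(\id_X) = \id_{\V X}$. For composition, given continuous $f\colon X \to Y$ and $g\colon Y \to Z$ and a compact $A \subseteq X$, one has $\V(g \circ f)(A) = (g \circ f)[A] = g[f[A]] = \V g(\V f(A))$, so $\V(g \circ f) = \V g \circ \V f$. The main (indeed the only) obstacle is the continuity of $\V f$, and once \autoref{prVnat} is available this reduces to the standard fact that continuity can be tested on a subbase; the remaining steps are purely set-theoretic.
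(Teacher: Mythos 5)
Your proposal is correct and follows the paper's own argument: the key step is exactly continuity of $\V f$ tested on the subbase of $\V Y$ via \autoref{prVnat}, with the remaining functor axioms being immediate set-theoretic facts about images. The paper's proof simply omits spelling out well-definedness and the identity/composition laws, which you include for completeness.
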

\noindent
Indeed, \Cref{prVnat} shows that
for every subbasic open set of $\V Y$ its inverse image under~$\V f$
is open in $\V X$.  This establishes continuity of $\V f$. 
\begin{notation}\label{N:V}
  We denote by $\Haus$, $\Khaus$ and $\Stone$ the full subcategories
  of $\Top$ given by all Hausdorff spaces, all compact Hausdorff
  spaces and all Stone spaces, respectively. By
  \Cref{R:Vietoris}\mbox{\ref{R:Vietoris:2}--\ref{R:Vietoris:4}},
  the functor $\V$ restricts to these three full subcategories, and we
  denote the restrictions by $\V$ as well.
\end{notation}
\begin{rem}\label{R:Vietoris-tech}
  \begin{enumerate}     
  \item\label{R:Vietoris-tech:2} The full subcategories $\Haus$,
    $\KHaus$, and $\Stone$ are closed under limits in~$\Top$.
    In particular, the inclusion functors preserve and reflect
    limits. In fact, $\KHaus$ is a full reflective
    subcategory: the reflection of a space is its Stone-\v{C}ech
    compactification.

  \item\label{R:Vietoris-tech:3} If an $\omega^\opp$-chain
    as in~\eqref{Xsequence} consists of surjective continuous maps
    between compact Hausdorff spaces, then each limit projection 
    $\ell_n\colon \lim_{k<\omega} X_k \to X_n$ is surjective, too.  
    Moreover, Eilenberg and Steenrod~\cite[Cor.~3.9]{ES:1952} prove
    the surjectivity of projections for all codirected limits of
    surjections between compact Hausdorff spaces; see also Ribes and
    Zalesskii~\cite[Prop.~1.1.10]{RibesZ10}).

  \item\label{R:Vietoris-tech:4} If $X$ has a base
    $\mathcal B$ which is closed under finite unions, then the sets
    $U^\Diamond$ and $U^\Box$ for $U \in \mathcal B$ already form a
    subbase of $\V X$. Indeed, given a set $\mathcal S$ of open subsets of
    $X$ we have
    $(\bigcup \mathcal S)^\Diamond = \bigcup \set{ U^\Diamond : U \in
      \mathcal S}$. Moreover, it is easy to see that
    \[\textstyle
      (\bigcup \mathcal S)^\Box = \bigcup \set{\big( \bigcup \mathcal
        F\big)^\Box : \text{$\mathcal F \subseteq \mathcal S$ finite} };
    \]
    `$\supseteq$' is trivial, and for `$\subseteq$' use
    compactness of $R \in \V X$.  Hence, if $\mathcal S$ consists of
    basic open sets from~$\mathcal B$, then $\bigcup F \in \mathcal B$ due to
    its closure under finite unions. Thus, $(\bigcup \mathcal S)^\Box$
    is a union of sets of the form $U^\Box$ for $U \in \mathcal B$.
  \end{enumerate}
\end{rem}
%
\begin{proposition}
  \label{P:limit-Haus}
  The functor $\V\colon \Haus \to \Haus$ preserves limits of
  $\omega^{\opp}$-chains.
\end{proposition}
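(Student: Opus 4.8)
The plan is to show that the canonical comparison map $m\colon\V L\to M$ is a homeomorphism, where $L$ denotes the limit of the given $\omega^{\opp}$-chain $(X_n,f_n)$ with cone $\ell_n\colon L\to X_n$ (described as in \itemref{P:chain}{R:Vietoris-tech:1}), and $M$ denotes the limit of $(\V X_n,\V f_n)$ with cone $p_n\colon M\to\V X_n$. Since $\V$ is a functor (\autoref{V-functor}) and $f_n\cdot\ell_{n+1}=\ell_n$, the maps $\V\ell_n$ form a cone over $(\V X_n)$, giving a unique continuous $m$ with $p_n\cdot m=\V\ell_n$; both $\V L$ and $M$ lie in $\Haus$ by \itemref{R:Vietoris}{R:Vietoris:2} and \itemref{R:Vietoris-tech}{R:Vietoris-tech:2}. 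I will prove in turn that $m$ is injective, surjective and open; this suffices, since a continuous open bijection compatible with the limit cones exhibits $(\V L,(\V\ell_n))$ as a limit of $(\V X_n,\V f_n)$.

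For \emph{injectivity} I will use that every $W\in\V L$ is determined by its projections: $W=\bigcap_{n<\omega}\ell_n^{-1}(\ell_n[W])$. The inclusion from left to right is immediate. For the reverse one, note that $W$ is closed in the Hausdorff space $L$; if $x\in L$ satisfies $\ell_n(x)\in\ell_n[W]$ for all $n$, then every basic neighborhood $\ell_n^{-1}(U)$ of $x$ (and these form a base of $L$) has $\ell_n(x)\in U\cap\ell_n[W]$, hence meets $W$, so $x\in\overline{W}=W$. Consequently, if $m(W)=m(W')$ then $\ell_n[W]=\ell_n[W']$ for all $n$, whence $W=W'$.

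For \emph{surjectivity} take a compatible family $(R_n)_{n<\omega}$ in $M$, so each $R_n\subseteq X_n$ is compact --- hence compact Hausdorff as a subspace --- and $f_n[R_{n+1}]=R_n$. Thus $(R_n,f_n|_{R_{n+1}})$ is an $\omega^{\opp}$-chain of surjections between compact Hausdorff spaces; its limit $\lim_n R_n$ is compact (a closed subspace of the compact product $\prod_n R_n$, since each $R_n$ is Hausdorff) and its limit projections are surjective by \itemref{R:Vietoris-tech}{R:Vietoris-tech:3}. The map $\lim_n R_n\to L$ induced by the subspace inclusions $R_n\subseteq X_n$ has image $R:=\bigcap_n\ell_n^{-1}(R_n)$, so $R$ is a compact subset of $L$, i.e.\ $R\in\V L$, and $\ell_n[R]=R_n$ for every $n$, that is, $m(R)=(R_n)_{n<\omega}$.

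The remaining step, which I expect to be the main obstacle, is that $m$ is \emph{open}. The cylinders $\ell_n^{-1}(U)$ with $U$ open in $X_n$ form a base of $L$ closed under finite unions: given $\ell_m^{-1}(U_1)$ and $\ell_n^{-1}(U_2)$ with $m\le n$, rewrite the first as $\ell_n^{-1}(f_{n,m}^{-1}(U_1))$ using the composite transition map $f_{n,m}\colon X_n\to X_m$, and take the union inside $X_n$. Hence by \itemref{R:Vietoris-tech}{R:Vietoris-tech:4} the sets $(\ell_n^{-1}(U))^{\Diamond}$ and $(\ell_n^{-1}(U))^{\Box}$ (for $U$ open in $X_n$) form a subbase of $\V L$. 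By \autoref{prVnat} these equal $(\V\ell_n)^{-1}(U^{\Diamond})=m^{-1}(p_n^{-1}(U^{\Diamond}))$ and $m^{-1}(p_n^{-1}(U^{\Box}))$ respectively; since $m$ is a bijection, their images under $m$ are $p_n^{-1}(U^{\Diamond})$ and $p_n^{-1}(U^{\Box})$, which are open in $M$ because $p_n$ is continuous. Finally, every open set of $\V L$ is a union of finite intersections of such subbasic sets, and a bijection preserves unions and finite intersections, so $m$ carries it to a union of finite intersections of open subsets of $M$, hence to an open set. Therefore $m$ is an open continuous bijection, hence a homeomorphism, and the proof is complete.
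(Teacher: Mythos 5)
Your proposal is correct and follows essentially the same route as the paper: injectivity via the joint monicity of the maps $\V\ell_n$ (using that compact subsets of the Hausdorff space $L$ are closed), surjectivity via the limit of the restricted chain of compact Hausdorff spaces with surjective projections, and the topological step via \autoref{R:Vietoris-tech}.\ref{R:Vietoris-tech:4} and \autoref{prVnat}. The only cosmetic differences are that you prove openness of $m$ instead of continuity of its inverse (equivalent for a bijection) and you obtain compactness of the image by taking the continuous image of $\lim_n R_n$ rather than verifying the subspace topology as the paper does.
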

%
\begin{proof}
  Consider an $\omega^\opp$-chain as in~\eqref{Xsequence}. Let
  $M = \lim \V X_n$, with limit cone $r_n \colon M \to \V X_n$.  Let
  $m\colon \V L \to M$ be the unique continuous map such that
  $\V \ell_n = r_n \o m$ for all $n < \omega$. We shall prove that $m$
  is a bijection and then that its inverse is continuous, which proves
  that~$m$ is an isomorphism.
  \begin{enumerate}
  \item\label{P:limit-Haus:1} Injectivity of $m$ follows from the
    fact that $\V\ell_n$ ($n < \omega$) forms a jointly monic family,
    as we will now prove. Suppose that $A, B \in \V L$ satisfy
    $\ell_n[A ] = \ell_n[B]$ for every $n < \omega$. We prove that $A
    \subseteq B$; by symmetry $A = B$ follows. Given $a \in A$, we show that every
    open neighbourhood of~$a$ has a nonempty intersection with
    $B$. Since $B$ is closed, we then have $a \in B$.
    It suffices to prove the desired property for the basic open
    neighbourhoods $\ell_n^{-1}(U)$ of~$a$, for~$U$ open in $X_n$ (\Cref{P:chain}\ref{R:Vietoris-tech:1}). Since
    $\ell_n[A] = \ell_n[B]$ we have some $b \in B$ which satisfies $\ell_n(a) =
    \ell_n(b)$. Then we have $b \in \ell_n^{-1}(U) \cap B$. 

  \item Surjectivity of $m$.  An element of $M$ is a sequence
    $(K_n)_{n < \omega}$ of compact (hence closed) subsets
    $K_n \subseteq X_n$ such that $f_n[K_{n+1}] = K_n$ for every
    $n < \omega$. We need to find a compact set $K \subseteq L$ such
    that $\ell_n[K] = K_n$ for every $n < \omega$.  With the subspace
    topology, $K_n$ is itself a compact space.  The connecting maps
    $f_n\colon X_{n+1} \to X_n$ restrict to surjective continuous maps
    $K_{n+1} \epito K_n$.
    Thus, the spaces $K_n$ form an $\omega^\opp$-chain
    of surjections in $\KHaus$. Let $K$ be the limit with
    projections $p_n \colon K \epito K_n$.  Then $K$ is a subset of
    $L$, and each projection $p_n$ is the restriction of $\ell_n\colon
    L \to X_n$.

    Let us check that the topology on $K$ is the subspace topology
    inherited from $L$. A base of the topology on $K$ is the
    family of sets $p_n^{-1}(U)$ as $U$ ranges over the open subset
    of $K_n$.  Each $U$ is of the form $V\cap K_n$ for some open
    $V$ of $X_n$, and $p_n^{-1}(U) = \ell_n^{-1}(V)\cap K$. Thus,
    $p_n^{-1}(U)$ is open in the subspace topology, and the converse
    holds as well.

    The maps $p_n$ are surjective by
    \Cref{R:Vietoris-tech}\ref{R:Vietoris-tech:3}. Moreover, $K$
    is a compact space by
    \Cref{R:Vietoris-tech}\ref{R:Vietoris-tech:2}.  Thus, $K$ is
    the desired compact set in $\V L$ such that $p_n[K] = K_n$ for all $n$.
    
  \item Finally, we prove that the inverse $k\colon M \to \V L$, say,
    of $m$ is continuous. We know that the sets~$\ell_n^{-1}(U)$, for
    $U$ open in $X_n$, form a base of $L$. Moreover, this base is
    closed under finite unions. By
    \Cref{R:Vietoris-tech}\ref{R:Vietoris-tech:4} and using
    \Cref{prVnat}, we obtain that $\V L$ has a subbase given by the following
    sets
    \[
      (\V \ell_n)^{-1}(U^\Diamond) = (\ell_n^{-1}(U))^\Diamond
      \quad
      \text{and}
      \quad
      (\V \ell_n)^{-1}(U^\Box) = (\ell_n^{-1}(U))^\Box
      \quad
      \text{for $U$ open in $X_n$.}
    \]
    It suffices to show that the inverse images of these subbasic open
    sets of $\V L$ are open in $M$. For $\V\ell_n^{-1} (U^\Diamond)$
    with $U$ open in $X_n$ we use that $\V \ell_n \cdot k = r_n$
    clearly holds to obtain
    \[
      k^{-1}\big(\V\ell_n^{-1}(U^\Diamond)\big) = r_n^{-1}(U^\Diamond),
    \]
    which is a basic open set of $M$ by
    \Cref{P:chain}\ref{R:Vietoris-tech:1}. For the subbasic
    open sets $\V\ell_n^{-1}(U^\Box)$, the proof is similar.
    \qedhere
  \end{enumerate}
\end{proof}
\begin{corollary}\label{C:limit-KHaus}
  The restrictions of $\V$ to $\KHaus$ and $\Stone$ preserve limits of
  $\omega^\opp$-chains. 
\end{corollary}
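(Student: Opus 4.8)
The plan is to reduce the statement to \autoref{P:limit-Haus}, exploiting that $\KHaus$ and $\Stone$ sit inside $\Haus$ as full subcategories closed under limits. Let $\C$ denote either $\KHaus$ or $\Stone$, and consider an $\omega^\opp$-chain $(X_n, f_n)_{n<\omega}$ in $\C$. By \autoref{R:Vietoris-tech}.\ref{R:Vietoris-tech:2}, the inclusions $\C \hookrightarrow \Haus \hookrightarrow \Top$ preserve and reflect limits, so the limit $L$ of this chain (with cone $\ell_n\colon L \to X_n$) may be computed in $\Top$; in particular $L$ again lies in $\C$, and it is simultaneously the limit of $(X_n)$ in $\Haus$ and in $\C$.

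First I would apply \autoref{P:limit-Haus}: since $\V\colon \Haus \to \Haus$ preserves the limit of $\omega^\opp$-chains, the cone $(\V\ell_n\colon \V L \to \V X_n)_{n<\omega}$ is a limit cone in $\Haus$. Next, by \autoref{N:V} (which rests on \autoref{R:Vietoris}.\ref{R:Vietoris:2}--\ref{R:Vietoris:4}), each space $\V X_n$ lies in $\C$, and so does $\V L$; thus $(\V X_n, \V f_n)_{n<\omega}$ is an $\omega^\opp$-chain in $\C$. Finally, invoking \autoref{R:Vietoris-tech}.\ref{R:Vietoris-tech:2} once more, the limit of $(\V X_n)$ in $\C$ agrees with its limit in $\Haus$, namely $\V L$ equipped with the cone $(\V\ell_n)$. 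Hence the restriction $\V\colon \C \to \C$ sends the limit cone of $(X_n)$ to the limit cone of $(\V X_n)$, i.e.\ it preserves limits of $\omega^\opp$-chains.

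There is essentially no obstacle here beyond bookkeeping: one only needs that all the relevant limits coincide (because the inclusions $\C \hookrightarrow \Haus \hookrightarrow \Top$ create limits, \autoref{R:Vietoris-tech}.\ref{R:Vietoris-tech:2}) and that $\V$ does not leave the subcategory $\C$ (\autoref{N:V}). With these in place the corollary is immediate from \autoref{P:limit-Haus}.
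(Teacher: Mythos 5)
Your proposal is correct and follows essentially the same route as the paper, whose proof is the one-line instruction to use \autoref{R:Vietoris-tech}.\ref{R:Vietoris-tech:2}: since $\KHaus$ and $\Stone$ are closed under limits and $\V$ restricts to them, the limit preservation transfers directly from \autoref{P:limit-Haus}. You have merely spelled out the bookkeeping that the paper leaves implicit.
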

\noindent
Indeed, use \Cref{R:Vietoris-tech}\ref{R:Vietoris-tech:2}.
\begin{rem}\label{r:codirected}
  A \emph{codirected limit} is the limit of a diagram whose scheme is
  of the form $P^\opp$ for a directed poset $P$.
  \Cref{P:limit-Haus} and \Cref{C:limit-KHaus} hold more
  generally for codirected limits.  The argument is the same.  This
  proves a result stated by Zenor~\cite{zenor70}, but with an
  incomplete proof.
\end{rem}

The following definition is due to Kupke et al.~\cite{kkv} for Stone
spaces, whereas Hofmann et al.~\mbox{\cite[Def.~2.29]{HNN19}} use general
topological spaces, but they later essentially restrict constants to be
(compact) Hausdorff, stably compact or spectral spaces.
%
\begin{defn}
  \label{D:Vietoris-poly}
  The \emph{Vietoris polynomial functors} are the endofunctors on
  $\Top$ built from the Vietoris functor $\V$, the constant functors,
  and the identity functor, using product, coproduct, and composition.
  Thus, the Vietoris polynomial functors are built according to the
  following grammar
  \[
    \textstyle
    F ::= \V \mid A \mid \Id \mid \prod_{i\in I} F_i \mid
    \coprod_{i\in I} F_i \mid FF,
  \]
  where $A$ ranges over all topological spaces and $I$ is an index set.
\end{defn} 
\begin{theorem}\label{T:Vietoris}
  Let $F\colon \Top\to \Top$ be a Vietoris polynomial functor, and
  assume that all constants in $F$ are Hausdorff spaces.  Then the terminal-coalgebra chain for
  $F$ converges in~$\omega$ steps, and
  $\nu F = V_\omega$ is a Hausdorff space.
\end{theorem}
\begin{proof}
  An easy induction on Vietoris polynomial functors $F$  
  shows that:
  \begin{enumerate}
  \item The functor $F$ has a restriction $F_0\colon \Haus \to \Haus$,
  \item The restriction $F_0$ preserves surjective maps; the most
    important step being for $\V$ itself, and this uses the fact when
    $f\colon X\to Y$ is continuous and $X$ and $Y$ are Hausdorff, the inverse
    images of compact sets are compact.
  \item  The functor $F_0$ preserves limits of
    $\omega^\opp$-chains; the most important step is done in
    \Cref{P:limit-Haus}.
  \end{enumerate}
  The terminal coalgebra result for $F_0$ follows from the fact which
  we have mentioned in
  \Cref{S:preliminaries}: $\nu F$ is
  the limit of the terminal-coalgebra $\omega^\opp$-chain $V^{F_0}_n$
  ($n < \omega$).  Since $\Haus$ is closed under limits in~$\Top$ and
$V^{F_0}_n = V^{F}_n$, the functor $F$ has the same terminal coalgebra
  $\nu F = \lim F^n 1$.
\end{proof}
\begin{corollary}
  Let $F\colon \Top\to \Top$ be a Vietoris polynomial functor, and
  assume that all constants in $F$ are Hausdorff spaces. Then $F$ has
  an initial algebra.
\end{corollary}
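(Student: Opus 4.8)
The plan is to obtain the initial algebra exactly as \autoref{C:K-I} was obtained from \autoref{T:Kripke}: invoke \autoref{T:initial} with $\M$ the class of \emph{all} monomorphisms of $\Top$. Three ingredients are needed. First, $\Top$ has smooth monomorphisms by \autoref{E:cms}.\ref{E:cms:set}; these are precisely the injective continuous maps. Second, under the stated hypothesis that all constants in $F$ are Hausdorff, \autoref{T:Vietoris} already provides a terminal coalgebra for $F$ as an endofunctor of $\Top$, namely $\nu F = V_\omega$. Third --- the one point that actually requires an argument --- $F$ preserves monomorphisms of $\Top$. Granting this, \autoref{T:initial} immediately yields an initial algebra for $F$.

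For the preservation claim I would run a routine induction on the grammar of \autoref{D:Vietoris-poly}. Constant functors and $\Id$ trivially preserve injective continuous maps. For $\V$: if $f\colon X \to Y$ is injective and continuous and $A \neq B$ are compact subsets of $X$, pick (say) $a \in A \setminus B$; then $f(a) \notin f[B]$, since $f(a) = f(b)$ with $b \in B$ would force $a = b \in B$ by injectivity of $f$, so $\V f(A) = f[A] \neq f[B] = \V f(B)$, and $\V f$ is continuous by \autoref{V-functor}. Products and coproducts preserve monomorphisms because the forgetful functor $\Top \to \Set$ is faithful (hence reflects monomorphisms) and preserves products and coproducts, while a product or coproduct of injections in $\Set$ is again an injection; and the composite of two monomorphism-preserving functors again preserves monomorphisms. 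This closes the induction.

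The only subtlety worth flagging is that we must apply \autoref{T:initial} to $F$ viewed as an endofunctor of $\Top$, not merely to its restriction $F_0$ to $\Haus$; this is legitimate precisely because \autoref{T:Vietoris} shows that $\nu F$, computed in $\Top$, equals $\lim F^n 1$ and happens to lie in $\Haus$, so the terminal-coalgebra hypothesis of \autoref{T:initial} is met in $\Top$ itself. Moreover, by the remark following \autoref{T:initial}, since the structure map of the terminal coalgebra $\nu F$ is an isomorphism --- hence a monomorphism in $\M$ --- the initial algebra can in fact be taken to be a subalgebra of $\nu F$. I do not expect any genuine obstacle here: the whole content is the short verification that Vietoris polynomial functors preserve monomorphisms, everything else being a direct appeal to \autoref{T:Vietoris}, \autoref{E:cms}.\ref{E:cms:set}, and \autoref{T:initial}.
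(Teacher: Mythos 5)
Your argument is correct and is essentially the paper's own proof: the paper likewise derives the corollary from \autoref{T:Vietoris}, the smoothness of monomorphisms in $\Top$ (\autoref{E:cms}.\ref{E:cms:set}), and \autoref{T:initial}, noting that an easy induction shows $F$ preserves monomorphisms. Your spelled-out induction (injectivity of $\V f$, products/coproducts via the faithful limit- and coproduct-preserving forgetful functor, closure under composition) is exactly the "easy induction" the paper leaves to the reader.
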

\noindent
This follows from \Cref{T:Vietoris},
\Cref{E:cms}\ref{E:cms:set} and \Cref{T:initial}, since an easy
induction shows that $F$ preserves monomorphisms.
\begin{corollary}\label{C:KHaus}
  Let $F\colon \Top\to \Top$ be a Vietoris polynomial functor in which
  all constants are \emph{compact} Hausdorff spaces and only finite
  coproducts are used. Then the terminal coalgebra $\nu F$ is a
  compact Hausdorff space.
\end{corollary}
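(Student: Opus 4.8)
The plan is to run the same induction on the structure of the Vietoris polynomial functor~$F$ as in the proof of \autoref{T:Vietoris}, but this time carrying along the stronger invariant that the restriction $F_0$ actually lands in~$\KHaus$, not merely in~$\Haus$. By \autoref{T:Vietoris} we already know that $\nu F = V_\omega = \lim_{n<\omega} F^n 1$ computed in $\Top$ (equivalently in $\Haus$), so it suffices to show that every space $F^n 1$ lies in $\KHaus$ and then invoke \autoref{R:Vietoris-tech}.\ref{R:Vietoris-tech:2}, which says that $\KHaus$ is closed under limits in~$\Top$. Thus the whole task reduces to verifying: \emph{under the hypotheses of the corollary, $F$ restricts to an endofunctor $F_0\colon \KHaus\to\KHaus$.}

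The induction is straightforward. For the base cases: the constant functors are $\KHaus$-valued by hypothesis; the identity functor trivially restricts; and $\V$ restricts to $\KHaus\to\KHaus$ by \autoref{R:Vietoris}.\ref{R:Vietoris:3} (as already recorded in \autoref{N:V}). For the inductive step, suppose $F_1,\dots$ and $G$ restrict to $\KHaus$. A \emph{finite} product $\prod_{i\in I}F_i$ with $I$ finite restricts, since a finite (indeed arbitrary) product of compact Hausdorff spaces is compact Hausdorff by Tychonoff, and Hausdorffness is likewise productive. A \emph{finite} coproduct $\coprod_{i\in I}F_i$ with $I$ finite restricts, since a finite disjoint union of compact Hausdorff spaces is compact Hausdorff (this is exactly where the restriction to finitely many coproducts is needed — an infinite discrete coproduct of nonempty spaces is never compact). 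Composition $F_0 G_0$ restricts because the composite of two functors each landing in $\KHaus$ again lands in $\KHaus$. Note that the hypothesis ``only finite coproducts are used'' must be read as a global constraint on the syntax tree of~$F$, so that in particular it applies inside the arguments of $\V$ and inside compositions; with that reading the induction goes through verbatim.

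Finally, since $F_0\colon\KHaus\to\KHaus$, an obvious induction (starting from $1\in\KHaus$) shows $F^n 1 = F_0^n 1 \in \KHaus$ for all $n<\omega$, and the connecting maps $F^n{!}$ are morphisms of $\KHaus$. Hence the $\omega^\opp$-chain~\eqref{eq:oop-chain-3} lives in $\KHaus$, its limit taken in $\Top$ coincides with the limit taken in $\KHaus$ by \autoref{R:Vietoris-tech}.\ref{R:Vietoris-tech:2}, and therefore $\nu F = V_\omega$ is a compact Hausdorff space. I do not anticipate a genuine obstacle here; the only subtlety worth flagging is the one already noted — the necessity of the finiteness restriction on coproducts, which is exactly what keeps the coproduct step inside $\KHaus$.
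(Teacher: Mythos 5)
Your proof is correct and follows essentially the same route as the paper: show that under the hypotheses $F$ restricts to an endofunctor on $\KHaus$ (the paper states this restriction without spelling out the induction you give), so the chain $F^n 1$ lies in $\KHaus$, and then conclude via \autoref{R:Vietoris-tech}.\ref{R:Vietoris-tech:2} that the limit $\nu F = \lim_{n<\omega} F^n 1$ is compact Hausdorff. Your added remarks (Tychonoff allowing even infinite products, and the coproduct step being the sole reason for the finiteness hypothesis) are accurate elaborations, not deviations.
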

\begin{proof}
  The functor $F$ restricts to an endofunctor on $\KHaus$. Thus, the
  terminal-coalgebra $\omega^\opp$-chain~$F^n 1$ lies in
  $\KHaus$. Moreover, $\KHaus$ is closed under limits in $\Top$
  because it is a full reflective subcategory
  (\Cref{R:Vietoris-tech}\ref{R:Vietoris-tech:2}). Thus,
  $\nu F = \lim_{n < \omega} F^n 1$ is compact Hausdorff.  \qedhere
\end{proof}

\begin{corollary}\label{C:Stone}
  Let $F\colon \Top\to \Top$ be a Vietoris polynomial functor
  in which all constants are Stone spaces and only finite
  coproducts are used. Then the terminal coalgebra $\nu F$ is a Stone
  space.
\end{corollary}
\noindent
The proof is similar.

\begin{corollary}\label{cor-V-cofree}
  For a Vietoris polynomial functor $F\colon \Top\to \Top$ in which all constants are
  Hausdorff spaces, a cofree comonad is obtained in~$\omega$ steps.
\end{corollary}
\noindent
Indeed, for every Hausdorff space $A$, the functor $F(-) \times A$ is
also a Vietoris polynomial functor in which all constants are
Hausdorff spaces. Now apply \Cref{C:cofree} and
\Cref{T:Vietoris}.

\begin{rem}
  \renewcommand{\itemautorefname}{Item} \Cref{C:KHaus} essentially
  appears in work by Hofmann et al.~\cite[Thm.~3.42]{HNN19} (except
  for the convergence ordinal). \Cref{C:Stone} is due to Kupke et
  al.~\cite{kkv}.  Our proof using convergence of the
  terminal-coalgebra chain is different than the previous ones.
  \renewcommand{\itemautorefname}{item}%
\end{rem}

\begin{example}\label{ex:Vnu}
  The terminal coalgebra for $\V$ itself was identified by
  Abramsky~\cite{abramsky}.  By what we have shown, it is
  $V_{\omega} = \lim \V^n 1$. An easy induction on $n$ shows that
  $\V^n 1$ is $\powf^n 1$ with the discrete topology; the key point is
  that each set $\powf^n 1$ is finite.  We know that $V_{\omega}$ is
  thus the set $T$ in~\eqref{eq:T} of all compactly branching,
  strongly extensional trees.  The topology was described in
  \Cref{P:chain}\ref{R:Vietoris-tech:1}: it has as a base the sets
  $\partial_n^{-1}(U)$ as $U$ ranges over the subsets of $\powf^n 1$.
  By \Cref{T:Vietoris}, $\nu F$ is a Stone space.
\end{example}
\begin{rem}
  Note that \Cref{T:Vietoris} also holds for Vietoris polynomial
  functors when we take $\Haus$ as our base category.
  Hofmann et al.~\cite{HNN19} consider other full subcategories of
  $\Top$, and they
  also study the completeness of the category of coalgebras for Vietoris polynomial
  functors $F$ (however, they restrict to using finite products and
  finite coproducts in their definition of Vietoris polynomial
  functors). For a Vietoris polynomial functor $F$ on $\Haus$, the category of
  coalgebras is complete~\mbox{\cite[Cor.~3.41]{HNN19}}. Moreover,
  every subfunctor of $F$ has a terminal
  coalgebra~\mbox{\cite[Cor.~4.6]{HNN19}}.
\end{rem}
\begin{rem}
  Hofmann et al.~\cite[Ex.~2.27(2)]{HNN19} also consider a related
  construction called the \emph{lower Vietoris space} of $X$. It is
  the set of all closed subsets of $X$ with the topology generated by
  all sets $U^\Diamond$, cf.~\eqref{eq:predlift}. This again yields a
  functor on $\Top$: a given continuous function is mapped to
  $A \mapsto \ol{f[A]}$, where $\ol{f[A]}$ denotes the closure
  of~$f[A]$. Furthermore, one has a corresponding notion of lower
  Vietoris polynomial functors. They prove that for such functors $F$
  on the category of stably compact spaces
  (defined in op.~cit.), $\Coalg F$ is
  complete~\cite[Thm.~3.35]{HNN19}. Furthermore, if a lower Vietoris
  polynomial functor $F$ on $\Top$ can be restricted to that category,
  then its terminal-coalgebra chain converges in $\omega$ steps:
  $\nu F = V_\omega$~\cite[Thm.~3.36]{HNN19}. Similar results hold for
  spectral spaces and spectral maps.
\end{rem}
\begin{rem}\label{R:notAcc}
  Let us mention a very general result which applies in many
  situations to deliver a terminal coalgebra: Makkai and Par\'e's
  Limit Theorem~\cite[Thm.~5.1.6]{makkaipare}.  It implies that every
  accessible endofunctor $F\colon\A\to\A$ on a locally presentable
  category has an initial algebra and a terminal coalgebra.  (Indeed,
  the theorem implies that the category of $F$-coalgebras is
  cocomplete.)  This result cannot be used here because $\Haus$ is not
  locally presentable:  it does not have
  a small set of objects that is colimit-dense~\cite[Prop.~8.2]{ahrt23}.
\end{rem}
\begin{oproblem}
  \begin{enumerate}
  \item\label{OP:1} Does every Vietoris polynomial functor on the category~$\Top$ have a
    terminal coalgebra?
  \item Does every Vietoris polynomial functor on $\KHaus$
    in which all constants are compact Hausdorff 
    have an initial algebra?
  \end{enumerate}
\end{oproblem}

\noindent
\renewcommand{\itemautorefname}{Item}%
\Cref{OP:1} above is equivalent to asking whether the result that $\nu F$ exists
for every Vietoris polynomial functor would remain true if we allowed
non-Hausdorff constants.
\renewcommand{\itemautorefname}{item}%

\section{Hausdorff Polynomial Functors}
\label{S:Hausdorff}

Analogously to the Vietoris polynomial functors on $\Top$, we introduce Hausdorff polynomial
functors on $\MS$, the category of
extended metric spaces (distances can be $\infty$) and non-expanding maps.
 Closer to the situation of Kripke polynomial functors on $\Set$ than to
Vietoris polynomial functors on $\Top$, for the Hausdorff polynomial functors on $\MS$, the
terminal-coalgebra chain converges in $\omega+\omega$ steps.
\begin{rem}
  \begin{enumerate}
  \item The functors $\V\colon\Top\to\Top$ and $\H\colon \Met \to \Met$ are
    closely related: for compact metric spaces $X$ the Vietoris space
    $\V X$ is precisely the topological space induced by the Hausdorff
    space $\H X$.
    
  \item Some authors define $\H X$ to consist of all \emph{nonempty} compact subsets of
    $X$. However, Hausdorff~\cite{Hausdorff14} did not exclude $\emptyset$, and the formula in
    \cref{E:Hausdorff}\ref{E:Hausdorff:1} works (as already indicated) without such an
    exclusion.
  \end{enumerate}
\end{rem}

\begin{rem}\label{R:Haus}
  \begin{enumerate}
  \item For a complete metric space, $\H X$ is complete again (see
    e.g.~Barnsley~\mbox{\cite[Thm.~7.1]{Barnsley93})}.  Thus, $\H$
    restricts to an endofunctor on the category $\CMS$ of complete metric
    spaces, which we denote by the same symbol $\H$.

  \item Let $\UMet$ denote the category of (extended) ultrametric
    spaces: the full subcategory of $\MS$ given by spaces satisfying
    the following stronger version of the triangle inequality:
    \[
      d(x,z) \leq \max\set{d(x,y),d(y,z)}.
    \]
    If $X$ is an ultrametric space, then so is $\H X$.  To see this,
    let $S, T, U\in \H X$.  Write $p$ for
    $\max\set{\bar{d}(S,T), \bar{d}(T,U)}$.  For each $x\in S$, there
    is some $y\in T$ such that $d(x,y)\leq \bar{d}(S,T)$.  For this
    $y$, there is some $z\in U$ such that $d(y,z)\leq \bar{d}(T,U)$.
    So
    \[
      d(x,z) \leq \max\set{d(x,y), d(y,z)} \leq \max\set{\bar{d}(S,T),
        \bar{d}(T,U)} = p.
    \]
    It follows that $d(x,U) \leq p$.  This for all $x\in X$ shows that
    $d(S,U) \leq p$.  Note that
    $p = \max\set{\bar{d}(U,T), \bar{d}(T,S)}$.  The same argument
    shows that $\sup_{z\in U} d(z,S) \leq p$.  So we have
    $\bar{d}(S,U) \leq p$.  This proves the ultrametric inequality.

    We again denote the restriction of the Hausdorff functor to
    $\UMet$ by $\H$.

  \item For a discrete metric space $X$ (where all distances are $0$
    or $1$), $\H X$ is the discrete space formed by all 
    finite subsets of $X$.

  \item\label{R:Haus:3} For every metric space $X$, the
    nonempty finite subsets of $X$ form a dense set in~$\H X$. Indeed,
    given a nonempty compact set $S \subseteq X$, for every $\eps > 0$, there
    exists a nonempty finite set $T \subseteq S$ such that  $S$ is covered
    by $\eps$-balls around the points in $T$. Therefore
    $d(x, T) \leq \eps$ for all $x \in S$, and we have $d(y,S) = 0$
    for all $y \in T$. This implies that $\bar d(S,T) \leq \eps$.
  \end{enumerate}
\end{rem}
\begin{example}\label{E:Haus-ter}
  For the Hausdorff functor, a terminal coalgebra is carried by the space of all finitely
  branching strongly extensional trees equipped with the discrete metric. This follows from the
  finite power-set functor $\powfin$ having its terminal coalgebra formed by those trees
  (\Cref{T:s-ext}\ref{T:s-ext:2}). Indeed, the terminal-coalgebra chain $V_i$ ($i \in \Ord$)
  for $\H$ is obtained by equipping the sets in the terminal-coalgebra chain for $\powfin$ with
  the discrete metric. Furthermore, since limits in $\Met$ (or $\CMS$) are set-based and limits
  of discrete spaces are discrete again, we see that, since the terminal-coalgebra chain for
  $\powfin$ converges in $\omega+\omega$ steps (by \cref{C:set-kvec}), so is the one for the
  Hausdorff functor; in symbols: $\nu\H = V_{\omega+ \omega}$.
\end{example}

It follows that, unlike the Vietoris functor, the Hausdorff functor does not preserve limits of
$\omega^\opp$-chains: the terminal-coalgebras chain for $\H$ does not converge before
$\omega + \omega$ steps (see \Cref{E:Haus-ter}). Thus, this functor does not preserve the limit
$V_\omega = \lim_{n <\omega} V_n$.

\begin{defn}
  Let $(X_n)_{n <\omega}$ be an $\omega^\opp$-chain in $\MS$. A cone
  $r_n\colon M \to X_n$ is \emph{isometric} if for all $x, y \in M$ we
  have $d(x,y) = \sup_{n < \omega} d (r_n(x), r_n(y))$.
\end{defn}

By \Cref{P:chain}\ref{R:Haus:4}, limit cones of
$\omega^\opp$-chains in $\MS$ are isometric.

\begin{proposition}\label{P:Haus-limit}
  The Hausdorff functor preserves isometric cones of
  $\omega^\opp$-chains.
\end{proposition}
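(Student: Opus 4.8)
The plan is to show that applying $\H$ to an isometric cone $r_n\colon M\to X_n$ produces an isometric cone $\H r_n\colon \H M\to \H X_n$ over the chain $(\H X_n)_{n<\omega}$. Functoriality of $\H$ already gives the cone equations, so the whole content is the isometry equation for $\H M$: writing $f_{m,n}\colon X_m\to X_n$ ($m\ge n$) for the composite connecting map, so that $r_n=f_{m,n}\cdot r_m$, and $\bar d$ for Hausdorff distance, I must prove that for all compact $S,T\subseteq M$
\[
  \bar d(S,T)\;=\;\sup_{n<\omega}\,\bar d\big(r_n[S],\,r_n[T]\big).
\]
The inequality ``$\ge$'' is immediate, since each $r_n$ is non-expanding and hence so is $\H r_n$, so every term on the right is at most $\bar d(S,T)$.

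For ``$\le$'', let $c=\sup_n\bar d(r_n[S],r_n[T])$. As $\bar d$ is the maximum of two symmetric suprema, it is enough to show $d(x,T)\le c$ for every $x\in S$, together with the symmetric statement $d(y,S)\le c$ for every $y\in T$ (and the empty-set cases are trivial: $r_n[\emptyset]=\emptyset$, and $\bar d(A,\emptyset)=\infty$ for $A\ne\emptyset$, so the two sides agree). Assume $T\ne\emptyset$, fix $x\in S$, and for each $n$ set
\[
  T_n\;=\;\set{\,y\in T\;:\;d\big(r_n(x),\,r_n(y)\big)\le c\,}.
\]
Each $T_n$ is closed in $T$, being the preimage of a closed ball under $r_n$ restricted to $T$; and each is nonempty, because $r_n[T]$ is compact, so the infimum in $d(r_n(x),r_n[T])\le\bar d(r_n[S],r_n[T])\le c$ is attained at some $r_n(y)$ with $y\in T$. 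Moreover $(T_n)_{n<\omega}$ is \emph{decreasing}: if $m\ge n$ and $y\in T_m$, then applying the non-expanding map $f_{m,n}$ to $r_n=f_{m,n}\cdot r_m$ yields $d(r_n(x),r_n(y))\le d(r_m(x),r_m(y))\le c$, i.e.\ $y\in T_n$.

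Since $T$ is compact, this decreasing chain of nonempty closed subsets has nonempty intersection; pick $y\in\bigcap_n T_n$. Then $d(r_n(x),r_n(y))\le c$ for all $n$, and because the cone $(r_n)$ is isometric, $d(x,y)=\sup_n d(r_n(x),r_n(y))\le c$; hence $d(x,T)\le d(x,y)\le c$. The symmetric argument gives $d(y,S)\le c$ for all $y\in T$, so $\bar d(S,T)\le c$, completing the proof. The one genuinely delicate step is this last appeal to compactness of $T$ to amalgamate the level-by-level approximants into a single witness $y\in T$ that works simultaneously at all levels; this is exactly where restricting to \emph{compact} rather than merely closed subsets is used, and it is the same phenomenon that makes $\H$ fail to preserve the limit $V_\omega$, so that its terminal-coalgebra chain needs a second $\omega$ steps to converge.
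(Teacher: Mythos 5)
Your proof is correct, and it takes a genuinely different route from the paper's at the key step. The paper first reduces to \emph{finite} nonempty $S,T$ using the density of finite sets in $\H M$ (\autoref{R:Haus}.\ref{R:Haus:3}) together with non-expansiveness of the cone maps, and then exploits finiteness twice: to choose one level $k$ that works for all $y\in T$ up to a slack $\eps$, and to pick a $y$ attaining $d(\ell_k(x),\ell_k[T])$; this yields $\bar d(S,T)\le c+\eps$ for every $\eps>0$. You instead work directly with arbitrary compact $S,T$: for a fixed $x\in S$ you form the sets $T_n=\set{y\in T : d(r_n(x),r_n(y))\le c}$, check they are closed and nonempty (attainment of the infimum on the compact set $r_n[T]$), and decreasing (this is where the cone compatibility $r_n=f_{m,n}\cdot r_m$ with non-expanding connecting maps enters, which you correctly invoke), and then use the finite intersection property of the compact set $T$ to produce a single $y\in T$ witnessing all levels at once, after which the isometric-cone identity gives $d(x,y)\le c$ exactly, with no $\eps$ and no density reduction. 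What each approach buys: yours avoids the (slightly delicate) passage from a dense subset to all of $\H M$ and makes transparent exactly where compactness, as opposed to mere closedness, is used --- consistent with the failure of such arguments for $\powcl$; the paper's argument keeps the pointwise work to finite sets, where attainment and uniform choice of $k$ are trivial, at the price of the density step and the $\eps$ bookkeeping. Your closing remark tying this to the failure of $\H$ to preserve the limit $V_\omega$ is loose (preserving isometric cones is exactly what the proposition salvages despite that failure), but it is commentary and does not affect correctness.
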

\begin{proof}
  Let $(X_n)_{n < \omega}$ be an $\omega^\opp$-chain with connecting
  maps $f_n\colon X_{n+1} \to X_n$. Given an isometric cone
  $\ell_n\colon M \to X_n$ ($n < \omega$), we prove that the cone
$\H \ell_n\colon \H M \to \H X_n$ is also isometric:
  \[
    \bar d(S, T) = \sup_{n < \omega} \bar d (\ell_n(S), \ell_n(T))
    \qquad
    \text{for all compact subset $S, T \subseteq M$.}
  \]
  We can assume that $S$ and $T$ are nonempty and finite: since finite sets are
  dense in $\H M$ by \Cref{R:Haus}\ref{R:Haus:3}, and the maps
  $\ell_n$ are (non-expanding whence) continuous, the desired equality
  then holds for all pairs in $\H M$. The case where $S$ or $T$ is
  empty is trivial.

  Since every $\ell_n$ is non-expanding, we only need to prove that
  $\bar d(S,T) \leq c$ holds for
  $c = \sup_{n < \omega} \bar d(\ell_n[S], \ell_n[T])$.  For this,
we   show that for every $\eps > 0$, $\bar d(S,T) \leq c +
  \eps$. By the definition of the Hausdorff metric $\bar d$, it
  suffices to prove that for every $x \in S$ we have
  $d(x,T) \leq c + \eps$. By symmetry, we then also have
  $d(y, S) \leq c + \eps$ for every $y \in T$.

  Given $y \in T$ we have $d(x,y) = \sup_{n < \omega} d(\ell_n(x),
  \ell_n(y))$. Thus, there is a $k < \omega$ such that
  \[
    d(x,y) \leq d(\ell_k(x),\ell_k(y)) + \eps. 
  \]
  Since $T$ is finite, we can choose $k$ such that this inequality
  holds for all $y \in T$. By definition, 
  \[
   \bar d(\ell_k(x), \ell_k[T])
    =
    \inf_{y \in T} d(\ell_k(x), \ell_k(y))
    \qquad\text{in $X_k$}.
  \]
  Again using that  $T$ is finite, we can pick some $y \in T$ such that
  $d(\ell_k(x), \ell_k[T]) = d(\ell_k(x),\ell_k(y))$. With this $y$ we
  conclude that 
  \begin{align*}
    d(x,T)
    &
    \leq d(x,y) \leq d(\ell_k(x), \ell_k(y)) + \eps
    \\
    &
    = d(\ell_k(x),
    \ell_k[T]) + \eps \leq \bar d(\ell_k[S], \ell_k[T]) + \eps
    \\
    &
    \leq c + \eps. \tag*{\qedhere}
  \end{align*}  
\end{proof}
\begin{rem}\label{R:Haus-pres}
  The Hausdorff functor preserves isometric embeddings and their
  intersections. Indeed, for every subspace $X$ of a metric space $Y$,
  a set $S \subseteq X$ is compact in $X$ iff it is so in
  $Y$. Moreover, given $S, T \in \H X$, their distances in $\H X$ and
  $\H Y$ are the same. Thus, $\H$ preserves isometric embeddings.

  Given a collection $X_i \subseteq Y$ ($i \in I$) of subspaces, a set
  $S \subseteq \bigcap_{i\in I} X_i$ is compact iff it is so in~$Y$ (and therefore in every $X_i$). Thus, $\H$ preserves that
  intersection.
\end{rem}
\begin{defn}\label{D:Haus-poly}
  The \emph{Hausdorff polynomial functors} are the endofunctors on
  $\Met$ built from the Hausdorff functor, the constant functors, and
  the identity functor, using product, coproduct, and composition.
  Thus, the Hausdorff polynomial functors are built according to the following
  grammar (cf.~\Cref{D:Kripke}):
  \[
    \textstyle
    F ::= \H \mid A \mid \Id \mid \prod_{i\in I} F_i \mid
    \coprod_{i\in I} F_i \mid FF,
  \]
  where $A$ ranges over all metric spaces and $I$ is an arbitrary
  index set.
\end{defn}
\begin{theorem}\label{T:Haus-poly}
  For every Hausdorff polynomial functor $F\colon \Met \to \Met$, the terminal-coalgebra chain
  converges in $\omega+\omega$ steps: $\nu F = V_{\omega +\omega}$.
\end{theorem}
\begin{proof}
We use~\Cref{P:modest}, taking $\M$ to be the
    class of all isometric embeddings.  An easy induction over the
    structure of Hausdorff polynomial functors shows that each such
    functor $F$ preserves isometric embeddings and their
    intersections.  The base case for $\H$ is due to
    \Cref{R:Haus-pres}.  Another induction shows that each such
    functor $F$ preserves isometric cones.  Here the base case comes
    from \Cref{P:Haus-limit}.  From this second fact, we see that
    the image under $F$ of the limit cone $(V_{\omega} \to V_n)_n$ is
    an isometric cone.  It is then easy to see that the canonical
    $m\colon V_{\omega+1}\to V_{\omega}$ is an isometric embedding.
    So we have verified the hypotheses of~\Cref{P:modest}.
\end{proof}
\begin{rem}\label{R:Hausdorff-poly-cms}
  Note that if a Hausdorff polynomial functor $F$ uses only contants
  given by complete metric spaces $A$,
  then it has a restriction to an endofunctor on
  $\CMS$. Indeed, by an easy induction on the
  structure of $F$ one shows that $FX$ is complete whenever~$X$ is
  complete. Similarly, when $F$ uses constants which are ultrametric
  spaces, then $F$ has a restriction on $\UMet$. 
\end{rem}

Since $\CMS$ and $\UMet$ are closed under limits of
$\omega^\opp$-chains in $\Met$, we obtain the following

\begin{corollary}
  For every Hausdorff polynomial functor on $\CMet$ or $\UMet$, the terminal-coalgebra chain
  converges in $\omega+\omega$ steps.
\end{corollary}
\begin{corollary}\label{C:HP-I}
  Every Hausdorff polynomial functor $F$ on $\Met$ or $\CMS$ has an
  initial algebra.
\end{corollary}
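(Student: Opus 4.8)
The plan is to obtain the initial algebra from the terminal coalgebra by invoking \autoref{T:initial}, whose hypotheses are: a smooth class $\M$ of monomorphisms in the ambient category, and preservation of $\M$ by $F$. The terminal coalgebra is already available --- by \autoref{T:Haus-poly} when $F$ is viewed on $\Met$, and by the preceding corollary when $F$ is viewed on $\CMS$, bearing in mind that, as recorded in \autoref{R:Hausdorff-poly-cms}, a Hausdorff polynomial functor restricts to $\CMS$ precisely when all its constants are complete (so ``$F$ on $\CMS$'' tacitly carries that assumption).

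For $F$ on $\Met$ I would take $\M$ to be the class of all monomorphisms, i.e.\ the injective non-expanding maps; this is smooth by \autoref{E:cms}.\ref{E:cms:met}. I then prove by induction on the structure of $F$ that $F$ preserves monomorphisms. Constant functors and $\Id$ are trivial; for $\H$ one checks directly that $\H f\colon S \mapsto f[S]$ is injective when $f$ is injective, and non-expanding since $\bar d(f[S],f[T]) \leq \bar d(S,T)$; product, coproduct and composition preserve the property routinely. \autoref{T:initial} then yields the initial algebra.

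For $F$ on $\CMS$ the class of all monomorphisms is no longer smooth (\autoref{E:cms}.\ref{E:cms:met}), so I switch to $\M = {}$ the strong monomorphisms, i.e.\ the isometric embeddings, which \emph{are} smooth in $\CMS$. Again an easy induction shows $F$ preserves isometric embeddings: the base step for $\H$ is exactly \autoref{R:Haus-pres}, and constant functors, $\Id$, products, coproducts and composition all preserve isometric embeddings (the latter with the sup-metric on products). \autoref{T:initial} with this $\M$ concludes.

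I do not anticipate a genuine obstacle: the substantive content has been front-loaded into \autoref{T:Haus-poly}, \autoref{R:Haus-pres}, and the cited smoothness facts. The only points requiring a little attention are choosing the appropriate class $\M$ in the $\CMS$ case (since all monomorphisms fail to be smooth there) and making sure the terminal coalgebra actually lives in $\CMS$, which is handled by the complete-constants proviso of \autoref{R:Hausdorff-poly-cms}.
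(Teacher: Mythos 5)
Your proposal is correct and takes essentially the same route as the paper, which likewise obtains the initial algebra by combining \autoref{T:Haus-poly} (resp.\ its corollary for $\CMS$), the smoothness facts of \autoref{E:cms}.\ref{E:cms:met}, and \autoref{T:initial}. The only inessential difference is that the paper uses the class of isometric embeddings uniformly in both $\Met$ and $\CMS$ (these are smooth in both categories and are preserved by every Hausdorff polynomial functor, as shown in the proof of \autoref{T:Haus-poly}), so your separate $\Met$-case with all monomorphisms, while perfectly valid, is not needed.
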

\noindent
Indeed, since Hausdorff polynomial functors preserve isometric embeddings, 
this follows from~\Cref{T:Haus-poly},
\Cref{E:cms}\ref{E:cms:met}, and \Cref{T:initial}.

\begin{corollary}
\label{cor:Haus}
  Every Hausdorff polynomial functor $F$ on $\Met$, $\CMS$ or $\UMet$
  generates a cofree comonad obtained in $\omega+\omega$ steps. 
\end{corollary}
%

%
%

\begin{rem}\label{MP2}
  We have mentioned another possible approach to terminal coalgebras in
  \Cref{R:notAcc}.  Let us comment on the situation regarding the
  results on $\Met$ here.  The category~$\Met$ is locally presentable
  (see e.g.~\cite[Ex.~2.3]{AdamekR22}).  The Limit Theorem does
  imply that on $\MS$, the Hausdorff polynomial functors have terminal
  coalgebras.  In more detail, the Hausdorff functor is finitary: this
  was proved for its restriction to $1$-bounded metric
  spaces~\cite[Sec.~3]{AdamekEA15}, and the proof for $\H$ itself is
  the same. An easy induction then shows that every Hausdorff
  polynomial functor is accessible, so that the Limit Theorem can be
  applied.  However, our elementary proof shows that the terminal
  coalgebra chain converges in $\omega + \omega$ steps.  The proof of
  Makkai and Par\'e's Limit Theorem does not yield such a bound.
\end{rem}
\subsection{A Worked Example of a Terminal Coalgebra}\label{S:example}

Coalgebras for the Hausdorff polynomial functor $FX = \H(\Sigma\times
X)$ are metric labelled transition system with non-expanding
transitions. We shall decribe its terminal coalgebra for a tiny space
of labels, $\Sigma = \set{0,1}$ be the metric given by $d(0,1) =
\delta< 1$, in terms of \emph{edge-labelled trees}.
We will define these in \cref{D:edge} below and establish
connections between edge-labelled trees and (ordinary) trees.  Our
final characterization of $\nu F$ appears in \cref{T:mainmetric}.

In addition to $F$, we are interested in the Hausdorff functor $\H$ itself.

A metric space is a \emph{$\delta$-space} if all distances are $0$,
$\delta$, or $\infty$.  What we use concerning the tiny space~$\Sigma$
in this example is that it is a finite $\delta$-space.  We write
$\Met_{\delta}$ for the full subcategory of $\Met$ determined by the
$\delta$-spaces.  This category is closed under products and under the
Hausdorff functor $\H$.  Thus, we may regard $F$ and~$\H$ as
endofunctors on $\Met_{\delta}$. There is a natural transformation
$\eps\colon F\to \H$ with the components given by
\begin{equation}\label{eps}
  \eps_X\colon \H(\Sigma\times X)\to\H X, \qquad
  \eps_X(s) = \set{x\in X : (a,x)\in s \mbox{ for some $\sigma\in\Sigma$}}.
\end{equation}
The main reason for moving from $\Met$ to $\Met_{\delta}$ is that
doing so ensures that the components~$\eps_X$ are non-expanding maps,
and so $\eps$ is indeed a natural transformation. As we shall see,
this leads to a natural transformation between the terminal-coalgebra chains
of $F$ and $\H$. 

As earlier, we denote by $V_i = \H^i 1$ the terminal-coalgebra chain
of $\H$, with projections $v_{ij}\colon V_i\to V_j$ for $i \geq j$. We
also denote by $\Vbar_i$ the terminal-coalgebra chain of $F$, with
projections $\vbar_{ij}\colon \Vbar_i\to \Vbar_j$ for $i \geq j$.  In
general, we use the `bar' notation for objects and morphisms related
to~$F$ and the `un-barred' notation for the parallel notions related
to $\H$.
  
It is important to see that~$F$ has the same terminal coalgebra whether it is taken as
endofunctor on $\Met$ or on $\Met_{\delta}$.  The terminal-coalgebra chains of the two
endofunctors are identical, and for the chain on $\Met$, we have
$\nu F = \Vbar_{\omega+\omega}$ by~\autoref{T:Haus-poly}.  So the chain on $\Met_{\delta}$ also
converges in $\omega + \omega$ steps and yields the same space $\nu F = \Vbar_{\omega+\omega}$
with the same structure map $\nu F \to F(\nu F)$.  Then by~\cite[second proposition, in dual
form]{A74}, the terminal coalgebra on $\Met_{\delta}$ exists and is the same coalgebra as on
$\Met$.  The same argument shows the same facts concerning $\H$, taken as either an endofunctor
on $\Met$ or on $\Met_{\delta}$.

The functors $F$ and $\H$ are liftings of
  $F_0X = \powfin(\Sigma\times X)$ and $\powfin$, respectively.  This
  means that the diagrams below commute.
  \[
    \begin{tikzcd}
      \Met_{\delta} \ar[r, "F"] \ar[d,swap,"U"] & \Met_{\delta} \ar[d,"U"]\\
      \Set \ar[r,"F_0"]   & \Set \\
    \end{tikzcd}
    \qquad
    \begin{tikzcd}
      \Met_{\delta} \ar[r, "\H"] \ar[d,swap,"U"] & \Met_{\delta} \ar[d,"U"]\\
      \Set \ar[r,"\powfin"]   & \Set \\
    \end{tikzcd}
  \]
  Here $U\colon \Met_{\delta}\to\Set$ is the forgetful functor taking a
  space to its set of points.  The commutativity relies on the fact
  that for every $\delta$-space $X$, the compact subsets of $X$ are
  exactly the finite subsets.

  We have seen that $\nu \H$ is $V_{\omega+\omega}$, and at this point we
  know that its underlying space is the set~$\nu \powfin$.%

  Our aim is to represent
  $\nu F$ as a set of objects which are trees with additional structure
(an edge labelling)
  also to express the metric  on  $\nu F$ solely using concepts to trees and to edge-labelled trees.

  When we need to refer to the terminal-coalgebra chain of $F_0$, 
  we use notation like $V^{F_0}_i$.
For example,   a transfinite
  induction on the ordinal $i$ shows that $U \Vbar_i = \Vbar^{F_0}_i$.  We are
  using that $U$ preserves limits.

The terminal coalgebra chains of $F$ and $\H$ are functors $\Vbar, V\colon\Ord\to\Met$.
There is a unique natural transformation
 $\psi\colon \Vbar\to V$.
This 
sequence of
morphisms $\psi_i\colon \Vbar_i \to V_i$ is: 
\begin{equation}\label{eq:psi}
\begin{aligned}
  \psi_0 & =  \text{the identity on the one-point space}\\
  \psi_{i+1} & = \eps_{V_i} \o F \psi_i \\
  \psi_i & =  \text{the unique morphism such that $v_{ij} \o \psi_i =
  \psi_j \o \vbar_{ij}$ for all $j< i$}
\end{aligned}
\end{equation}
where the last line above is the case for a limit ordinal $i$.
We leave to the reader the easy verification that  $\psi\colon \Vbar\to V$ is the unique natural transformation.

\begin{lemma}
The metric $d_i$ on each space $\Vbar_{i}$ is
  completely determined by $\psi_i: \Vbar_i\to V_i$, as 
  \begin{equation}
    \label{eq:metrics}
    d_i(s,t) = \left\{
    \begin{array}{ll}
    0 & \text{if $s = t$},\\
      \delta & \text{if $s \neq t$ and $\psi_i(s) = \psi_i(t)$}, \\
      \infty & \text{if $\psi_i(s) \neq \psi_i(t)$.}
    \end{array}
    \right.
  \end{equation}
  \end{lemma}
  
    \begin{proof}
      The proof is by transfinite induction on $i$.
  Since $\Vbar_i$ is a $\delta$-space, this boils down to showing that
  $d_i(s,t) =\delta$ if and only if $s\neq t$ and
  $\psi_i(s) = \psi_i(t)$.

  For
  $i= 0$, the result is trivial.  Assuming \eqref{eq:metrics} for $i$,
  we prove it for $i + 1$.  Suppose that $d_{i+1}(s,t) =\delta$.  Then
  $s\neq t$.  Let us check that
  $\psi_{i+1}(s) \subseteq \psi_{i+1}(t)$.  Let
  $x\in (\eps_{V_i}\o F\psi_i)(s)$.  By definition of
  $\eps_{V_i}$, there is some $(\sigma,s_0)\in \Sigma\times \Vbar_i$
  so that $\psi(s_0) = x$.  By definition of the product metric and
  the Hausdorff metric, there is some $(\tau,t_0)\in t$ such that
  \[ 
    d_{i}(s_0,t_0) \leq 
    d_{\Sigma\times \Vbar_i}((\sigma,s_0),(\tau, t_0)) \leq \delta.\]
  So either $s_0 = t_0$, or both $s_0 \neq t_0$ and $d_{i}(s_0,t_0) = \delta$.
  Either way,  $\psi_i(s_0) = \psi_i(t_0)$.  (The second alternative uses the induction hypothesis on $i$.)
  But $(\tau,t_0)\in \Sigma\times \Vbar_i$, and therefore
  \[
    x = \psi_i(s_0) = \psi_i(t_0) \in  (\eps_{V_i} \o F\psi_i)(t)
    = \psi_{i+1}(t),
  \]
  as desired.     
  The reverse inclusion $\psi_{i+1}(s) \supseteq \psi_{i+1}(t)$ holds
  by the same reasoning. 
  
  Conversely, we assume that $s\neq t$ and
  $\psi_{i+1}(s) = \psi_{i+1}(t)$ and prove that
  \mbox{$d_{i+1}(s,t) =\delta$.} The reasoning is quite
  similar.

  Finally, suppose that $i$ is a limit ordinal and that \eqref{eq:metrics} holds for all
  $j < i$.  Let $s\neq t\in \Vbar_i$ be such that $\psi_i(s) = \psi_i(t)$.  Then for all
  $j < i$, $\psi_j(\vbar_{ij}(s)) = \psi_j(\vbar_{ij}(t))$.  By induction hypothesis,
  $d_j(\vbar_{i,j}(s), \vbar_{i,j}(t)) \leq \delta$.  Since the metric on $\Vbar_i$ is the
  supremum of the projection metrics, we see that $ d_i(s,t) \leq \delta$.  As $s\neq t$, we
  have $ d_i(s,t) = \delta$.  Going the other way, suppose that $d_i(s,t) = \delta$.  Then for
  all $j < i$, $d_j(\vbar_{i,j}(s), \vbar_{i,j}(t))\leq \delta$, and for some $j$, this is an
  equality. Since $d_i$ is the supremum of the projection metrics, $d_i(s, t)= \delta$, which
  implies that $s\neq t$.  By the induction hypothesis, for all $j<i$,
  $\psi_j(\vbar_{ij}(s)) = \psi_j(\vbar_{ij}(t))$.  It follows that $\psi_i(s) = \psi_i(t)$ as desired.
\end{proof}

We want to describe $\nu F = \Vbar_{\omega+\omega}$ in $\Met_\delta$, and since the set
underlying this space is~$\Vbar^{F_0}_{\omega+\omega}$, we shift the discussion from
$\Met_{\delta}$ to $\Set$.  For $\powfin$, we know that its terminal coalgebra may be described
as the set of all finitely branching, strongly extensional trees (\cref{L:Worrell}).  We aim to
describe $\nu F_0$ analgously.  The leading idea is that $\nu F_0$ should be `fairly close' to
$\nu \powfin$.  Indeed, we shall describe $\nu F_0$ as a certain set edge-labelled trees.

  \begin{defn}\label{D:edge}
    \begin{enumerate}
    \item An \emph{(edge-)labelled tree} is a tree whose edges are labelled in
      $\Sigma$. It follows that it is a coalgebra
      $e\colon G\to \pow(\Sigma\times G)$.
    \item If $x\in G$ and $(\sigma,y)\in e(x)$, then we say that $y$ is a
      \emph{$\sigma$-neighbor} of $x$.
    \end{enumerate}
  \end{defn}
  
  The natural transformation $\psi$ from~\eqref{eq:psi} may be recast as a
  natural transformation $\psi\colon \pow(\Sigma\times -)\to \pow$.
  Then $\psi$ induces a functor
  \[
    \erase\colon \Coalg \pow(\Sigma\times -) \to \Coalg \Pow
  \]
  defined on objects by $\erase(G,e) = (G, \psi_G\o e)$ and being the
  identity on coalgebra morphisms.

  We generalize all of the definitions and results concerning trees
  which we saw in Section~\ref{S:app-worrell} to edge-labelled trees.
  For the most part, the generalization is smooth: one replaces
  `neighbors' by `$\sigma$-neighbors' in the appropriate way
  throughout.  For example, an edge-labelled tree $t$ is
  \emph{finitely branching} if for every node $x$ and every
  $\sigma\in \Sigma$, $x$ has only finitely many $\sigma$-neighbors in
  $t$. Since~$\Sigma$ is finite, this amounts to being finitely
  branching in our previous sense.  In contrast, in the notion of
  bisimulation of edge-labelled trees, the labels are more important;
  hence strong extensionality for edge-labelled trees is not the same
  as strong extensionality for trees.  We consider the relation
  between compactly branching edge-labelled trees and trees below.
  Note that the erasure of a strongly extensional edge-labelled tree
  is not necessarily a strongly extensional tree.

Recall that in \cref{N:partial} we introduced the notation $\TT$ for the class of all
  trees.  We have seen the maps $\partial_n\colon\TT\to V_n$
  via their defining equation \eqref{eq-partial-np}. 
   We also have the class $\overline{\TT}$ of
  all edge-labelled trees.  This time, we have
  $\partialbar_n\colon\overline{\TT}\to \Vbar_n$, and the defining
  equation is
\begin{equation}\label{keyeq}
\partialbar_{n+1}(t) = \set{(\sigma,\partialbar_n(t_x)): \mbox{ $x$ is a $\sigma$-child of the root of $t$}}.
\end{equation}
We also have seen that for every (ordinary) tree $t$ there are analagous maps $\rho^{t}_{n}: t\to V_n$ defined by a similar recursion.
For every edge-labelled tree $t$, there are maps $\rhobar^{t}_{n}: t\to \Vbar_n$, defined by
\begin{equation}\label{eq:rho}
  \rhobar^t_{n+1}(x) = \set{(\sigma,\rhobar^t_{n}(y)) : \mbox{ $x$ is a
      $\sigma$-child of $y$}}.
\end{equation}
As before, the connection is that $\partialbar_n(t) = \rhobar^t_n(\root(t))$.

Using the maps $\rhobar^t_n$, we define \emph{limits} as in \cref{dcb}.
An edge-labelled tree is \emph{compactly branching} if for
all nodes $x$ and all $\sigma\in \Sigma$:
 for every sequence of $(y_n)$ of $\sigma$-children of $x$
  there is 
  a subsequence $(w_n)$ of $(y_n)$ and 
  some $\sigma$-child $z$ of $x$ such that $\lim {w_n} = {z}$.

We adapt the functor $\erase$
 to give an operation which turns
 edge-labelled trees into (ordinary) trees.
Concretely,
\[ \erase\colon \overline{\TT}\to \TT\] is given as follows:  For each edge-labelled tree $t$, $\erase(t)$ is the tree with
the same nodes as~$t$, and with $x\to y$ in $\erase(t)$ iff for some $\sigma\in \Sigma$, $x\arrowsigma y$ in $t$.
Pictorially,  $\erase$ `forgets the edge labels.'

We need the following fact:
For all edge-labelled trees $t$ and natural numbers $n$, 
\begin{equation}\label{rhobarn}
\rho^{\erase(t)}_n = \psi_n \o \rhobar^t_n.
\end{equation}
The proof is by induction on $n$. For $n = 0$, this assertion is trivial.
Assume~\eqref{rhobarn} for $n$.  Let~$x$ be a node of $t$.
We have 
\[
\begin{array}{cll}
& \rho^{\erase(t)}_{n+1}(x) \\
= & \set{ \rho^{\erase(t)}_{n}(y) : \mbox{$y$ is a child of $x$ in $\erase(t)$} }  \\
= & 
\set{ \rho^{\erase(t)}_{n}(y) : \mbox{ for some $\sigma\in \Sigma$,
    $y$ is a $\sigma$-child of $x$ in $t$} }  & \text{by def.~of $\erase(t)$}\\
= &\set{\psi_n\rhobar^t_{n}(y) : \mbox{for some $\sigma\in \Sigma$, $y$
    is a $\sigma$-child of $x$ in $t$}} & \text{induction hypothesis} \\
= &  \eps_{V_n} \bigl(\set{(\sigma,\psi_n\rhobar^t_{n}(y)) :
  \mbox{$\sigma\in \Sigma$, and $y$ is a $\sigma$-child of $x$ in $t$}
}\bigr) & \text{by~\eqref{eps}}\\
= &  \eps_{V_n}\o F\psi_n \bigl(\set{(\sigma,\rhobar^t_{n}(y)) :
  \mbox{$\sigma\in \Sigma$, and $y$ is a $\sigma$-child of $x$ in $t$}
}\bigr)  & \text{by def.~of $F$}\\
 = & \psi_{n+1} \o \rhobar^t_{n+1}(x) & \text{by~\eqref{eq:psi} and~\eqref{eq:rho}}\\
\end{array}
\]
This concludes the verification of~\eqref{rhobarn} for all $n < \omega$.
It follows that  $\rho^{\erase(t)}_\omega= \psi_\omega \o \rhobar^t_\omega$.

The $\omega^\opp$-limit $V^{F_0}_{\omega}$ bijectively
  corresponds to the set of all compactly branching strongly
  extensional edge-labelled trees (\Cref{bijection}).  In one
  direction, the correspondence is $t\mapsto \partialbar_{\omega}(t)$,
  and in the other direction it is
  $x\in V^{F_0}_{\omega} \mapsto \temptree_x$
  (\cref{N:tx}\ref{N:tx:3}). It cuts down to a correspondence between
  the terminal coalgebra $\nu F_0$ and the set of finitely branching
  strongly extensional edge-labelled trees.  The proofs are completely
  analogous to those in \Cref{S:app-worrell}.

  We use \Cref{keyeq} to check that for all compactly branching edge-labelled trees~\mbox{$t$,
  $\erase(t)$} is a compactly branching tree.  Fix a node $x$ of $\erase(t)$.  So $x$ is a node
  of~$t$.  Let~$(y_n)$ be a sequence of children of $x$ in $\erase(t)$.  For each~$y_n$, there
  is some $\sigma_n\in\Sigma$ such that~$y_n$ is a $\sigma_n$-child of~$x$ in~$t$.  Since
  $\Sigma$ is finite, we can find a single fixed $\sigma$ and a subsequence~$(z_n)$ of~$(y_n)$
  such that each $z_n$ is a $\sigma$-child of $x$.  Since~$t$ is compactly branching, there is
  a subsequence~$(w_n)$ of~$(z_n)$ and some $\sigma$-child~$w^*$ of $x$ such that
  $\lim_n {w_n} = {w^*}$.  Now~$w^*$ is a child of $x$ in $\erase(t)$, and we claim that in
  $\erase(t)$, $\lim_n {w_n} = {w^*}$.  To see this, fix $n$.  For all sufficiently large $p$,
  $ \rhobar^t_n(w_p) = \rhobar^t_n(w^*)$.  Thus, we have
  \[
    \rho^{\erase(t)}_n(w_p) =  \psi_n(\rhobar^t_n(w_p)) =
    \psi_n(\rhobar^t_n(w^*)) = \rho^{\erase(t)}_n(w^*).
  \]
  This concludes the verification.

  
\begin{lemma}\label{L:mainequiv}
  Let $t$ and $u$ be compactly branching edge-labelled trees, and let $x =
  \partialbar_{\omega}(t)$ and $y = \partialbar_{\omega}(u)$ be the
  corresponding elements of $\Vbar_{\omega}$. The following are equivalent:
  \begin{enumerate}
  \item  $\psi_{\omega}(x) = \psi_{\omega}(y)$.
    \label{psipart}
  \item $\erase(t)$ and $\erase(u)$ are Barr-equivalent trees.
    \label{barrpart}
  \item $\erase(t)$ and $\erase(u)$ are bisimilar trees.
    \label{bisimpart}
  \end{enumerate}
\end{lemma}
\begin{proof}
  \ref{psipart} $\Leftrightarrow$ \ref{barrpart}.  Let us write $x_0$
  for the root of $t$; this is the same as the root of $\erase(t)$.
  Now we have
  \[
    \partial_{\omega}(\erase(t))
    =
    \rho^{\erase(t)}_\omega(x_0)
    =
    \psi_\omega \o \rhobar^t_\omega(x_0)
    = 
    \psi_{\omega}(\partialbar_{\omega}(t) )
    = 
    \psi_{\omega}(x).
  \]
  Of course, we have a similar equation for $y$. Hence, \ref{psipart} is
  equivalent to the assertion that
  $ \partial_{\omega}(\erase(t)) = \partial_{\omega}(\erase(u))$.
  That is, \ref{barrpart} holds.

  The equivalence \ref{barrpart} $\Leftrightarrow$ \ref{bisimpart}
  is due to the version of \Cref{corccb} for edge-labelled~trees.
\end{proof}
 
At long last, we can state our conclusion.
\begin{theorem}\label{T:mainmetric}
  \begin{enumerate}
  \item The limit $\Vbar_\omega$ in the terminal-coalgebra chain for
    $F$ is the set of all compactly branching, strongly
    extensional edge-labelled trees equipped with the metric in~\eqref{eq:metrics}.
  \item The terminal coalgebra for $F$ is the coalgebra of all
    strongly extensional, finitely branching edge-labelled trees. The coalgebra
    structure is the inverse of tree tupling and the metric~is~\eqref{eq:metrics}.
  \end{enumerate}
\end{theorem}
\begin{proof}
  \Cref{L:mainequiv} expresses the metric on $\Vbar_{\omega}$. Indeed,
  using~\eqref{eq:metrics} and \ref{psipart} $\Leftrightarrow$
  \ref{bisimpart}, we have that for all strongly extensional compactly
  branching edge-labelled trees $t$ and $u$,
  \begin{equation}
    \label{eq:ultimate}
    d_{\omega}(t,u) = \left\{
      \begin{array}{ll}
        0 & \text{if $t=u$}\\
        \delta & \text{if  $t\neq u$ but  $\erase(t)$ and $\erase(u)$ are bisimilar trees} \\
        \infty & \text{if $\erase(t)$ and $\erase(u)$ are not bisimilar trees.}
      \end{array}
    \right.
  \end{equation}
  For $\Vbar_{\omega+\omega}$ we use that the inclusion
  $\Vbar_{\omega+\omega}\hookrightarrow \Vbar_\omega $ is an
  isometry. Hence, the same formula works: for strongly extensional
  finitely branching edge-labelled trees $t$ and $u$,
  $d_{\omega+\omega}(s,t)$ is given by~\eqref{eq:ultimate}.
\end{proof}

\subsection{Variation: the Closed Subset Functor on $\Met$}

We have been concerned with the Hausdorff functor taking a metric space $M$ to the space of its
nonempty compact subsets.  For two variations, let us consider the functor
$\powcl\colon \MS\to\MS$ taking $M$ to the set of its \emph{closed} subsets, and its subfunctor
$\powcl'\colon \MS \to \MS$ taking $M$ to the set of its nonempty closed subsets.  Both
$\powcl M$ and $\powcl' M$ are given the Hausdorff metric $\bar{d}$
(\cref{E:Hausdorff}\ref{E:Hausdorff:1}. For a non-expanding map $f\colon X \to Y$, the
non-expanding map $\powcl f\colon \powcl X \to \powcl Y$ sends a closed subset $S$ of $X$ to
the closure of $f[S]$.  This makes $\powcl$ and $\powcl'$ functors.  Due to the empty set,
$\powcl$ is a closer analog of $\H$ than $\powcl'$. It is natural to ask whether the results of
\Cref{S:Hausdorff} hold for these functors $\powcl$ and $\powcl'$.  As proved by van
Breugel~\cite[Prop.~8]{vanBreugel}, the functor $\powcl$ has no terminal coalgebra.  Turning to
$\powcl'$, this functor has an initial algebra
given by the empty metric space and a terminal coalgebra carried
by a singleton metric space. But $\powcl'$ has no
other fixed points (see van Breugel et al.~\cite[Cor.~5]{vBCW04}), where an object $X$ is a
\emph{fixed point} of an endofunctor $F$ if $FX \cong X$. We provide below a different, shorter
proof.
\begin{rem}\label{R:ubd-below-i}
\begin{enumerate}
\item A subset $X$ of a metric space is \emph{$\delta$-discrete} if
  whenever $x\neq y$ are elements of $X$, $d(x,y) \geq \delta$.  Every
  subset of a $\delta$-discrete set is $\delta$-discrete, and every
  such set is closed.  Moreover, if~$C$ and $D$ are different subsets
  of a $\delta$-discrete set, then $\bar d(C,D)\geq \delta$.
\item
  A subset $S$ of an ordinal $i$ is \emph{cofinal} if for all $j<i$
  there is some $k\in S$ with $j \leq k < i$. If $S$ is not cofinal,
  then its complement $i\setminus S$ must be so. (But it is possible
  that both~$S$ and~$i\setminus S$ are cofinal in $i$.)
\end{enumerate}
\end{rem}

\removeThmBraces
\begin{theorem}\label{T:Pcl}
  There is no isometric embedding $\powcl' M \to M$ when $|M| \geq
  2$.
  \end{theorem}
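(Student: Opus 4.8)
The plan is to turn a hypothetical isometric embedding $e\colon\powcl' M\to M$ into an injection of the class of all ordinals into the \emph{set} $\powcl' M$, which is absurd. Since $|M|\ge 2$, fix distinct points $a,b\in M$ and set $\delta=d(a,b)$, so $0<\delta\le\infty$. I will use two facts from \autoref{R:ubd-below-i}: every $\delta$-discrete set is closed, and any two distinct subsets of a $\delta$-discrete set are at Hausdorff distance $\ge\delta$. I will also use that for nonempty $S\subseteq M$ and $p\in M$ one has $\bar d(S,\{p\})=\max\!\big(\sup_{x\in S}d(x,p),\,d(p,S)\big)$, directly from the definition of $\bar d$.

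Assume such an $e$ exists. By transfinite recursion, define for every ordinal $i$
\[
  C_i \;=\; \{a,b\}\cup\{\,e(C_j) : j<i\,\},
\]
verifying simultaneously, by induction on $i$, the two claims: (a) $C_i$ is $\delta$-discrete — hence closed and (being nonempty) a member of $\powcl' M$, so that $z_i:=e(C_i)$ is defined; and (b) $d(z_i,w)\ge\delta$ for every $w\in C_i$. Claim (a) at stage $i$ follows from (b) at all earlier stages together with $d(a,b)=\delta$: the points $a$, $b$, and the $z_j$ for $j<i$ are pairwise at distance $\ge\delta$. For (b): the set $C_i$ is $\delta$-discrete and has at least two elements (it contains $a$ and $b$), so given $w\in C_i$ we may pick $x\in C_i$ with $x\ne w$, and then $d(x,w)\ge\delta$; since $e$ is isometric,
\[
  d(z_i,w) \;=\; d\big(e(C_i),\,e(\{w\})\big) \;=\; \bar d\big(C_i,\{w\}\big) \;\ge\; \sup_{x\in C_i}d(x,w) \;\ge\; \delta.
\]
In particular the base stage needs no special treatment: $C_0=\{a,b\}$ is $\delta$-discrete and $\bar d(\{a,b\},\{a\})=\bar d(\{a,b\},\{b\})=\delta$.

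From (b) we get $z_i\notin C_i$ (because $\delta>0$), whereas $z_i\in C_j$ whenever $i<j$; hence $i\mapsto C_i$ is injective. This exhibits an injection of the proper class of all ordinals into the set $\powcl' M$, a contradiction. (Equivalently, restricting to the ordinals below some $\lambda$ with $|\lambda|>|\powcl' M|$ already gives an impossible injection $\lambda\hookrightarrow\powcl' M$.)

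I expect the only delicate point to be the inductive step (b). Two things make it work and should be highlighted: first, every $C_i$ carries the fixed two-element $\delta$-discrete ``seed'' $\{a,b\}$, which guarantees $|C_i|\ge 2$ so that the witness $x\ne w$ exists; second, one genuinely needs $e$ to be \emph{isometric} — injectivity or continuity alone would not let us convert the Hausdorff distance $\bar d(C_i,\{w\})\ge\delta$ in $\powcl' M$ into the distance $d(z_i,w)\ge\delta$ in $M$. Everything else is routine: the $C_i$ form an increasing chain, so at limit stages no union of incomparable $\delta$-discrete sets is ever formed, and closedness of each $C_i$ is immediate from \autoref{R:ubd-below-i}.
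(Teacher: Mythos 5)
Your construction has a genuine gap at the inductive step (b), precisely at the point you flag as delicate. For $w\in\set{a,b}$ the chain $d(z_i,w)=d\big(e(C_i),e(\set{w})\big)=\bar d\big(C_i,\set{w}\big)$ is unjustified: the first equality silently assumes $e(\set{w})=w$, i.e.\ that the embedding fixes singletons, which is not part of the hypothesis. An isometric embedding $e\colon \powcl' M\to M$ only controls distances between points that are images of sets with known preimages; it says nothing about the distance from $e(C_i)$ to the raw seed points $a$ and $b$. Nothing prevents $e(C_i)$ from lying within $\delta$ of $a$, or even from being equal to $a$; in that case $C_{i+1}=C_i$, the sequence stalls ($z_{i+1}=z_i$), claim (a) fails at the next stage, the map $i\mapsto C_i$ is not injective, and no contradiction is reached. (For $w=z_j$ with $j<i$ your estimate is salvageable: argue $d(z_i,z_j)=\bar d(C_i,C_j)\geq\delta$, since $z_j\in C_i\setminus C_j$ makes $C_j$ and $C_i$ distinct subsets of the $\delta$-discrete set $C_i$. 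But the seed points $a,b$ are essential to your scheme --- they are what guarantees $\lvert C_i\rvert\geq 2$ --- and for them the estimate cannot be repaired.)

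This difficulty --- controlling the distance from newly produced points to the two fixed reference points --- is exactly what the paper's proof is engineered around. There, $a$ and $b$ enter only through the set $\set{a,b}$, whose image $x_0=\iota(\set{a,b})$ is the starting point; every later point is $\iota(X_i\cap S_i)$, where $S_i$ consists of previously constructed points and $X_i\in\set{A,B}$ is chosen by a cofinality condition ($A$ the closed $\delta$-ball around $a$ with $\delta=d(a,b)/2$, and $B$ its complement). All distance estimates are then between $\iota$-images of \emph{sets}, so the isometry always applies, and the $A$/$B$ bookkeeping guarantees that the relevant sets are distinct; for the comparison with $\set{a,b}$ one uses that $b$ has distance at least $\delta$ from every element of $A$. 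If you want to keep your streamlined recursion you would have to remove the raw points $a,b$ from the sets $C_i$, and then some device like the paper's is needed to keep the sets fed into $e$ pairwise distinct; note also that an essential use of $\lvert M\rvert\geq 2$ is unavoidable, since a singleton space is a fixed point of $\powcl'$.
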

\resetCurThmBraces
\begin{proof}%
  Suppose towards a contradiction that $\iota\colon \powcl M \to M$
  were an isometric embedding where $|M| \geq 2$.  If all distances in
  $M$ are $0$ or $\infty$, then $\powcl' M$ is the nonempty power-set
  of~$M$.  In this case, our result follows from the fact that for
  $|M|\geq 2$, $M$ has more nonempty subsets than
  elements.
  Thus, we fix distinct points $a, b\in M$ of finite
  distance, and put $\delta = d(a,b)/2$.  Let
  $A = \set{x\in M : d(x,a) \leq \delta}$, and let $B = M\setminus A$.
  (In case $d(a,b) = \infty$, we need to adjust this by setting
  $\delta = \infty$, and $B$ to be the points whose distance to $a$ is
  finite.  But we shall not present the argument in this case.)

  We proceed to define an ordinal-indexed sequence of elements
  $x_i\in M$. We also prove that each set $S_i = \set{ x_j : j < i}$
  is $\delta$-discrete, and we put
  \[
    X_i =
    \begin{cases}
      A & \text{if $\set{j < i : x_j \in A}$ is cofinal in $i$} \\
      B & \text{else}.
    \end{cases}
  \]   
  For $i =0$, put $x_0 = \iota(\set{a,b})$.  Given an ordinal $i>0$,
  we put 
  \[
    x_i = \iota(X_i \cap S_i).
  \]
  Being nonempty (since $i>0$) and
  $\delta$-discrete, $X_i \cap S_i$ lies in~$\powcl' M$.

  The remainder of our proof consists of showing that for every ordinal $i$:
  \[
    d(x_j, x_k) \geq \delta
    \qquad\text{for $0 \leq j < k \leq i$}.
  \]
  We proceed by transfinite induction. Assuming that our claim holds for every $k < i$, we then
  prove it for $i$. The base case $i = 0$ is trivial. For $i > 0$, note first that it follows
  from the induction hypothesis that $S_i$ is $\delta$-discrete.

  Hence, we only need to verify that $d(x_j, x_i) \geq \delta$ when
  $0\leq j < i$.  We argue the case $X_i = A$; when $X_i = B$, the
  argument is similar, mutatis mutandis.  For $j= 0$, recall that
  $x_0 = \iota(\set{a,b})$ and $x_i = \iota(A\cap S_i)$.  Since $b$ has
  distance at least $\delta$ from every element of~$A$, we obtain
  $\bar d(\set{a,b}, A\cap S_i)\geq \delta$. As $\iota$ is an
  isometric embedding, this distance is also $d(x_0,x_i)$.  Now let $j >0$.
  Since we have $X_i = A$, let $k$ be such that $j \leq k < i$ and
  $x_k\in A$ using cofinality.
  Now either $x_j = \iota(A \cap S_j)$ or else
  $x_j = \iota(B \cap S_j)$.

  In the first case, note that $x_k\in A\cap S_i$ since $k < i$, and
  $x_k\notin S_j$ by the definition of $S_j$ since $k \geq j$.  So
  $A\cap S_j$ and $A\cap S_i$ are different nonempty subsets of the
  $\delta$-discrete set $S_i$. Hence, the distance between these sets
  is at least $\delta$, and therefore we have
  $d(x_j, x_i) \geq \delta$.
 
  In the second case, $B\cap S_j$ is a nonempty subset of $B$, and
  thus again it not equal to~$A\cap S_i$.  So again we see that
  $d(x_j, x_i) = \bar d( B\cap S_j, A\cap S_i) \geq \delta$.

  We now obtain the desired contradiction since $(x_i)$ is an
  ordinal-indexed sequence of pairwise distinct elements of $M$.
\end{proof}

\begin{corollary}\label{C:Pcl}
  \begin{enumerate}
  \item The functor $\powcl'\colon \MS\to \MS$ has no fixed points
    except the empty set and the singletons.
  \item\label{C:Pcl:2} The functor $\powcl\colon \MS \to \MS$ admits
    no isometric embedding $\powcl M \to M$, whence has no fixed point.
  \end{enumerate}
\end{corollary}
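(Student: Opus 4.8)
The plan is to derive both parts of Corollary~\ref{C:Pcl} directly from Theorem~\ref{T:Pcl}; the only extra ingredients are two routine observations about $\Met$ and the functors $\powcl,\powcl'$, plus an explicit check of the degenerate cases $|M|\le 1$ that fall outside the hypothesis of Theorem~\ref{T:Pcl}. For Part~1, I would first recall that an isomorphism in $\Met$ is necessarily an isometry, hence in particular an isometric embedding: if $f$ and $f^{-1}$ are both non-expanding, then $d(x,y)=d(f^{-1}fx,f^{-1}fy)\le d(fx,fy)\le d(x,y)$. Consequently, if $X$ is a fixed point of $\powcl'$, i.e.\ $\powcl'X\cong X$, there is an isometric embedding $\powcl'X\to X$, and Theorem~\ref{T:Pcl} forces $|X|<2$, so $X$ is empty or a singleton. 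Conversely, these two spaces really are fixed points: $\powcl'\emptyset$ is empty because $\emptyset$ has no nonempty subset, and for a one-point space $\{*\}$ the unique nonempty closed subset is $\{*\}$ itself, so $\powcl'\{*\}$ is again a one-point space.

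For Part~2 I would first note that the inclusion $\powcl'M\hookrightarrow\powcl M$ is an isometric embedding, since a nonempty closed set is in particular a closed set and the Hausdorff distance is given by the same formula in both spaces. Hence any isometric embedding $\iota\colon\powcl M\to M$ composes with this inclusion to yield an isometric embedding $\powcl'M\to M$; when $|M|\ge 2$ this contradicts Theorem~\ref{T:Pcl}. It then remains to rule out $|M|\le 1$ by hand: if $M=\emptyset$ then $\powcl M=\{\emptyset\}$ is a one-point space and there is no function at all from it into $\emptyset$; if $M$ is a singleton then $\powcl M=\{\emptyset,M\}$ has two points and cannot map injectively into a one-point space. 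Thus $\powcl$ admits no isometric embedding $\powcl M\to M$ for any $M$, and in particular an isomorphism $\powcl X\cong X$ would be such an embedding, so $\powcl$ has no fixed point.

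I do not expect any serious obstacle here: the substantive argument is already contained in Theorem~\ref{T:Pcl}. The only points requiring (minor) care are the characterisation of $\Met$-isomorphisms as isometries, the on-the-nose agreement of the Hausdorff metric on $\powcl'M$ with the one it inherits as a subspace of $\powcl M$, and remembering to dispatch the cases $|M|\le 1$ separately in Part~2, since Theorem~\ref{T:Pcl} is stated only for $|M|\ge 2$.
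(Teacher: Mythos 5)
Your proposal is correct and follows essentially the same route as the paper: part~1 is reduced directly to \autoref{T:Pcl} (using that a $\Met$-isomorphism is an isometry), and part~2 composes a hypothetical isometric embedding $\powcl M \to M$ with the isometric inclusion $\powcl' M \to \powcl M$ and rules out $|M|\le 1$ by counting, exactly as in the paper's argument. The only additions are harmless extra details (checking that $\emptyset$ and singletons really are fixed points of $\powcl'$).
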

\begin{proof}
  The first item is immediate from \Cref{T:Pcl}. For the second
  one, observe that the inclusion map $e\colon \powcl' M \to \powcl M$
  is an isometric embedding. Assuming that there were an isometric
  embedding $\iota\colon \powcl M \to M$, we see that $M$ cannot be
  empty (since $\powcl M$ is nonempty) or a singleton (since then
  $|\powcl M| = 2$). Hence $|M| \geq 2$. Moreover, we obtain an
  isometric embedding \mbox{$\iota \cdot e\colon \powcl' M \to M$},
  contradicting \Cref{T:Pcl}.%
\end{proof}

\section{Summary}

This paper has two parts, both giving sufficient conditions for the terminal coalgebra for an
endofunctor to be obtained in $\omega$ or $\omega+\omega$ steps of the well-known iterative
construction.  The first part generalizes Worrell's theorem that states that for finitary set
functors, the terminal-coalgebras chain converges in $\omega+\omega$ steps.  That
generalization concerns DCC-categories; examples include sets, vector spaces, posets, and many
others. For finitary endofunctors preserving nonempty binary intersections, the
terminal-coalgebra chain is proved to converges in~$\omega+\omega$ steps.

In the second part, we have worked with variations of the Kripke polynomial functors. These
need not be finitary.  We have proved again that the terminal-coalgebras chain of these
functors converges in $\omega+\omega$ steps. More precisely, we have investigated versions of
the finite power-set functor on the categories $\Haus$ and $\Met$.  Our main results are that
the Vietoris functor $\V$, and indeed all Vietoris polynomial functors, have terminal
coalgebras obtained in $\omega$ steps of the terminal-coalgebra chain.  The same holds for the
Hausdorff polynomial functors on $\Met$, but the iteration takes $\omega+\omega$ steps and so
the underlying reasons are different.

Our work on the Kripke and Hausdorff polynomial functors highlights a technique which we feel
could be of wider interest.  To prove that a terminal coalgebra exists in a situation where the
limit of the $\omega^\opp$-chain~\eqref{eq:op-chain} is not preserved by the functor, one could
try to find preservation properties which imply that the limit of the $\omega^\opp$-chain
$(V_{\omega+n})_n$ was preserved.  In $\Set$, we used finitarity and preservation of
monomorphisms and intersections, and in $\Met$ we have used preservation of intersections, isometric
embeddings, and isometric cones.

We have also seen that for the functor $\powcl$ on $\Met$, there is no
fixed point and hence no terminal coalgebra.  We leave open the
question of whether every Vietoris polynomial functor on $\Top$ has a
terminal coalgebra.


%
%

%
%

\paragraph{Acknowledgements.}
We are grateful to Pedro Nora for discussions on the proof of
Prop.~\ref{P:limit-Haus}, and to Alex Kruckman for discussion of an early version of this paper.

\bibliographystyle{plainurl}
\bibliography{refs}

\begin{thebibliography}{10}

\bibitem{abramsky}
Samson Abramsky.
\newblock A {C}ook's {T}our of the finitary non-well-founded sets.
\newblock In {\em We Will Show Them: Essays in honour of Dov Gabbay}, volume~1,
  pages 1--18. College Publications, 2005.

\bibitem{AdamekEA23}
Jir{\'{\i}} Ad{\'{a}}mek, Stefan Milius, and Lawrence~S. Moss.
\newblock On {K}ripke, {V}ietoris and {H}ausdorff polynomial functors.
\newblock In Paolo Baldan and Valeria de~Paiva, editors, {\em 10th Conference
  on Algebra and Coalgebra in Computer Science ({CALCO})}, volume 270 of {\em
  LIPIcs}, pages 21:1--21:20. Schloss Dagstuhl, 2023.
\newblock URL: \url{https://doi.org/10.4230/LIPIcs.CALCO.2023.21}, \href
  {https://doi.org/10.4230/LIPICS.CALCO.2023.21}
  {\path{doi:10.4230/LIPICS.CALCO.2023.21}}.

\bibitem{AdamekEA25}
Jir{\'{\i}} Ad{\'{a}}mek, Stefan Milius, and Lawrence~S. Moss.
\newblock Terminal coalgebra for finitary functors.
\newblock In Corina C{\^i}rstea and Alexander Knapp, editors, {\em 10th
  Conference on Algebra and Coalgebra in Computer Science ({CALCO})}, LIPIcs,
  pages 3:1--3:16. Schloss Dagstuhl, 2025.
\newblock to appear.

\bibitem{A74}
Ji\v{r}\'i Ad{\'{a}}mek.
\newblock Free algebras and automata realizations in the language of
  categories.
\newblock {\em Comment.~Math.~Univ.~Carolin.}, 15:589--602, 1974.

\bibitem{ahrt23}
Ji\v{r}\'i Ad\'amek, Miroslav Hu\v{s}ek, Ji\v{r}\'i Rosick\'y, and Walter
  Tholen.
\newblock Smallness in topology.
\newblock {\em Quaest.~Math.}, 46:13--39, 2023.

\bibitem{ak95}
Ji\v{r}\'i Ad\'amek and Vaclav Koubek.
\newblock On the greatest fixed point of a set functor.
\newblock {\em Theoret.~Comput. Sci}, 150:57--75, 1995.

\bibitem{almms}
Ji\v{r}\'i Ad\'amek, Paul~B. Levy, Stefan Milius, Lawrence~S. Moss, and Lurdes
  Sousa.
\newblock On final coalgebras of power-set functors and saturated trees.
\newblock {\em Appl.~Categ.~Structures}, 23(4):609--641, August 2015.
\newblock Available online; \url{http://dx.doi.org/10.1007/s10485-014-9372-9}.

\bibitem{amm21}
Ji\v{r}\'i Ad\'amek, Stefan Milius, and Lawrence~S. Moss.
\newblock Initial algebras without iteration.
\newblock In Fabio Gaducci and Alexandra Silva, editors, {\em 9th Conference on
  Algebra and Coalgebra in Computer Science (CALCO)}, volume 211 of {\em
  LIPIcs}, pages 5:1--5:20. Schloss Dagstuhl, 2021.

\bibitem{AMM25}
Ji\v{r}\'i Ad\'amek, Stefan Milius, and Lawrence~S. Moss.
\newblock {\em Initial Algebras and Terminal Coalgebras: the Theory of Fixed
  Points of Functors}, volume~62 of {\em Cambridge Tracts in Theoretical
  Computer Science}.
\newblock Cambridge University Press, 2025.

\bibitem{AdamekEA15}
Ji\v{r}\'i Ad\'amek, Stefan Milius, Lawrence~S. Moss, and Henning Urbat.
\newblock On finitary functors and their presentations.
\newblock {\em J.~Comput.~System Sci.}, 81(5):813--833, 2015.

\bibitem{amsw19_1}
Ji\v{r}\'i Ad\'amek, Stefan Milius, Lurdes Sousa, and Thorsten Wi\ss\/mann.
\newblock On finitary functors.
\newblock {\em Theory Appl.~Categ.}, 34:1134--1164, 2019.

\bibitem{are}
Ji\v{r}\'i Ad\'{a}mek and Jan Reitermann.
\newblock Banach's fixed-point theorem as a base for data-type equations.
\newblock {\em Appl.~Categ.~Structures}, 2:77--90, 1994.

\bibitem{ar}
Ji\v{r}\'i Ad\'{a}mek and Ji\v{r}\'i Rosick\'y.
\newblock {\em Locally Presentable and Accessible Categories}.
\newblock Cambridge University Press, 1994.

\bibitem{AdamekR22}
Ji\v{r}\'i Ad\'amek and Ji\v{r}\'i Rosick\'y.
\newblock Approximate injectivity and smallness in metric-enriched categories.
\newblock {\em J.~Pure Appl.~Algebra}, 226(6), 2022.
\newblock {A}rticle 106974.

\bibitem{AdamekSousa24}
Ji\v{r}\'{\i} Ad\'{a}mek and Lurdes Sousa.
\newblock A finitary adjoint functor theorem.
\newblock {\em Theory Appl.~Categories}, 41:1919--1936, 2024.

\bibitem{ATbook}
Ji\v{r}\'i Ad\'{a}mek and V\v{e}ra Trnkov\'a.
\newblock {\em Automata and Algebras in Categories}, volume~37 of {\em
  Mathematics and its Applications}.
\newblock Kluwer Academic Publishers, 1990.

\bibitem{Babai86}
L\'aszl\'o Babai.
\newblock On the length of subgroup chains in the symmetric group.
\newblock {\em Comm.~Alg.}, 14(9):1729--1736, 1986.
\newblock \href {https://doi.org/10.1080/00927878608823393}
  {\path{doi:10.1080/00927878608823393}}.

\bibitem{Barnsley93}
Michael~F. Barnsley.
\newblock {\em Fractals Everywhere}.
\newblock Morgan Kaufmann, 2nd edition, 1993.

\bibitem{barr0}
Michael Barr.
\newblock Coequalizers and free triples.
\newblock {\em Math.~Z.}, 116:307--322, 1970.

\bibitem{barr}
Michael Barr.
\newblock Terminal coalgebras in well-founded set theory.
\newblock {\em Theoret.~Comput.~Sci.}, 114(2):299--315, 1993.

\bibitem{BojanczykEA14}
Mikolaj Bojanczyk, Bartek Klin, and Slawomir Lasota.
\newblock Automata theory in nominal sets.
\newblock {\em Log.~Methods Comput.~Sci.}, 10, 2014.
\newblock \href {https://doi.org/10.2168/LMCS-10(3:4)2014}
  {\path{doi:10.2168/LMCS-10(3:4)2014}}.

\bibitem{ES:1952}
Samuel Eilenberg and Norman Steenrod.
\newblock {\em Foundations of Algebraic Topology}.
\newblock Princeton University Press, 1952.

\bibitem{fpt}
Marcelo Fiore, Gordon~D. Plotkin, and Daniele Turi.
\newblock Abstract syntax and variable binding.
\newblock In Guiseppe Longo, editor, {\em Proc.~Logic in Computer Science
  1999}, pages 193--202. IEEE Computer Society, 1999.

\bibitem{GabbayPitts02}
Murdoch~J. Gabbay and Andrew~M. Pitts.
\newblock A new approach to abstract syntax with variable binding.
\newblock {\em Form.~Asp.~Comput.}, 13:341--363, 2002.

\bibitem{Hausdorff14}
Felix Hausdorff.
\newblock {\em Grundz\"{u}ge der Mengenlehre}.
\newblock Veit \& Comp., Leipzig, 1914.

\bibitem{HNN19}
Dirk Hofmann, Renato Neves, and Pedro Nora.
\newblock Limits in categories of {V}ietoris coalgebras.
\newblock {\em Math. Structures Comput. Sci.}, 29(4):552--587, 2019.

\bibitem{jacobs}
Bart Jacobs.
\newblock {\em Introduction to Coalgebra. Towards Mathematics of States and
  Observation}.
\newblock Cambridge University Press, 2016.

\bibitem{Johnstone86}
Peter Johnstone.
\newblock {\em Stone Spaces}.
\newblock Cambridge Univ. Press, 1986.

\bibitem{kelly}
G.~Max Kelly.
\newblock A unified treatment of transfinite constructions for free algebras,
  free monoids, colimits, associated sheaves, and so on.
\newblock {\em Bull.~Austral.~Math.~Soc.}, 22:1--83, 1980.

\bibitem{kkv}
Clemens Kupke, Alexander Kurz, and Yde Venema.
\newblock Stone coalgebras.
\newblock {\em Theoret. Comput. Sci.}, 327(1--2):109--132, 2004.

\bibitem{makkaipare}
Michael Makkai and Robert Par\'e.
\newblock {\em Accessible Categories: the Foundation of Categorical Model
  Theory}, volume 104 of {\em Contemporary Math.}
\newblock Amer.~Math.~Soc., Providence, RI, 1989.

\bibitem{MardareEA16}
Radu Mardare, Prakash Panangaden, and Gordon~D. Plotkin.
\newblock Quantitative algebraic reasoning.
\newblock In Martin Grohe, Eric Koskinen, and Natarajan Shankar, editors, {\em
  Proceedings of the 31st Annual {ACM/IEEE} Symposium on Logic in Computer
  Science, {LICS} '16, New York, NY, USA, July 5-8, 2016}, pages 700--709.
  {ACM}, 2016.
\newblock \href {https://doi.org/10.1145/2933575.2934518}
  {\path{doi:10.1145/2933575.2934518}}.

\bibitem{Michael51}
Ernest Michael.
\newblock Topologies on spaces of subsets.
\newblock {\em Trans. Amer. Math. Soc.}, 71:152--182, 1951.

\bibitem{mpw20}
Stefan Milius, Dirk Pattinson, and Thorsten Wi{\ss}mann.
\newblock A new foundation for finitary corecursion and iterative algebras.
\newblock {\em Inform.~and Comput.}, 271, 2020.
\newblock Article 104456.

\bibitem{petrisanphd}
Daniela Petri\c{s}an.
\newblock {\em Investigations into Algebra and Topology over Nominal Sets}.
\newblock PhD thesis, University of Leicester, 2011.

\bibitem{Pitts13}
Andrew~M. Pitts.
\newblock {\em Nominal Sets: Names and Symmetry in Computer Science}, volume~57
  of {\em Cambridge Tracts in Theoretical Computer Science}.
\newblock Cambridge University Press, 2013.

\bibitem{Pompeiu05}
Dimitrie Pompeiu.
\newblock Sur la continuit\'e des fonctions de variables complexes.
\newblock {\em Annales de la Facult\'{e} des Sciences de la Universit\'{e} de
  Toulouse pour les Sciences Math\'{e}matiques et les Sciences Physiques},
  2i\`eme S\'{e}rie, 7(3):265--315, 1905.

\bibitem{RibesZ10}
Luis Ribes and Pavel Zalesskii.
\newblock {\em Profinite Groups}, volume~40 of {\em Ergebnisse der Mathematik
  und ihrer Grenzgebiete}.
\newblock Springer, 2nd edition, 2010.

\bibitem{Roessiger00}
Martin Rößiger.
\newblock Coalgebras and modal logic.
\newblock In Horst Reichel, editor, {\em Proc.~Coalgebraic Methods in Computer
  Science (CMCS 2000)}, volume~33 of {\em Electron.~Notes~Theor.~Comput.~Sci.},
  pages 294--315. Elsevier, 2000.

\bibitem{skmw17}
Lutz Schr\"oder, Dexter Kozen, Stefan Milius, and Thorsten Wi{\ss}mann.
\newblock Nominal automata with name binding.
\newblock In Javier Esparza and Andrzej Murawski, editors, {\em
  Proc.~Foundations of Software Science and Computation Structures (FoSSaCS)},
  volume 10203 of {\em Lecture Notes Comput.~Sci. (ARCoSS)}, pages 124--142.
  Springer, 2017.

\bibitem{trnkova69}
V\v{e}ra Trnkov\'a.
\newblock Some properties of set functors.
\newblock {\em Comment.~Math.~Univ.~Carolin.}, 10:323--352, 1969.

\bibitem{UrbatS20}
Henning Urbat and Lutz Schr\"{o}der.
\newblock Automata learning: An algebraic approach.
\newblock In {\em Proc.~Thirty-Fifth Annual ACM/IEEE Symposium on Logic in
  Computer Science (LICS)}, pages 900--914. IEEE Computer Society, 2020.

\bibitem{vanBreugel}
Franck van Breugel.
\newblock De {B}akker-{Z}ucker processes revisited.
\newblock {\em Inform. and Comput.}, 188(1):68--76, 2004.

\bibitem{vBCW04}
Franck van Breugel, Camillo Costantini, and Stephen Watson.
\newblock Isometries between a metric space and its hyperspace, function space,
  and space of measures.
\newblock {\em Topology Appl.}, 137(1-3):51--57, 2004.

\bibitem{Vietoris22}
Leopold Vietoris.
\newblock Bereiche zweiter {O}rdnung.
\newblock {\em Monatsh. Math. Phys.}, 32(1):258--280, 1922.

\bibitem{worrell:05}
James Worrell.
\newblock On the final sequence of a finitary set functor.
\newblock {\em Theoret.~Comput.~Sci.}, 338:184--199, 2005.

\bibitem{zenor70}
Phillip Zenor.
\newblock On the completeness of the space of compact subsets.
\newblock {\em Proc.~Amer.~Math.~Soc.}, 26(1):190--192, 1970.

\end{thebibliography}

%
%



\end{document}
